\documentclass[11pt, a4paper]{article}

\usepackage[margin=1in]{geometry}
\usepackage{amsmath, amssymb, amsthm}
\usepackage{hyperref}
\usepackage{graphicx}
\usepackage{bbm}

\hypersetup{
    colorlinks=true,
    linkcolor=blue,
    filecolor=magenta,      
    urlcolor=cyan,
}

\newtheorem{theorem}{Theorem}[section]
\newtheorem{proposition}[theorem]{Proposition}

\newtheorem{corollary}[theorem]{Corollary}
\theoremstyle{definition}

\theoremstyle{remark}
\newtheorem{remark}[theorem]{Remark}

\numberwithin{equation}{section}
\numberwithin{theorem}{section}

\title{\bf Edge density expansions for the classical Gaussian and Laguerre ensembles}
\author{}
\date{}

\author{Peter J. Forrester${}^1$,
Anas A. Rahman${}^2$, and 
Bo-Jian Shen${}^1$}
\date{}

\begin{document}

\maketitle

${}^1$School of Mathematics and Statistics,  The University of Melbourne,
Victoria 3010, Australia. \: \: Email: {\tt pjforr@unimelb.edu.au};  {\tt bojian.shen@unimelb.edu.au}\\

${}^2$
Department of Mathematics,
The University of Hong Kong, Hong Kong. \\
Email: {\tt aarahman@hku.hk}

\bigskip

\

\begin{abstract}
Recent work of Bornemann has uncovered hitherto hidden integrable structures relating to the asymptotic expansion of quantities at the soft edge of Gaussian and Laguerre random matrix ensembles. These quantities are spacing distributions and the eigenvalue density, and the findings cover the cases of the three symmetry classes orthogonal, unitary and symplectic. In this work we give a different viewpoint on these results in the case of the soft edge scaled density, and in the Laguerre case we initiate an analogous study at the hard edge.
Our tool is the scalar differential equation satisfied by the latter, known from earlier work. Unlike integral representations, these differential equations in soft edge scaling variables isolate the function of $N$ which is the expansion variable. Moreover, they give information on the correction terms which supplements the findings from the work of Bornemann. In the case of the Gaussian ensemble, we can demonstrate analogous features for Dyson index $\beta = 6$, which suggests a broader class of models, namely the classical
$\beta$ ensembles, with asymptotic expansions exhibiting integrable features. For the Laguerre ensembles at the hard edge, we give the explicit form of the correction at second order for unitary symmetry, and at first order in the orthogonal and symplectic cases. Various differential relations are demonstrated.
\end{abstract}

\textbf{Keywords:} random matrices, asymptotic analysis, linear differential equations, orthogonal polynomials, special functions

\section{Introduction}
\subsection{The Gaussian and Laguerre random matrix ensembles}
For $G$ an $N \times N$ real or complex standard Gaussian matrix, let us form the Hermitian matrix $H = {1 \over 2}(G+ G^\dagger)$. In the real case this is the construction of random matrices forming the Gaussian orthogonal ensemble (GOE), while in the complex case it specifies matrices from the 
Gaussian unitary ensemble (GUE). An extensive account relating to these ensembles is given in \cite[Ch.~1]{Fo10}. One key property is that their eigenvalue probability density functions have the functional form proportional to \cite[Prop.~1.3.4]{Fo10}
\begin{equation}\label{1.1}
\prod_{l=1}^N e^{-\beta x_l^2/2} \prod_{1 \le j < k \le N}
|x_k - x_j|^\beta.
\end{equation}
Here, the parameter $\beta$ --- known as the Dyson index --- takes the values 1 and 2 for the GOE and GUE respectively.

The above additive construction of non-Hermitian random matrices to give Hermitian random matrices has a multiplicative counterpart. Now, for
$n \ge N$, let $X$ be a $n \times N$ rectangular real (complex) standard Gaussian matrix (also referred to as rectangular real and complex Ginibre matrices \cite{BF25}), and form the positive definite random matrix $W = X^\dagger X$. This construction specifies the real (complex) Wishart ensemble; see e.g.~\cite[\S 3.2]{Fo10}.
Replacing (\ref{1.1}) for the corresponding eigenvalue probability density functions is the functional form proportional to
\cite[Prop.~3.2.2]{Fo10}
\begin{equation}\label{1.2}
\prod_{l=1}^N x_l^a e^{-\beta x_l/2} \mathbbm 1_{x_l > 0}\prod_{1 \le j < k \le N}
|x_k - x_j|^\beta, \quad a:= {\beta \over 2} (n - N + 1) - 1,
\end{equation}
with $\beta = 1$ ($\beta = 2$) for the real (complex) Wishart ensemble. Allowing the parameter $a$ in (\ref{1.2}) to take on continuous values $a > -1$, this is said to specify the Laguerre orthogonal ensemble (LOE) for $\beta = 1$, and the Laguerre unitary ensemble (LUE) for $\beta = 2$.

The constructions leading to (\ref{1.1}) and (\ref{1.2}) can be extended to give realisations of the case $\beta = 4$. This comes about by choosing $G$ and $X$ to be $N \times N$ and $n \times N$ matrices with the elements $2 \times 2$ complex matrices representing quaternions; see e.g.~\cite[Eq.~(1.20)]{Fo10}. Choosing the two independent entries in the latter to be standard complex Gaussians and forming $H$ and $W$ as in the above paragraphs gives Hermitian random matrices with a doubly degenerate spectrum. The probability density function of the $N$ independent eigenvalues in the spectrum is given by (\ref{1.1}) for $H$, and (\ref{1.2}) for $W$.

\subsection{Outline of some findings of Bornemann}\label{S1.2}
Up to proportionality, the eigenvalue probability density functions (\ref{1.1}) and (\ref{1.2}) are of the form
\begin{equation}\label{1.3}
\prod_{l=1}^N w(x_l) \prod_{1 \le j < k \le N}
|x_k - x_j|^\beta,
\end{equation}
for weight function $w(x)$ given by $w^{\rm G}(x):= e^{-\beta x^2/2}$ and
$w^{\rm L}(x) := x^a e^{-\beta x/2} \mathbbm 1_{x > 0}$ respectively. Let us denote by $E_{N,\beta}(k;(s,\infty);w(x))$ the probability that in an ensemble with eigenvalue probability density function proportional to (\ref{1.3}), there are $k$ eigenvalues in the semi-infinite interval $(s,\infty)$. Introduce too the generating function
\begin{equation}\label{1.4}
\mathcal E_{N,\beta}((s,\infty);w(x);\xi) := \sum_{k=0}^\infty (1 - \xi)^k
E_{N,\beta}(k;(s,\infty);w(x)).
\end{equation}
A fundamental result in random matrix theory is that by centering the variable $s$ to the neighbourhood of the largest eigenvalue, and choosing a scale so that in that neighbourhood the mean eigenvalue spacings are of order unity, the generating function (\ref{1.4}) has a well defined limiting form. Moreover, for a large class of weights, this limiting form is independent of the weight $w(x)$ (see \cite{DG07,BEY14,KRV16}), an effect referred to as universality. 

Explicit computation of the limiting form is simplest in the case $\beta = 2$. Then, the eigenvalues form a determinantal point process (see, e.g., \cite[Ch.~5]{Fo10}), meaning that the $k$-point correlation
$\rho_{(k),N}(x_1,\dots,x_k)$ is specified by a particular correlation kernel 
$K_N(x,y)$ according to
\begin{equation}\label{1.5}
\rho_{(k),N}(x_1,\dots,x_k) = \det [ K_N(x_i,x_j)]_{i,j=1,\dots,k}.
\end{equation}
As a consequence (see, e.g., \cite[\S 9.1]{Fo10}), 
\begin{equation}\label{1.6}
\mathcal E_{N,\beta}((s,\infty);w(x);\xi) \Big |_{\beta = 2} =
\det (\mathbb I - \xi \mathbb K_{(s,\infty)}),
\end{equation}
where $\mathbb K_{(s,\infty)}$ is the integral operator on $(s,\infty)$
with kernel $K_N(x,y)$. In the case of the GUE, the latter is given in terms of Hermite polynomials (see \cite[\S 5.4]{Fo10}). From this starting point, it was shown in \cite{Fo93} that
\begin{equation}\label{1.7}
 \mathcal E_{\beta}^{\rm s}((y,\infty);\xi) \Big |_{\beta = 2} :=
\lim_{N \to \infty} \mathcal E_{N,\beta}((\sqrt{2N} + y/(\sqrt{2} N^{1/6}),\infty);e^{-x^2};\xi) \Big |_{\beta = 2} =
\det ( \mathbb I - \xi \mathbb K^{\rm s}_{(y,\infty)}).
\end{equation}
Here, $ \mathbb K^{\rm s}_{(y,\infty)}$ is the integral operator on $(y,\infty)$ with kernel 
\begin{equation}\label{1.8}
K^{\rm s}(x,y) = {{\rm Ai}(x) {\rm Ai}'(y) - {\rm Ai}(y) {\rm Ai}'(x) \over x - y},
\end{equation}
now known as the Airy kernel. The superscript label ``s'' indicates that soft edge scaling has been applied: Comparing with (\ref{1.6}), we see that the end point of the semi-infinite interval has been chosen to equal $\sqrt{2N} + y/(\sqrt{2} N^{1/6})$. This is in keeping with the position of the largest eigenvalue in the GUE being to leading order located at $\sqrt{2N}$, and with $1/N^{1/6}$ being the scale of the spacing between eigenvalues in this neighbourhood. Note, though, that the limiting functional form $ \mathcal E_{\beta}^{\rm s}((y,\infty);\xi) |_{\beta = 2}$ is valid for a much larger class of models than that of the GUE, in accordance with the universality phenomenon discussed in the previous paragraph.

There is another prominent form of the limiting soft edge generating function $\mathcal E_{N,\beta}$ in the case $\beta = 2$
\cite{TW94}. For this, denote by
$q(x;\xi)$ the solution of the particular Painlev\'e II equation
with prescribed limiting behaviour
\begin{equation}\label{1.9} 
q'' = x q + 2 q^3, \quad q(x;\xi) \mathop{\sim}\limits_{x \to \infty}
\sqrt{\xi} {\rm Ai}(x).
\end{equation}
Thus, it was shown in \cite{TW94} that
\begin{equation}\label{1.10}
\det ( \mathbb I - \xi \mathbb K^{\rm s}_{(y,\infty)}) =
\exp \bigg ( - \int_y^\infty (x-y) (q(x;\xi))^2 \, dx \bigg ).
\end{equation}
This form makes it particularly clear that, from a mathematical viewpoint, $ \mathcal E_{\beta}^{\rm s}((y,\infty);\xi) |_{\beta = 2}$ relates to  integrable systems.

Thirty or so years after the results of \cite{TW94}, in a series of recent works \cite{Bo24,Bo25,Bo25a}, Bornemann has uncovered integrability properties associated with the large $N$ expansion of the 
$N$-dependent soft edge scaled generating function in the first equality of
(\ref{1.7}). Thus, it has been shown that this quantity admits the large $N$ asymptotic expansion, in powers of $N^{-2/3}$,
\begin{multline}\label{1.11}
\mathcal E_{N,2}((\sqrt{2N} + y/(\sqrt{2} N^{1/6}),\infty);e^{-x^2};\xi)  \\ = 
 \mathcal E_{2}^{\rm s}((y,\infty);\xi)  +
 {1 \over 4 N^{2/3}}  \mathcal E_{2}^{1,\rm s, G}((y,\infty);\xi)  + {1 \over 16 N^{4/3}}  \mathcal E_{2}^{2,\rm s, G}((y,\infty);\xi)  + \cdots,
 \end{multline}
 with
 \begin{align}\label{1.12}
 \begin{aligned}
  & \mathcal E_{2}^{1,\rm s, G}((y,\infty);\xi) =  \Big ( {y^2 \over 5} {d \over dy} - {3 \over 10} {d^2 \over d y^2} \Big )  \mathcal E_{2}^{\rm s}((y,\infty);\xi), \\
   & \mathcal E_{2}^{2,\rm s, G}((y,\infty);\xi) =  \bigg ( 
  -\Big ({141  \over 350}  + {8 y^3 \over 175} \Big ) {d \over dy} +
  \Big ( {39 y \over 175} + {y^4 \over 50} \Big ) {d^2 \over d y^2}
   \\
 & \hspace*{6cm}- {3 y^2 \over 50}{d^3 \over d y^3} 
  + {9 \over 200} {d^4 \over d y^4} \bigg )  \mathcal E_{2}^{\rm s}((y,\infty);\xi).
  \end{aligned}
  \end{align}
  In the first and second corrections, the superscript G indicates the Gaussian case. Unlike the leading term, these functional forms are not universal. Nonetheless, there is a weak form of universality in the sense that in the LUE case, the soft edge expansion is again in powers of $N^{-2/3}$.
 The expansion (\ref{1.11}) continues, with the explicit form of 
 $\mathcal E_{2}^{3,\rm s, G}((y,\infty);\xi)$, the coefficient of $N^{-2}/64$, given in \cite[Th.~2.1]{Bo24} with $F(t)$ therein replaced by $ \mathcal E_{2}^{\rm s}((t,\infty);\xi)$. 
 
 One should be aware, for a start, that the existence of an expansion in powers of $N^{-2/3}$ is far from obvious when beginning with the orthogonal polynomial expression of the correlation kernel $K_N(x,y)$ (we know from, e.g., \cite[Lemma 2.1]{BFM17} that an expansion of the kernel can be lifted to an expansion of the Fredholm determinant). In \cite{Bo24}, this is checked to high order by establishing the corresponding soft edge expansion of the Hermite polynomials to high order; it is expected that a general proof using a Riemann--Hilbert analysis can be given analogous to that provided in \cite{YZ23}, in a related setting.
 Less obvious still is that the successive correction terms should be given by $\xi$-independent, polynomial-weighted differential operators applied to the limiting distribution. This is readily appreciated if one recalls from \cite{FT18}
 the complexity of functional forms of $ \mathcal E_{2}^{1,\rm s, G}((y,\infty);\xi)$ available from earlier literature.

 In \cite{Bo25}, it is noted that the structures (\ref{1.11}) and (\ref{1.12}) have consequences  for the soft edge  expansion of the density
 $\rho_{(1),N}^{\rm GUE}(x)$. This comes about from the fact that (see, e.g., \cite[Eq.~(9.1)]{Fo10}) generally,
 \begin{equation}\label{1.13}
 \mathcal E_{N,\beta}((s,\infty);w(x);\xi) = 1 + \xi \int_s^\infty 
 \rho_{(1),N}(x) \, dx + \dots,
 \end{equation}
 where the terms not shown on the right-hand side are of higher powers in $\xi$ (in distinction to (\ref{1.11}), the expansion (\ref{1.13}) is convergent for all $|\xi| < 1$). Here, the dependence of $ \rho_{(1),N}(x)$ on $w(x)$ and $\beta$ in its notation have been suppressed for notational clarity. Let us now specialise to the GUE and expand the soft edge scaled density
  \begin{equation}\label{1.14}
  {1 \over \sqrt{2} N^{1/6}} \rho_{(1),N}^{\rm GUE}(\sqrt{2N} + y/(\sqrt{2}
  N^{1/6})) = \rho_{(1)}^{\rm s}(y) + {1 \over 4 N^{2/3}}
  \rho_{(1)}^{1,\rm s,G}(y) + {1 \over 16 N^{4/3}}
  \rho_{(1)}^{2,\rm s,G}(y) + \cdots ,
  \end{equation} 
as is consistent with (\ref{1.11}). On the right-hand side, we have suppressed the fact that we are specialising to $\beta = 2$, although in the first and second corrections we have included a superscript G for the same reason as in (\ref{1.11}). We recall from \cite{Fo93} the explicit (universal for $\beta = 2$) functional form of the limiting soft edge density,
 \begin{equation}\label{1.15}
  \rho_{(1)}^{\rm s}(y) =  ( {\rm Ai}'(y)  )^2 - y  ( {\rm Ai}(y)  )^2.
\end{equation}  
 With the differential operators in (\ref{1.12}) denoted
 $\mathcal D^{1,\rm GUE}$, $\mathcal D^{2,\rm GUE}$ respectively (and similarly at higher order although not made explicit in our presentation), it follows by taking the derivative in $\xi$ at $\xi=0$ on both sides of (\ref{1.11}), as prescribed by (\ref{1.13}), that for $j=1,2,\dots$,
  \begin{equation}\label{1.16}
  \int_y^\infty  \rho_{(1)}^{j,\rm s,G}(x) \, dx =
  \mathcal D^{ j,\rm GUE}  \int_y^\infty  \rho_{(1)}^{\rm s}(x) \, dx.
  \end{equation}  
  Moreover, noting from (\ref{1.12}) the factorisation
  $ \mathcal D^{j,\rm GUE} = \tilde{D}^{j,\rm GUE} {d \over d y} $,
  with $\tilde{D}^{j,\rm GUE}$ a polynomial differential operator, valid
  for $j=1,2$ at least, it follows by differentiating both sides of  (\ref{1.16}) that
  \begin{equation}\label{1.17}
    \rho_{(1)}^{j,\rm s,G}(y) = {d \over d y}  \tilde{D}^{j,\rm GUE}
   \rho_{(1)}^{\rm s}(y) .
 \end{equation}  
That is, the correction terms of (\ref{1.14}) are given by polynomial-weighted differential operators applied to the limiting distribution.

Most important for our purposes are two further aspects of 
Bornemann's findings. The first is that, with an appropriate shifting of $N$ by way of the variable
$N_{\rm s}' = N + (\beta - 2)/\beta$ (this is equation 
(\ref{Nd}) below),
a soft edge expansion in  powers of $(N_{\rm s}')^{-2/3}$ holds for the GOE and GSE density and, more generally, the generating function
(\ref{1.11}); the same holds in the LOE and LSE cases with the caveat that the definitions of the soft edge scaling and $N_{\rm s}'$ need to be modified (see \S\ref{S4.1}--\ref{S4.3} below).
The second, coming from \cite{Bo25}, is that the successive terms in the asymptotic expansions of the soft edge scaled densities of the Gaussian and Laguerre ensembles can be expressed in terms of 
transcendental basis functions --- of count three for the GUE and LUE, and five for the GOE, GSE, LOE, LSE --- with coefficients that are polynomials. In the case of the GUE, these
are
 \begin{equation}\label{1.17b}
 a_1(y) = ({\rm Ai}(y))^2, \quad  a_2(y) = ({\rm Ai}'(y))^2,
 \quad a_3(y) =  {\rm Ai}(y) {\rm Ai}'(y);
 \end{equation}
 thus for example from (\ref{1.15}), we have
 $ \rho_{(1)}^{\rm s}(y)  = - y  a_1(y) + a_2(y) $.

 \subsection{Scalar differential equation characterisation of the density and summary of our findings}
 The tool for our study of expansions of the form (\ref{1.14}) are linear third (GUE and LUE) and fifth (GOE, GSE, LOE, LSE) order differential equations satisfied by the density. A unified approach covering each of the Gaussian, Laguerre and Jacobi weights, valid for all even $\beta$ and their duals $4/\beta$ (the differential equation is of degree $\beta + 1$), has been given in \cite{RF21}. In Appendix~A, for $\beta =2$, we give a different unified approach applying to all the classical weights (the three listed above as well as the Cauchy weight) using orthogonal polynomial theory. For example, for the global scaled GUE density $\rho_{(1),N}^{\rm GUE,g}(x) := \sqrt{2N} \rho_{(1),N}^{\rm GUE}(\sqrt{2N} x)$ (the effect of global scaling is to map the leading order support to the interval $(-1,1)$), we have the differential equation
  \begin{equation}\label{1.3w}
  \bigg ( \Big ( {1 \over 4 N} \Big )^2 {d^3 \over d x^3} -
  (x^2 - 1) {d \over d x} + x \Big ) \bigg ) \rho_{(1),N}^{\rm GUE,g}(x) = 0.
  \end{equation}
  With a different interpretation (density of spinless free Fermions on a line, confined by a harmonic trap --- for a recent review on the relation of such quantum many body systems to random matrices, see \cite{DLMS19}), an equivalent third order differential equation characterisation first appeared in the 1979 work \cite{LM79}. Suppose now we know the existence of an expansion of $\rho_{(1),N}^{\rm GUE,g}(x)$ in inverse (possibly fractional) powers of $N$.
  Then, it follows from the differential equation that the expansion in fact only involves even powers of $N$. Moreover, the individual terms in the expansion are related by a sequence of nested inhomogeneous linear differential equations which all share the same homogeneous part.
  In the context of the Stieltjes transform (\ref{pkm}) below, this observation (and its consequences), can be found in \cite{HT12}.

  The above assumption of there being an expansion of $\rho_{(1),N}^{\rm GUE,g}(x)$ in inverse powers of $N$ is only for illustrative purposes: For the global density, all but the leading term contain oscillatory functions of $N$; see, e.g.,~\cite{GFF05}. Introducing the Stieltjes transform
  \begin{equation}\label{pkm}
  W_N^{\rm GUE}(x):= \int_{-\infty}^\infty
  \rho_{(1),N}^{\rm GUE,g}(y) (x - y)^{-1} \, dy,\quad x \notin \mathbb R,
  \end{equation}
  it is known from \cite{GT05,HT12,WF14} that this smoothed quantity satisfies the hypothesis and also the differential equation (\ref{1.3w}) but with the zero on the right-hand side replaced by $2N$.

It was recently noted in \cite{FRS25} that the soft edge scaling of (\ref{1.3w}) naturally manifests the expansion (\ref{1.14}). In Section \ref{S2}, we study said expansion using the general approach outlined in the previous paragraph. We find that exactly the particular solutions of the aforementioned nested inhomogeneous linear differential equations (with no contribution from the homogeneous solution) specify both $\rho_{(1)}^{\rm 1,s,G}(y)$ and $\rho_{(1)}^{\rm 2,s,G}(y)$, but not $\rho_{(1)}^{\rm 3,s,G}(y)$. This same property is found at the soft edge for the GOE, GSE (see \S\ref{S3.2}) and the Laguerre ensembles LUE (see \S\ref{S4.1}--\ref{S4.3}), LOE, LSE (see \S\ref{S5.1}). Likewise at the hard edge for the LUE (see \S\ref{S4.4}), but not the LOE (see \S\ref{S5.2}), where we instead find that the first correction when so characterised does involve a multiple of the limiting density. Here, the hard edge refers to a rescaling of the form $x\mapsto x/(4N_{\rm h}')$ with $N_{\rm h}'$ specified in (\ref{4.15}). An important point is that the solutions of both the homogeneous and inhomogeneous equations in all cases considered can be sought among a set of transcendental basis functions, as identified in \cite{Bo25} at the soft edge (recall the final paragraph of the previous subsection), and in the present work at the Laguerre hard edge. It was noted in \cite{FRS25} that the homogeneous part of the inhomogeneous linear differential equation for the bilateral Laplace transform of the soft edge scaled GUE density is first order. This allows for a simpler analysis, with each correction term having the structure (\ref{2.6}) below. Moreover, in equation (\ref{B.8}) of Appendix B, we give an integral formula based on a saddle point analysis which can (when aided by computer algebra, and subject to associated overhead constraints) produce the terms in the asymptotic expansion to any order.

 Our methods of analysing linear differential equations satisfied by the density are applied to the GOE and GSE in Section \ref{S3}. As already mentioned, this equation is now of degree five. 
 Nonetheless, the shifted variable $N_{\rm s}'$ 
 (recall the beginning of the paragraph containing (\ref{1.17b}))
 is clearly identifiable, and leads to  
 a finite (three terms in total) decomposition of this operator at the soft edge, with components proportional to $(N_{\rm s}')^{-2j/3}$ for $j=0,1,2$. Moreover,
 consistent with findings in \cite{Bo24}, upon the use of a simple scaling of the independent variable, each operator in this decomposition is the same for both the GOE and GSE.
 We can now proceed to check that the first and second corrections in the analogue of (\ref{1.14}) can be determined as particular solutions of the implied inhomogeneous linear equations, with no additive component from the solution of the homogeneous equation as commented on in the previous paragraph. The bilateral Laplace transform, for which the leading term and the first two corrections can be made explicit, also exhibits this property.
 In Remark \ref{R3.5} we comment on an analysis of the soft edge scaled density in the $\beta=6$ case of the Gaussian ensembles, from the viewpoint of the known seventh order linear differential equation for this quantity \cite{RF21}.

 The study of soft edge scaling for the Laguerre ensembles, undertaken in sections \ref{S4} and \ref{S5}, has from Bornemann's findings many similarities with their Gaussian counterparts (in particular sharing the same transcendental bases for the same symmetry class), although the expansion variable $N_{\rm s}'$ needs to be modified. In fact, the required modification depends on whether the soft edge results from the case of the Laguerre parameter $a$ fixed, or $a$ scaled as is natural in applications to multivariable statistics ($n,N$ as appearing in (\ref{1.2}) simultaneously going to infinity, with their ratio fixed). We note that the latter has two further subcases that must be distinguished --- the left and right soft edges.
 Also considered in these sections is the case of hard edge scaling, whereby eigenvalues in the neighbourhood of the origin are scaled to be of order unity apart. While an earlier study
 \cite{FT19} has identified the appropriate choice of $N_{\rm h}'$, 
 and another has noted asymptotic expansions in powers of
 $(N_{\rm h}')^{-2}$ \cite[Remark 4.3.1]{Fo25},
 missing from the early literature is the transendental bases replacing (\ref{1.17b}). We give such bases in terms of Bessel functions. Asymptotic analysis, based on the relation between Laguerre and hypergeometric polynomials, is performed to compute the second order correction term in the analogue of (\ref{1.14})
 for the hard edge scaled LUE (the first order term is known
 \cite{FT18}), and the first order correction term for the LOE.
 Surprisingly (keeping in mind experience of early calculations in our paper), for the latter it is found that from the viewpoint of the corresponding coupled linear differential equations, there is an additive multiple of the solution of the homogeneous equation.

\section{The GUE edge density} \label{S2}

\subsection{Previously established results}
Let $\rho_{(1),N}^{\rm GUE,s}(y)$ denote the left-hand side of (\ref{1.14}),
and thus the GUE density with soft edge scaling.
Starting with (\ref{1.3w})
we have shown in \cite[Eq.~(12)]{FRS25} that  this scaled density
satisfies the linear third order differential equation
\begin{equation}\label{2.1}
\Big ( {d^3 \over d y^3} - 4 y {d \over d y} + 2 \Big )
\rho_{(1),N}^{\rm GUE,s}(y) = {1 \over N^{2/3}} \Big ( y^2 {d \over d y} - y \Big ) 
\rho_{(1),N}^{\rm GUE,s}(y).
\end{equation}
In keeping with the recent finding in \cite{Bo25}
as revised in \S\ref{S1.2}, it follows immediately that $\rho_{(1),N}^{\rm GUE,s}(y)$ permits a large $N$  expansion in powers of $N^{-2/3}$,
\begin{equation}\label{2.2}
\rho_{(1),N}^{\rm GUE,s}(y) = r_0^{\rm G}(y) + 4^{-1} N^{-2/3} r_1^{\rm G}(y) +
4^{-2} N^{-4/3} r_2^{\rm G}(y) + \cdots;
\end{equation}
cf.~(\ref{1.14}).
A consequence (see \cite[Eq.~(13)]{FRS25}  with the scaling $r_j^{\rm G}(y) \mapsto 4^{-j} r_j^{\rm G}(y)$) is that the successive $r_j^{\rm G}(y)$ are related by the nested inhomogeneous, linear, third order differential equations
\begin{equation}\label{2.3}
\Big ( {d^3 \over d y^3} - 4 y {d \over d y} + 2 \Big ) r_j^{\rm G}(y)
 =  4 \Big ( y^2 {d \over d y} - y \Big ) 
r_{j-1}^{\rm G}(y), \quad j=0,1,\dots,
\end{equation}
with $r_{-1}^{\rm G}(y) = 0$ (note that the homogeneous parts of these differential equations are identical for all $j=0,1,\ldots$). The case $j=1$ was first recorded in
\cite[Eq.~(4.10)]{RF21}.

\begin{remark}
    From (\ref{1.17}), we know that $r_j^{\rm G}(y)$ for $j=1,2$ can be expressed as a $j$-dependent differential operator acting on $r_0^{\rm G}(y)$, a fact which moreover is expected to be true for general $j$. Using the first equation in (\ref{1.12}), in the case $j=1$, this relation reads
 \begin{equation}\label{2.3a}  
 {d \over dy} \bigg ( - {3 \over 10} {d \over dy} + {y^2 \over 5}  \bigg ) r_0^{\rm G}(y)= r_1^{\rm G}(y).
 \end{equation}
 This differential relation specifies the solution of 
  (\ref{2.3}) with $j=1$ (note that the solution of the latter is not unique without further information).
\end{remark}

Another finding of \cite{Bo25}
(which can be used to give a verification proof of (\ref{2.3a}))
is that the $ r_j^{\rm G}(y) $ have the structure
\begin{equation}\label{2.R}
 r_j^{\rm G}(y) = \alpha_j^{\rm G}(y)  ( {\rm Ai}(y)  )^2 +
  \beta_j^{\rm G}(y)  ( {\rm Ai}'(y)  )^2 + \gamma_j^{\rm G}(y)
   {\rm Ai}(y)  {\rm Ai}'(y)
\end{equation} 
for polynomials $\alpha^{\rm G}_j(y), \beta^{\rm G}_j(y), \gamma^{\rm G}_j(y)$ of degrees that depend on $j$. Note that since
\begin{align}\label{2.Ra}
\begin{aligned}
& {d \over dy} ( {\rm Ai}(y)  )^2 = 2  {\rm Ai}(y)  {\rm Ai}'(y), \\
& {d \over dy} ( {\rm Ai}'(y)  )^2 = 2 y {\rm Ai}(y) {\rm Ai}'(y), \quad
{d \over dy}  {\rm Ai}(y)  {\rm Ai}'(y) = ( {\rm Ai}'(y)  )^2 +
y ({\rm Ai}(y))^2,
\end{aligned}
\end{align} 
as follows from the rules of differentiation and the Airy differential equation,
this structure is closed with respect to differentiation.  We list in Table \ref{T1} low order cases as known from \cite{Fo93} ($j=0$),
\cite{GFF05} ($j=1$) and \cite{Bo25} ($j=2$).

\begin{table}[h!]
\centering
\begin{tabular}{c|c|c|c} 
$j$ & $\alpha_j^{\rm G}(y)$ & $\beta_j^{\rm G}(y)$ & $\gamma_j^{\rm G}(y)$ \\
\hline
  0  &  $-y$ & 1 & 0 \\[.3cm]
  1   & $- {3 \over 5} y^2$ & ${2 \over 5}y$ & ${3 \over 5}$ \\[.3cm]
  2   &  $  {39 \over 175} y^3 + {9 \over 100} $ & $-{3 \over 175}y^2$ & $-  {1 \over 25} y^4 - {99 \over 175} y $
  \end{tabular}
  \caption{Explicit form of low order cases of the coefficients in
  (\ref{2.R}).}
  \label{T1}
  \end{table}

In the usual notation associated with the Airy differential equation, it was noted in \cite{FRS25} that the  third order homogeneous differential equation obtained by setting $j=0$ in (\ref{2.3}) has the three linearly independent solutions
 \begin{equation}\label{2.4}
 ( {\rm Ai}'(y)  )^2 - y  ( {\rm Ai}(y)  )^2, \quad
  ( {\rm Bi}'(y)  )^2 - y  ( {\rm Bi}(y)  )^2, \quad
    {\rm Ai}'(y)  {\rm Bi}'(y)  - y   {\rm Ai}(y)  {\rm Bi}(y). 
 \end{equation} 
Due to their asymptotic behaviours (the first converges to zero as $y\to\infty$, while the latter two diverge to $-\infty$ in this limit), only the first of these is relevant to
(\ref{2.1}), which from Table \ref{T1} and (\ref{2.R}) is precisely $r_0^{\rm G}(y)$ (recall too (\ref{1.15})).
One can check that the equations (\ref{2.3}) for $j=0,1,2$ are satisfied by the explicit functional forms given by (\ref{2.R}) and Table \ref{T1}. Moreover, for $j=1,2$, $r_j^{\rm G}(y)$ is the particular solution which is linearly independent of the homogeneous solution. However, for $j=3$, it turns out that the particular solution contains an additive factor that is proportional to the first of the functional forms in (\ref{2.4}) and so is not uniquely determined.

 It was also noted in \cite{FRS25} that introducing the bilateral Laplace transform (this quantity played a key role in the analysis of \cite{Ok02} relating the GUE to particular intersection numbers),
 \begin{equation}\label{2.U}
 u_j(\gamma) := \int_{-\infty}^\infty e^{\gamma y} r_j^{\rm G}(y) \, dy, \quad {\rm Re} (\gamma) > 0,
  \end{equation} 
  leads to a simplification
 of (\ref{2.3}).  Thus, by multiplying both sides of (\ref{2.3}) by $e^{\gamma y}$ $({\rm Re}(\gamma) > 0)$, integrating over the range $y \in \mathbb R$, and simplifying using integration by parts assuming no contribution from the endpoints, it follows that
  \begin{equation}\label{2.5}
  4 \gamma u_j'(\gamma) + (6 - \gamma^3) u_j(\gamma) = -4 ( 
  \gamma u_{j-1}''(\gamma) + 3 u_{j-1}'(\gamma)), \quad u_{-1}(\gamma) :=0.
\end{equation}
In the case $j=0$ we see that the third order homogeneous equation has reduced to a first order homogeneous equation. This is consistent with the transform (\ref{2.U}) only being well defined for the first of the three linearly independent solutions (\ref{2.4}).

\subsection{Results relating to \texorpdfstring{$\{ u_j(\gamma) \}$}{\{uj(γ)\}}}
We consider first a direct evaluation, using (\ref{2.U}) and our knowledge of $r_j^{\rm G}(y)$ $(j=0,1,2)$. As a preliminary, we evaluate the bilateral Laplace transform of the Airy function products appearing in (\ref{2.R}).

\begin{proposition}\label{P1}
Let ${\rm Re} (\gamma) > 0$. We have
 \begin{align}
 &h_1(\gamma) :=  \int_{-\infty}^\infty e^{\gamma y}  ({\rm Ai}(y) )^2 \, dy =
 { e^{\gamma^3/12} \over 2 \sqrt{\pi \gamma}}, \: \: 
  h_2(\gamma) :=  \int_{-\infty}^\infty e^{\gamma y}  ({\rm Ai}'(y) )^2 \, dy = \Big ( {1 \over \gamma} + {d \over d \gamma} \Big ) h_1(y),
\nonumber \\ 
  & h_3(\gamma) :=  \int_{-\infty}^\infty e^{\gamma y}    {\rm Ai}(y) {\rm Ai}'(y)  \, dy = - {\gamma \over 2}  h_1(\gamma).
  \end{align} 
\end{proposition}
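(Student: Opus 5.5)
We first compute $h_1(\gamma)$ and then derive $h_2$ and $h_3$ from it by integration by parts, using the differentiation rules (\ref{2.Ra}).

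\textbf{Step 1: evaluating $h_1$.} Use the Fourier integral representation ${\rm Ai}(y) = {1 \over 2\pi} \int_{-\infty}^\infty e^{i(t^3/3 + yt)}\, dt$. Equivalently, use the Laplace transform formula $\int_{-\infty}^\infty e^{\gamma y} {\rm Ai}(y)\,dy = e^{\gamma^3/3}$ for ${\rm Re}(\gamma)>0$, which follows from that representation after deforming the contour to ${\rm Im}\, t = \gamma$.

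For the square, the cleanest route is the known product formula
\[
({\rm Ai}(y))^2 = {1 \over 2^{2/3}\pi} \int_0^\infty {\rm Ai}(2^{2/3} y + t^2/2^{4/3})\,dt,
\]
which one can also derive by a change of variables in the double Fourier integral. Inserting this, interchanging the order of integration (legitimate for real $\gamma>0$ by absolute convergence, since ${\rm Ai}$ decays superexponentially at $+\infty$ and the factor $e^{\gamma y}$ kills the oscillatory tail at $-\infty$), and applying the single-Airy Laplace transform with rescaled argument gives a Gaussian integral in $t$.

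Concretely, substitute $u = 2^{2/3} y + t^2/2^{4/3}$, so that $dy = 2^{-2/3} du$ and $e^{\gamma y} = e^{\gamma 2^{-2/3} u}\, e^{-\gamma t^2/4}$. The $u$-integral produces $e^{(\gamma 2^{-2/3})^3/3} = e^{\gamma^3/12}$, while the $t$-integral gives $\int_0^\infty e^{-\gamma t^2/4}\,dt = \sqrt{\pi/\gamma}$. Collecting the constants $2^{-4/3}\pi^{-1}\sqrt{\pi/\gamma}$ should give $e^{\gamma^3/12}/(2\sqrt{\pi\gamma})$; checking this constant is the one place to be careful.

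Analytic continuation then extends the result to all ${\rm Re}(\gamma)>0$.

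\textbf{Step 2: $h_3$.} By (\ref{2.Ra}), ${\rm Ai}\,{\rm Ai}' = {1\over 2}{d\over dy}({\rm Ai})^2$. Integrating by parts, with boundary terms vanishing (exponential decay of ${\rm Ai}^2$ at $+\infty$; at $-\infty$, ${\rm Ai}^2 = O(|y|^{-1/2})$ is killed by $e^{\gamma y}$), gives $h_3 = -{\gamma\over 2} h_1$.

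\textbf{Step 3: $h_2$.} Here a further relation is needed. Using $({\rm Ai}')^2 = {d\over dy}({\rm Ai}\,{\rm Ai}') - y({\rm Ai})^2$ from (\ref{2.Ra}) and integrating by parts,
\[
h_2 = -\gamma h_3 - h_1' = {\gamma^2 \over 2} h_1 - h_1' .
\]
Since $h_1'/h_1 = \gamma^2/4 - 1/(2\gamma)$, this becomes
\[
h_2 = \Big({\gamma^2\over 2} - {\gamma^2 \over 4} + {1\over 2\gamma}\Big) h_1 = \Big({\gamma^2 \over 4} + {1 \over 2\gamma}\Big)h_1 ,
\]
whereas the stated form requires $(1/\gamma + d/d\gamma)h_1 = (\gamma^2/4 + 1/(2\gamma))h_1$. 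The two agree, so the stated relation holds. Here $\int e^{\gamma y} y\, {\rm Ai}^2\,dy = h_1'$ by differentiation under the integral sign.

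\textbf{Main obstacle.} The main obstacle is Step 1: justifying the product formula (or the double-integral manipulation) and the interchange of integrals, given that ${\rm Ai}^2$ is only conditionally well behaved at $-\infty$. Working first with real $\gamma>0$ and then continuing analytically handles this.
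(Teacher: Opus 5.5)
Your Steps 2 and 3 are essentially the paper's argument: integration by parts together with (\ref{2.Ra}), i.e.\ the Airy equation. Both are correct. Step 3 uses only $h_1'/h_1=\gamma^2/4-1/(2\gamma)$, so it does not depend on the normalisation of $h_1$.

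For $h_1$ itself the paper gives no derivation; it is simply returned by Mathematica. Your product-formula route is therefore genuinely different and would supply an actual proof, but as written it proves the wrong constant, because the formula you quote is misnormalised. With the argument $2^{2/3}y+t^2/2^{4/3}$ the correct statement is
\[
({\rm Ai}(y))^2=\frac{1}{2^{1/3}\pi}\int_0^\infty {\rm Ai}\big(2^{2/3}y+t^2/2^{4/3}\big)\,dt
=\frac{1}{2^{2/3}\pi}\int_0^\infty {\rm Ai}\big(2^{2/3}y+s\big)\,\frac{ds}{\sqrt{s}} .
\]
The factor $2^{-2/3}$ belongs to the $s^{-1/2}\,ds$ form, and the substitution $s=t^2/2^{4/3}$ contributes a further $2^{1/3}$. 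You can confirm this from the double Fourier integral: put $s=p+q$, $r=p-q$ (Jacobian $2$, $s^3+r^3=2p^3+6pq^2$) and $p=2^{-1/3}w$. This gives $({\rm Ai}(y))^2=\frac{2^{2/3}}{\pi}\int_0^\infty{\rm Ai}(2^{2/3}(y+q^2))\,dq$, and then set $q=t/2$.

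Consequently the constant you collect, $2^{-4/3}\pi^{-1}\sqrt{\pi/\gamma}=2^{-4/3}(\pi\gamma)^{-1/2}$, is not $\tfrac12(\pi\gamma)^{-1/2}$. The check you deferred actually fails. With the corrected prefactor it becomes $2^{-1/3}\cdot 2^{-2/3}\,\pi^{-1}\sqrt{\pi/\gamma}=1/(2\sqrt{\pi\gamma})$, as required.

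With that fix the proof is complete. The step you single out as the main obstacle is routine: after the substitution $u=2^{2/3}y+t^2/2^{4/3}$, the absolute double integral factorises as $2^{-2/3}\int_0^\infty e^{-\gamma t^2/4}\,dt\int_{\mathbb R}e^{2^{-2/3}\gamma u}|{\rm Ai}(u)|\,du<\infty$ for real $\gamma>0$. So Fubini--Tonelli applies directly, and analytic continuation in $\gamma$ finishes the argument.

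An alternative, in the spirit of the paper's treatment of $h_4$, is to Laplace transform the third order equation $f'''-4yf'-2f=0$ satisfied by $f=({\rm Ai})^2$. This gives $4\gamma h_1'=(\gamma^3-2)h_1$ and hence $h_1=C\gamma^{-1/2}e^{\gamma^3/12}$ cheaply, but the constant $C$ then needs separate input. Fixing $C$ is exactly what the product formula buys you, so its normalisation is the one thing that must be right.
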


\begin{proof}
The first of these is returned by Mathematica, but this package fails on the remaining two. To obtain the second evaluation, we make use of integration by parts and the Airy differential equation. The third only requires integration by parts.
    \end{proof}

    \begin{corollary}\label{Co1}
    With ${\rm Re} (\gamma) > 0$, let $u_j(\gamma)$ be specified by (\ref{2.U}). We have
 \begin{align}\label{C1.1}
 & u_0(\gamma) = {e^{\gamma^3/12} \over 2 \sqrt{\pi} \gamma^{3/2}}, \quad   u_1(\gamma) = - {e^{\gamma^3/12} \over 160 \sqrt{\pi} \gamma^{5/2}} 
 \Big ( 60 + 20 \gamma^3 + \gamma^6 \Big ),
 \nonumber 
 \\ & u_2(\gamma) =  { e^{\gamma^3/12} \over 179200 \sqrt{\pi} \gamma^{7/2}} \Big ( -42000 + 28000 \gamma^3 + 14840 \gamma^6 + 680 \gamma^9 + 7 \gamma^{12}
 \Big ).
 \end{align}
 \end{corollary}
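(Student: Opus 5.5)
The plan is a direct evaluation. Substitute the structure (\ref{2.R}), with the polynomial coefficients from Table \ref{T1}, into the definition (\ref{2.U}). The Laplace transform of a polynomial times an Airy product is obtained by differentiating in $\gamma$: for each basis function, $\int_{-\infty}^\infty e^{\gamma y} y^k a_i(y)\,dy = (d/d\gamma)^k h_i(\gamma)$. This is justified because the superexponential decay of the Airy products as $y\to\infty$ and the bound $|{\rm Ai}(y)|^2 = O(|y|^{-1/2})$ as $y\to-\infty$ ensure absolute convergence for ${\rm Re}(\gamma)>0$. This absolute convergence permits differentiation under the integral sign. Using Proposition \ref{P1}, we have $h_2 = (1/\gamma + d/d\gamma)h_1$ and $h_3 = -(\gamma/2) h_1$. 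Hence every contribution is a differential polynomial in $h_1(\gamma) = e^{\gamma^3/12}/(2\sqrt{\pi\gamma})$. Since $h_1$ has the form $e^{\gamma^3/12}\gamma^{-1/2}$, each derivative yields $e^{\gamma^3/12}$ times a Laurent polynomial in $\gamma^{1/2}$, which is the shape claimed in (\ref{C1.1}).

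Concretely, for $j=0$ we have $u_0 = -h_1' + h_2 = h_1/\gamma$, which gives $u_0$ at once. For $j=1$ the expansion is
\[
u_1 = -\tfrac35 h_1'' + \tfrac25 h_2' + \tfrac35 h_3 .
\]
For $j=2$ it is
\[
u_2 = \tfrac{39}{175}h_1''' + \tfrac{9}{100}h_1 - \tfrac{3}{175}h_2'' - \tfrac1{25}h_3'''' - \tfrac{99}{175}h_3' .
\]
I would write $h_1 = e^{\gamma^3/12}\gamma^{-1/2}/(2\sqrt\pi)$ and use the rule $(d/d\gamma)\big(e^{\gamma^3/12}\gamma^{m}\big) = e^{\gamma^3/12}\big(\tfrac{\gamma^{m+2}}{4} + m\gamma^{m-1}\big)$ to reduce everything to bookkeeping. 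Collecting terms over the common factors $\gamma^{-5/2}$ and $\gamma^{-7/2}$ then produces the stated polynomials.

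As an independent check, I would verify that the resulting $u_j$ satisfy the recursion (\ref{2.5}). For $j=0$, $4\gamma u_0' + (6-\gamma^3)u_0 = 0$ follows immediately from $u_0 \propto e^{\gamma^3/12}\gamma^{-3/2}$. For $j=1,2$ one substitutes the ansatz $u_j = e^{\gamma^3/12}\gamma^{-3/2-j}p_j(\gamma^3)$ and checks the first-order inhomogeneous ODE. This check requires the boundary terms in the integration by parts to vanish, which again follows from the Airy asymptotics.

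I expect the main obstacle to be arithmetic rather than conceptual. The $j=2$ case requires up to fourth derivatives of $h_3$, and the coefficients there are large (e.g.\ $179200$). I would let computer algebra carry out this step, and use (\ref{2.5}) as a safeguard against slips. There is a cheap consistency check: the homogeneous solution $u_0$ is not present in $u_1$ or $u_2$, because $r_j$ is the particular solution. Consequently, the constant term of $p_j$ is fixed by the lowest-order balance in (\ref{2.5}), and matching it confirms the numerators $60$ and $-42000$.
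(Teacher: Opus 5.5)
Your proposal is correct and is essentially the paper's proof. You substitute (\ref{2.R}) with the Table~\ref{T1} polynomials into (\ref{2.U}), convert multiplication by $y^k$ into $(d/d\gamma)^k$ as in (\ref{2.9a}), and reduce everything to $h_1$ via Proposition~\ref{P1}; your explicit combinations for $j=1,2$ do reproduce the stated polynomials, and the cross-check against (\ref{2.5}) matches the consistency check the paper notes immediately after the corollary.
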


\begin{proof}
Substituting (\ref{2.R}) in  (\ref{2.U}) and integrating by parts shows that
 \begin{equation}\label{2.9a}  
u_j(\gamma) = \alpha_j^{\rm G} \Big ( {d \over d \gamma} \Big ) h_0(\gamma) +
 \beta_j^{\rm G} \Big ( {d \over d \gamma} \Big ) h_1(\gamma) +
 \gamma_j^{\rm G} \Big ( {d \over d \gamma} \Big ) h_2(\gamma). 
\end{equation}
For $j=0,1,2$, the polynomials $\alpha_j^{\rm G},\beta_j^{\rm G},\gamma_j^{\rm G}$ are known from Table \ref{T1}, while $h_0,h_1,h_2$ are known from Proposition
\ref{P1}. The result now follows from explicit computation.
    \end{proof}

    \begin{remark} ${}$ \\
    1.~In (\ref{2.U}), let us write $\gamma = \gamma^{\rm r} + i \gamma^{\rm i}$ with $\gamma^{\rm r}, \gamma^{\rm i} \in \mathbb R$ and $\gamma^{\rm r} > 0$.
Then, we see that $u_j(\gamma^{\rm r}+ i \gamma^{\rm i})$ is the Fourier transform of $e^{\gamma^{\rm r} y} r_j^{\rm G}(y)$, with Fourier variable
$ \gamma^{\rm i}$. Hence, by taking the inverse Fourier transform, it follows
 \begin{equation}\label{2.9b}  
e^{\gamma^{\rm r} y} r_j^{\rm G}(y) = {1 \over 2 \pi }
\int_{-\infty}^\infty e^{-iy x} u(\gamma^{\rm r} + i x) \, dx.
\end{equation}
One use of this formula is for the purpose of providing a plot of $r_j^{\rm G}(y)$ given $u_j(\gamma)$. For $j=0,1,2$, this can be compared against the same plots calculated from the exact functional forms known from (\ref{2.R}) and Table \ref{T1} --- agreement is obtained. \\
2.~Taking the bilateral Laplace transform of both sides of (\ref{2.3a}) shows that
 \begin{equation}\label{2.3au}  
-\bigg (  {3  \over 10} \gamma^2 + {1 \over 5}\gamma
{d^2 \over d \gamma^2} \bigg ) u_0(\gamma) =
u_1(\gamma),
\end{equation}
as can be verified from the explicit formulas in 
(\ref{C1.1}).
  \end{remark}

    We can check that the results of Corollary \ref{Co1} are consistent with the coupled differential equations (\ref{2.5}). Moreover, proceeding inductively, we see for general positive integer $j$ that these equations permit solutions
   \begin{equation}\label{2.6}  
   u_j(\gamma) =  {e^{\gamma^3/12} \over \sqrt{\pi} \gamma^{(2j+3)/2}}
   \sum_{l=0}^{2j} c_{l,j} \gamma^{3l},
   \end{equation}
   where the $c_{l,j}$ are rationals. For $j$ not a multiple of three, these solutions are unique as then  the solution to the homogeneous part of (\ref{2.5})
   (which is proportional to $u_0(\gamma)$) is not consistent with the functional form (\ref{2.6}), i.e., there is no $0\le l\le 2j$ such that the factor $\gamma^{3l-(2j+3)/2}$ in the $l$-th term of (\ref{2.6}) matches the equivalent factor of $\gamma^{-3/2}$ in $u_0(\gamma)$. In particular, this allows us to independently derive $u_1(\gamma), u_2(\gamma)$ in (\ref{C1.1}). In the case $j=3$,
   making use of knowledge of $u_2(\gamma)$ from (\ref{C1.1}) and substitution of (\ref{2.6}) gives that
  \begin{multline}\label{2.7}  
 u_3(\gamma) = -  {4^3 e^{\gamma^3/12} \over  \sqrt{\pi} \gamma^{9/2}}
 \bigg ( {105 \over 16384}     + b \gamma^3 + {1099 \over 196 600} \gamma^6 + {223 \over 229 376} \gamma^9 + {17089 \over 412 876 800} \gamma^{12}  \\ +{27 \over 45 875 200} \gamma^{15} + {1 \over 393 216 000} \gamma^{18} \bigg ),
 \end{multline}
 for some undetermined $b$. Note that the latter relates to the solution of the homogeneous part of (\ref{2.5}). 

 Using a direct method based on a saddle point analysis specified in Appendix B, we compute that
  \begin{equation}\label{2.8} 
  b = - {35 \over  16384}.
  \end{equation} 
  With $u_3(\gamma)$ now specified, taking the functional form (\ref{2.6}) as ansatz in
 (\ref{2.5}) then gives
 \begin{multline}\label{2.9}  
 u_4(\gamma) =  {4^4 e^{\gamma^3/12} \over  \sqrt{\pi} \gamma^{11/2}}
 \bigg ( - {4725\over 1 048 576}     + {315 \over 262 144} \gamma^3 + {3759 \over 1 048 576} \gamma^6 + {43 471 \over 22 020 096} \gamma^9 + {61 483 \over 293 601 280} \gamma^{12}  \\ +{113 941 \over 14 533 263 360} \gamma^{15} + {114 691 \over 924 844 032 000} \gamma^{18} 
 + {37 \over 44 040 192 000}  \gamma^{21} +
  {1 \over 503 316  480 000}  \gamma^{24} 
 \bigg ).
 \end{multline}
 Next, with knowledge of $u_4(\gamma)$, this process can be repeated to specify $u_5(\gamma)$. As already noted below (\ref{2.6}), and explicitly demonstrated in
 (\ref{2.7}), continuing this approach to the calculation of $u_6(\gamma)$
 there will be an additive factor proportional to $u_0(\gamma)$
 which is undetermined. In this case, the direct method of Appendix B could be applied to deduce the analogue of (\ref{2.8}).

\section{The GOE and GSE edge density} \label{S3}
\subsection{Previously established results}
In order to present the results in a compact form, here we adopt the convention used in \cite{Bo25} for the weight function: Instead of using $w^{\rm G}(x)= e^{-\beta x^2/2}$, we use $e^{-x^2/2}$ for the GOE case and $e^{-x^2}$ for the GSE case.  It has been 
demonstrated in \cite{Bo25}
that for $\beta=1,4$, by introducing a shifted variable
\begin{equation}\label{Nd}
N'_{\rm s}:=N+(\beta - 2)/(2 \beta), 
\end{equation}
under the soft edge scaling
\begin{align}\label{3.1}
    \rho_{(1),N}^{\mathrm{G}\beta\mathrm{E},s}(y)
:= {1 \over \sqrt{2} (N'_{\rm s})^{1/6}}
\rho_{(1),N} (\sqrt{2N'_{\rm s}} + y/(\sqrt{2} (N'_{\rm s})^{1/6}); \beta ),
\end{align}
the density of the GOE/GSE admits an asymptotic expansion in powers of $(N_{\rm s}')^{-2/3}$. Here, we introduce a further subscript label ``s'' denoting soft to distinguish from other definitions of $N'$ which occur at, for example, the hard edge to be considered below.  To present the results in a unified way, we write the expansion as
\begin{equation}\label{2.8a} 
 \beta^{\frac{1}{6}}\rho_{(1),N}^{\mathrm{G}\beta\mathrm{E},s}(y) =
 {}{r}_0^{\rm G,\beta}(\beta ^{\frac{1}{3}}y) + (8\sqrt{\beta}N'_{\rm s})^{-2/3} {}{r}_1^{\rm G,\beta}(\beta ^{\frac{1}{3}}y) +
(8\sqrt{\beta}N'_{\rm s})^{-4/3} {}{r}_2^{\rm G,\beta}(\beta ^{\frac{1}{3}}y) + \cdots.
 \end{equation}

\begin{remark} ${}$ \\
1.~In the notation of \cite{Bo25}, the expansion (\ref{2.8a})
is given in the variable $h_{n',\infty}=\frac{1}{4}\beta^{-\frac{1}{3}}(N_s')^{-\frac{2}{3}}$. \\
2.~The soft edge scaling of (\ref{3.1}) is said to be optimal, which means that no other choice of linear scaling $x=\sigma y+\mu$ will result in reducing the order of the correction (there is in fact an equivalence class of such optimal scalings). All edge scalings considered in this paper are optimal in this sense. Initial forays into the topic of optimal scaling \cite{Bo16,FT18} observed that the order of the correction could be minimised by re-centring the variable $y$ or, equivalently, fine-tuning the location of the edge by adding a microscopic shift. In the Gaussian case, for example, this meant changing the traditional scaling
\begin{equation} \label{scaling_old}
x = y/(\sqrt{2}N^{1/6}) + \sqrt{2N}
\end{equation}
into
\begin{equation} \label{scaling_shift}
x = y/(\sqrt{2}N^{1/6}) + \sqrt{2N} + (\beta-2)/(2\beta\sqrt{2N}).
\end{equation}
A Taylor expansion shows that under the scaling (\ref{scaling_old}), $\rho_{(1),N}^{\mathrm{G}\beta\mathrm{E},s}(y)$ has leading correction of order $N^{-1/3}$, with a functional form in $y$ proportional to the derivative of $r_0^{\rm G,\beta}(\beta ^{\frac{1}{3}}y)$; see \cite{FT18}, \cite[Remark 1.7]{BL24}, \cite[\S 1.3]{Bo25}. Taking a large $N$ expansion of the argument of (\ref{3.1}) shows that it is equivalent to (\ref{scaling_shift}) up to ${\rm O}(N^{-7/6})$. Thus, it can be seen as a further refinement of (\ref{scaling_shift}); compared to (\ref{scaling_old}), replacing $N$ by $N'_{\rm s}$ corresponds to changing both the centring and scaling parameters. The benefit of scaling according to (\ref{3.1}) is that one produces an expansion of the form (\ref{2.8a}), whereas no such expansion in integer powers of $N^{-2/3}$ is seen when applying the scaling (\ref{scaling_shift}). On a similar note, it is easy to see upon expanding the right-hand side of (\ref{2.8a}) in large $N$ that it has leading and next to leading order correction terms of order $N^{-2/3}$ and $N^{-4/3}$, but then there is a term of order $N^{-5/3}$, breaking the pattern. Finally, we emphasise that the phenomena just recounted have analogues for all cases considered in this paper.
\end{remark} 

Analogous to the GUE case, the functions ${r}_j^{\rm G,\beta}(y)$ are found to have the form of a linear combination involving the Airy function \cite{Bo25},
 \begin{align}\label{rGbeta}
      \begin{aligned}
          {}{r}_j^{\rm G,\beta}(y) =& {}{\alpha}_j^{\rm G,\beta}(y)  ( {\rm Ai}(y)  )^2 +
  {}{\beta}_j^{\rm G,\beta}(y)  ( {\rm Ai}'(y)  )^2 + {}{\gamma}_j^{\rm G,\beta}(y)
   {\rm Ai}(y)  {\rm Ai}'(y)\\
   &+{}{\xi}_j^{\rm G,\beta}(y){\rm Ai}(y){\rm AI}_\nu(y)+{}{\eta}_j^{\rm G,\beta}(y){\rm Ai}'(y){\rm AI}_\nu(y).
      \end{aligned}
 \end{align}
 Here, the polynomial coefficients ${}{\alpha}_j^{\rm G,\beta}(y),{}{\beta}_j^{\rm G,\beta}(y),{}{\gamma}_j^{\rm G,\beta}(y),{}{\xi}_j^{\rm G,\beta}(y),{}{\eta}_j^{\rm G}(y)$ are independent of $\beta$ and ${\rm AI}_\nu$ is the anti-derivative of the Airy function defined as
 \begin{align}\label{3.6}
     {\rm AI}_\nu(y):= \nu-\int_{y}^{\infty}{\rm Ai}(t)dt,
 \end{align}
 with
 \begin{align}\label{vb}
     \nu=\nu_\beta :=\left\{\begin{array}{cc}
         1, & \beta=1, \\
         0, & \beta=4.
     \end{array}
     \right.
 \end{align}
 Note that we have $\lim_{y\to-\infty}{\rm AI}_1(y)=\lim_{y\to\infty}{\rm AI}_0(y)=0 $. For the low order cases, the coefficients are given in Table \ref{T2}.
 \begin{table}[h!]
\centering
\begin{tabular}{c|c|c|c|c|c} 
$j$ & ${}{\alpha}_j^{\rm G,\beta}(y)$ & ${}{\beta}_j^{\rm G,\beta}(y)$ & ${}{\gamma}_j^{\rm G,\beta}(y)$ & ${}{\xi}_j^{\rm G,\beta}(y)$ & ${}{\eta}_j^{\rm G,\beta}(y)$\\
\hline
  0  &  $-y$ & 1 & 0 & $\frac{1}{2}$ & 0 \\[.3cm]
  1   & $- {1 \over 2} y^2$ & ${2 \over 5}y$ & ${3 \over 10}$ & $-{y\over 10}$ & ${y^2\over 10}$ \\[.3cm]
  2   &  $\frac{3 y^3}{25}+\frac{279}{700}$ & $-\frac{27 y^2}{350}$ & $-\frac{y^4}{100}-\frac{27 y}{140}$ & $\frac{y^5}{100}+\frac{9 y^2}{140}$ & $-\frac{3 y^3}{70}-\frac{9}{70}$
  \end{tabular}
  \caption{Explicit form of low order cases of the coefficients in
  (\ref{rGbeta}).}
  \label{T2}
  \end{table}

  \begin{remark}
The analogues of (\ref{1.12}) for the GOE and GSE are known from \cite[Eqns.~(5.2a) and (5.2b)]{Bo24}. Moreover, these operators permit a factorisation of the same form of that noted below (\ref{1.16}). From (\ref{1.17}) we then read off that
 \begin{equation}\label{2.3ar}  
  \bigg ( - {3 \over 5} {d^2 \over dy^2} + {y^2 \over 5} {d \over dy} +
 {2 \over 5} y \bigg ) {}{r}_0^{\rm G, \beta}(y)= {}{r}_1^{\rm G,\beta}(y)
 \end{equation}
 for both $\beta = 1$ and 4. 
 This result can be verified using Table \ref{T2} and the formulas (\ref{2.Ra}) supplemented by
  \begin{equation}\label{2.3aS} 
  {d \over d y} \Big ( {\rm Ai}(y) {\rm AI}_\nu(y) \Big ) = 
  {\rm Ai}'(y) {\rm AI}_\nu(y) +  ({\rm Ai}(y))^2, \: \:
   {d \over d y} \Big ( {\rm Ai}'(y) {\rm AI}_\nu(y) \Big ) = 
 y {\rm Ai}(y) {\rm AI}_\nu(y) +  {\rm Ai}'(y) {\rm Ai}(y).
   \end{equation}

   Consider now the general $\beta$ Gaussian ensemble 
   (weight $e^{-\beta x^2/2}$) and write the soft edge expansion of the density as
   \begin{equation}\label{2.8aG} 
 \rho_{(1),N}^{\mathrm{G}\beta\mathrm{E},s}(y) =
 {R}_0^{\rm G,\beta}(y) + (8N'_{\rm s})^{-2/3} {R}_1^{\rm G,\beta}(y) +
  \cdots.
 \end{equation}
 In comparison to~(\ref{2.2}) and (\ref{2.8a}), there is agreement for $\beta=1,2$ and thus  ${R}_j^{\rm G,\beta}(y) =
 {r}_j^{\rm G}(y)$ for $\beta = 2$, 
 and ${R}_j^{\rm G,\beta}(y) =
 {r}_j^{\rm G,\beta}(y)$ for $\beta = 1$.
 However, for $\beta =4$, since  the weight has been taken as $e^{-x^2}$ in (\ref{2.8a}) and as $e^{-2 x^2}$ in (\ref{2.8aG}), the expansions differ by a scaling.  Comparison of the results (\ref{2.3a}) and (\ref{2.3ar}) suggest that for general $\beta$,
  \begin{equation}\label{2.3aB}  
   \bigg ( - {3 \over 5 \beta } {d^2 \over dy^2} + {y^2 \over 5} {d \over dy} +
 {2 \over 5} y \bigg ) {R}_0^{\rm G, \beta}(y)= {R}_1^{\rm G,\beta}(y).
 \end{equation}
 For $\beta$ even, the density of the G$\beta$E has a $\beta$-dimensional integral form \cite{BF97a}. This was used in \cite{DF06} to establish an evaluation of $R_0^{\rm G, \beta}(y)$ as a $\beta$-dimensional integral, and in \cite{FT19a} to similarly express
 $R_1^{\rm G,\beta}(y)$ in terms of a $\beta$-dimensional integral. We expect that an integration by parts strategy as is common in studies of Selberg-type integrals (see \cite[\S 4.6]{Fo10}),  can be used to establish that these forms are related by (\ref{2.3aB}) (such a strategy was used in the recent work \cite{FS25} to establish an analogous identity for the circular $\beta$ ensemble).
 \end{remark}

\subsection{Coupled differential equations and the bilateral Laplace transform} \label{S3.2}

Without soft edge scaling, fifth order linear differential equations satisfied by
$\rho_{(1),N}^{\mathrm{GOE}}(x)$, $\rho_{(1),N}^{\mathrm{GSE}}(x)$ have been derived in \cite{WF14}. Changing to soft edge variables as specified in (\ref{3.1}) gives a terminating large $N'_{\rm s}$ form of the corresponding differential operators in powers of $(N'_{\rm s})^{-2/3}$.

\begin{proposition}\label{P3.2}
Introduce the differential operators
\begin{align}\label{3.11}
\begin{aligned}
& \mathcal{D}_{0}^{\mathrm{G},\beta}
:=\frac{4}{\beta}\frac{d^5}{d y^5}
-20 y \frac{d^3}{d y^3}
+12 \frac{d^2}{d y^2}
+16 \beta y^2 \frac{d}{d y}
-8 \beta y,\\[0.3em]
& \mathcal{D}_{1}^{\mathrm{G},\beta}
:=-5 y^2 \frac{d^3}{d y^3}
+6 y \frac{d^2}{d y^2}
+\Big(8\beta y^3+14-4\beta-\frac{16}{\beta}\Big)\frac{d}{d y}
-6\beta y^2, \quad 
\mathcal{D}_{2}^{\mathrm{G},\beta} 
:=\beta y^3\Big(y\frac{d}{d y}-1\Big).
\end{aligned}
\end{align}
Then, for $\beta=1$ and $4$, the soft edge scaled density satisfies
\begin{equation}\label{3.2}
\bigg(
\mathcal{D}_{0}^{\mathrm{G},\beta}
+\frac{1}{(N'_{\rm s})^{2/3}}\mathcal{D}_{1}^{\mathrm{G},\beta}
+\frac{1}{(N'_{\rm s})^{4/3}}\mathcal{D}_{2}^{\mathrm{G},\beta}
\bigg)
\rho_{(1),N}^{\mathrm{G}\beta\mathrm{E},s}(y)=0 .
\end{equation}
\end{proposition}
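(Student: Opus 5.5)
The plan is to start from the fifth order linear differential equations for the unscaled densities $\rho_{(1),N}^{\rm GOE}(x)$ and $\rho_{(1),N}^{\rm GSE}(x)$ derived in \cite{WF14}. We first rewrite them for the weight conventions adopted here ($e^{-x^2/2}$ for $\beta=1$, $e^{-x^2}$ for $\beta=4$). This is a linear rescaling $x\mapsto cx$ of the independent variable, which only multiplies the coefficient of $d^k/dx^k$ by $c^{-k}$. The resulting equation has the form $\sum_{k=0}^{5} p_k(x;N)\, d^k\rho/dx^k=0$ with polynomial coefficients $p_k$, whose dependence on $N$ is polynomial. The first concrete step is to express all the $N$-dependence through $N'_{\rm s}$ of (\ref{Nd}). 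For $\beta=1$ we substitute $N=N'_{\rm s}+1/2$, and for $\beta=4$ we substitute $N=N'_{\rm s}-1/4$. The aim is to obtain a single $\beta$-parametrised operator. Consistency with $\mathcal D_1^{\rm G,\beta}$, whose coefficient $14-4\beta-16/\beta$ mixes $\beta$ and $1/\beta$, suggests that after this substitution the two equations differ only through such explicit $\beta$-dependent constants.

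Next we change variables according to (\ref{3.1}). We set $x=\sqrt{2N'_{\rm s}}+y/(\sqrt2 (N'_{\rm s})^{1/6})$ and $\rho_{(1),N}(x)=\sqrt2 (N'_{\rm s})^{1/6}\rho^{{\rm G}\beta{\rm E},s}_{(1),N}(y)$. This gives $d/dx=\sqrt2 (N'_{\rm s})^{1/6}d/dy$. Because the equation is linear and homogeneous, the constant prefactor drops out. Each coefficient $p_k(x;N)$ becomes a polynomial in $y$ and in $(N'_{\rm s})^{\pm1/6}$ (possibly with $\sqrt{N'_{\rm s}}$ factors), multiplied by $(N'_{\rm s})^{k/6}$. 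We multiply through by the power of $N'_{\rm s}$ that makes the leading term of order one, and organise the result as a Laurent expansion in $(N'_{\rm s})^{-1/3}$.

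The claim to verify is then a purely algebraic statement. First, all half-integer and odd powers of $(N'_{\rm s})^{-1/3}$ cancel identically. Second, the positive powers cancel, owing to the choice of edge location $\sqrt{2N'_{\rm s}}$. Third, the expansion terminates after $(N'_{\rm s})^{-4/3}$, with the three surviving coefficients equal to $\mathcal D_0^{\rm G,\beta}$, $\mathcal D_1^{\rm G,\beta}$, $\mathcal D_2^{\rm G,\beta}$ of (\ref{3.11}) up to a common overall constant. Termination is plausible from degree counting: as in the GUE case (\ref{1.3w}) to (\ref{2.1}), $p_k$ is a polynomial in $x$ and $N$ of bounded weighted degree. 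As a sanity check, $\mathcal D_0^{\rm G,\beta}$ should annihilate $r_0^{\rm G,\beta}(\beta^{1/3}y)$ from Table \ref{T2}, which can be checked using (\ref{2.Ra}) and (\ref{2.3aS}).

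The main obstacle is this cancellation step, in particular the vanishing of the terms carrying odd powers of $(N'_{\rm s})^{-1/3}$. These vanish only because of the specific shift $N\to N'_{\rm s}$. With the unshifted $N$, an $N^{-1/3}$ term survives (compare the remarks on (\ref{scaling_old})). I would carry this out with computer algebra, separately for $\beta=1$ and $\beta=4$, expanding in $\epsilon=(N'_{\rm s})^{-1/6}$ and collecting coefficients of $\epsilon^m\,y^i\, d^k/dy^k$. Beyond the bookkeeping, the delicate part is getting the exact normalisations of the \cite{WF14} equations and the weight rescaling right. I would fix the overall constant by matching the $d^5/dy^5$ coefficient to $4/\beta$, and then check that every remaining coefficient agrees.
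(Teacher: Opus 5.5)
Your proposal is correct and is essentially the paper's own argument: the paper likewise takes the fifth order equations of \cite{WF14}, changes to the soft edge variables (\ref{3.1}) with $N$ expressed through $N'_{\rm s}$, and verifies with computer algebra that the operator collapses to the three-term form (\ref{3.2}). One caution for $\beta=4$: the centring $\sqrt{2N'_{\rm s}}$ in (\ref{3.1}) is the soft edge only when the GSE eigenvalue weight is $e^{-2x^2}=e^{-\beta x^2/2}$, since with eigenvalue weight $e^{-x^2}$ the edge sits at $2\sqrt{N}$ and positive powers of $N'_{\rm s}$ would not cancel; so do not rescale the GSE eigenvalue variable to weight $e^{-x^2}$ before substituting (\ref{3.1}), and note that the factors $\beta^{1/6}$, $\beta^{1/3}$ in (\ref{2.8a}) are consistent with this reading.
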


\begin{corollary}\label{C3.3}
    The terms in the expansion (\ref{2.8a}) satisfy
    the nested inhomogeneous differential equations
\begin{align}\label{eq-rgbeta}
    \mathcal{D}_0^{\mathrm{G}, \beta} {}{r}_j^{\mathrm{G}, \beta}(\beta^{\frac{1}{3}}y)= - \Big ( 4 \beta^{\frac{1}{3}}\mathcal{D}_1^{\mathrm{G}, \beta} {}{r}_{j-1}^{\mathrm{G}, \beta}(\beta^{\frac{1}{3}} y)+16\beta^{\frac{2}{3}}\mathcal{D}_2^{\mathrm{G}, \beta} {}{r}_{j-2}^{\mathrm{G}, \beta}(\beta^{\frac{1}{3}} y) \Big ), \quad
     {}{r}_{-1}^{\mathrm{G}, \beta}(y)= {}{r}_{-2}^{\mathrm{G}, \beta}(y)=0,
\end{align}
or equivalently,
\begin{align}\label{3.14E}
    \tilde{\mathcal{D}}_0^{\mathrm{G,\beta}} {}{r}_j^{\mathrm{G,\beta}}(y)+\tilde{\mathcal{D}}_1^{\mathrm{G},\beta} r_{j-1}^{\mathrm{G},\beta}(y)+\tilde{\mathcal{D}}_2^{\mathrm{G}, \beta} {}{r}_{j-2}^{\mathrm{G,\beta}}(y)=0,
\end{align}
where
\begin{align*}
    \begin{aligned}
& {}{\mathcal{D}}_{0}^{\mathrm{G,\beta}}
:=\frac{d^5}{d y^5}
-5 y \frac{d^3}{d y^3}
+3 \frac{d^2}{d y^2}
+4 y^2 \frac{d}{d y}
-2 y,\\[0.3em]
& {}{\mathcal{D}}_{1}^{\mathrm{G,\beta}}
:=-5 y^2 \frac{d^3}{d y^3}
+6 y \frac{d^2}{d y^2}
+\Big(8 y^3-6\Big)\frac{d}{d y}
-6 y^2,\quad 
{}{\mathcal{D}}_{2}^{\mathrm{G,\beta}}
:=4y^3\Big(y\frac{d}{d y}-1\Big).
\end{aligned}
\end{align*}
(Here, we have used the fact that $4\beta+\frac{16}{\beta}=20,\,\beta=1,4$ to simplify the formula.)
\end{corollary}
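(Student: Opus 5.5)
Substitute the expansion (\ref{2.8a}) into the differential equation (\ref{3.2}) of Proposition \ref{P3.2}, and equate coefficients of like powers of $(N'_{\rm s})^{-2/3}$. Write $\epsilon := (N'_{\rm s})^{-2/3}$. Then (\ref{2.8a}) reads
\[
\beta^{1/6}\rho_{(1),N}^{\mathrm{G}\beta\mathrm{E},s}(y)=\sum_{j\ge 0} c^j \epsilon^j\, r_j^{\mathrm{G},\beta}(\beta^{1/3}y), \qquad c:=(8\sqrt{\beta})^{-2/3}=\tfrac14\beta^{-1/3}.
\]
The operator in (\ref{3.2}) is the exact polynomial $\mathcal D_0+\epsilon\mathcal D_1+\epsilon^2\mathcal D_2$ in $\epsilon$, with $\epsilon$-independent coefficients. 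Applying it termwise (legitimate at the level of formal asymptotic expansions, since the expansion of the density may be differentiated termwise as is implicit in the setting of \cite{Bo25}) and collecting the coefficient of $\epsilon^j$ gives
\[
c^j\mathcal D_0 r_j+c^{j-1}\mathcal D_1 r_{j-1}+c^{j-2}\mathcal D_2 r_{j-2}=0 .
\]
Here the arguments $\beta^{1/3}y$ are suppressed, and $r_{-1}=r_{-2}=0$. Dividing by $c^j$ gives coefficients $c^{-1}=4\beta^{1/3}$ and $c^{-2}=16\beta^{2/3}$, which is exactly (\ref{eq-rgbeta}). The overall factor $\beta^{1/6}$ is harmless by linearity.

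For the equivalent form (\ref{3.14E}), change variables $z=\beta^{1/3}y$, so that $d/dy=\beta^{1/3}d/dz$, and track each monomial $y^k (d/dy)^m=\beta^{(m-k)/3}z^k(d/dz)^m$.

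In $\mathcal D_0^{\mathrm{G},\beta}$ every term picks up the factor $\beta^{5/3}$:
\begin{itemize}
\item $\tfrac4\beta\,\beta^{5/3}=4\beta^{2/3}$;
\item $20 y\,d^3$ gives $\beta^{2/3}$;
\item $12 d^2$ gives $\beta^{2/3}$;
\item $16\beta y^2 d$ gives $\beta\cdot\beta^{-1/3}=\beta^{2/3}$;
\item $8\beta y$ gives $\beta^{2/3}$.
\end{itemize}
So $\mathcal D_0^{\mathrm{G},\beta}=4\beta^{2/3}\tilde{\mathcal D}_0$ in the $z$ variable.

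Similarly, $\mathcal D_1^{\mathrm{G},\beta}$ becomes $\beta^{1/3}$ times the operator in $z$ with coefficients $-5,6,8z^3+\beta^{-1/3}\cdot\beta^{1/3}(14-4\beta-16/\beta),-6$. Using $4\beta+16/\beta=20$ for $\beta=1,4$, the constant becomes $-6$, and one checks that the powers of $\beta$ on each monomial agree at $\beta^{1/3}$.

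Finally, $\mathcal D_2^{\mathrm{G},\beta}=\beta y^3(y\,d/dy-1)$ becomes $\beta\cdot\beta^{-1}z^3(z\,d/dz-1)=z^3(z\,d/dz-1)$.

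Substituting into (\ref{eq-rgbeta}) gives
\[
4\beta^{2/3}\tilde{\mathcal D}_0 r_j+4\beta^{1/3}\beta^{1/3}\tilde{\mathcal D}_1 r_{j-1}+16\beta^{2/3}\, z^3(z\,d/dz-1)r_{j-2}=0 .
\]
Dividing by $4\beta^{2/3}$ yields (\ref{3.14E}) with $\tilde{\mathcal D}_2=4z^3(z\,d/dz-1)$, as stated.

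The only real obstacle is the bookkeeping of $\beta$-powers, and in particular confirming that the constant term in $\mathcal D_1$ becomes $\beta$-independent. This is where the restriction $\beta\in\{1,4\}$ enters, through $4\beta+16/\beta=20$. I would check each monomial explicitly. The justification of termwise application of (\ref{3.2}) to the asymptotic series I would simply take from the existence of the expansion established in \cite{Bo25}.
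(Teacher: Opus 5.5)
Your proposal is correct and follows the route the paper has in mind; the paper gives no separate proof and treats the corollary as an immediate consequence of Proposition \ref{P3.2}. You substitute (\ref{2.8a}) into (\ref{3.2}), equate coefficients of powers of $(N'_{\rm s})^{-2/3}$ using $(8\sqrt{\beta})^{-2/3}=\tfrac14\beta^{-1/3}$, and rescale $y\mapsto\beta^{-1/3}y$, under which $\mathcal D_0^{\mathrm{G},\beta}$, $\mathcal D_1^{\mathrm{G},\beta}$, $\mathcal D_2^{\mathrm{G},\beta}$ become $4\beta^{2/3}\tilde{\mathcal D}_0^{\mathrm{G},\beta}$, $\beta^{1/3}\tilde{\mathcal D}_1^{\mathrm{G},\beta}$ and $\tfrac14\tilde{\mathcal D}_2^{\mathrm{G},\beta}$, exactly as you found. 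One wording slip: the terms of $\mathcal D_0^{\mathrm{G},\beta}$ each end up carrying $\beta^{2/3}$, not $\beta^{5/3}$ (that is only the factor produced by $d^5/dy^5$), as your own itemised list correctly shows.
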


We see that the scaling of the variables has removed the $\beta$ dependence in the operators, making the corrections $r_j^{\rm G,\beta}(y) $ for the GOE/GSE described by the same system of equations. We can check from the differentiation formulas (\ref{2.Ra}) and (\ref{2.3aS}) that making an ansatz consistent with the polynomials in Table \ref{T2} for $j=1,2$ but with unknown coefficients, the latter is uniquely determined by the appropriate case of (\ref{3.14E}). Thus, as for the GUE, $r_j^{\mathrm{G}, \beta}(y)$ for $j=1,2$ are particular solutions distinct from the case $j=0$ which solves the corresponding homogeneous equation.

Similar to the GUE case, we introduce the bilateral Laplace transform
\begin{align}
    u_j^{\beta}(\gamma) := \int_{-\infty}^\infty e^{\gamma y} {}{r}_j^{\rm G, \beta}(y) \, dy, \quad {\rm Re} (\gamma) > 0.
\end{align}
Then from \eqref{3.14E}, we compute that $u_j^{\beta}(\gamma)$ satisfies
\begin{equation} \label{3.15}
\mathcal L_0^{\rm G,\beta}u_j^{\beta}(\gamma)+\mathcal L_1^{\rm G,\beta}u_{j-1}^{\beta}(\gamma)+\mathcal L_2^{\rm G,\beta}u_{j-2}^{\beta}(\gamma)=0,\quad u_{-1}^{\beta}(\gamma)=u_{-2}^{\beta}(\gamma)=0,
\end{equation}
where
\begin{align}\label{3.16}
\begin{aligned}
&\mathcal L_0^{\rm G,\beta}
=
18\gamma^2-\gamma^5
+5(\gamma^3-2)\frac{d}{d\gamma}
-4\gamma\frac{d^2}{d\gamma^2}
,\\[0.3em]
&\mathcal L_1^{\rm G,\beta}
=48\,\gamma
+36\gamma^2\frac{d}{d\gamma}
-5(6-\gamma^3)\frac{d^2}{d\gamma^2}
-8\gamma\frac{d^3}{d\gamma^3},\quad
\mathcal L_2^{\rm G,\beta}
=
-20\frac{d^3}{d\gamma^3}
-4\gamma\frac{d^4}{d\gamma^4}
.
\end{aligned}
\end{align}
Especially, the leading term satisfies a second order homogeneous equation
\begin{align}\label{3.17}
\bigg(18\gamma^{2}-\gamma^{5}\bigg)\,u_0^{\beta}(\gamma)
+5\bigl(\gamma^{3}-2\bigr)\,\frac{d}{d\gamma}u_0^{\beta}(\gamma)
-4\gamma\,\frac{d^2}{d\gamma^2}u_0^{\beta}(\gamma)
=0.
\end{align}
For this, one can verify that the two fundamental solutions (returned by the computer algebra package Mathematica) are 
given by
\begin{align}\label{3.18}
\exp\!\left(\frac{\gamma^3}{3}\right),\quad 
\frac{\exp\!\left(\frac{\gamma^3}{12}\right)}{3\,\gamma^{3/2}}
\left(
2
+\exp\!\left(\frac{\gamma^3}{4}\right)\sqrt{\pi}\,\gamma^{3/2}\,
\operatorname{Erf}\!\left(\frac{\gamma^{3/2}}{2}\right)
\right).
\end{align}
Supplementary to Proposition \ref{P1}, we evaluate the bilateral Laplace transform of the relevant basis functions ${\rm Ai}{\rm AI}_\nu $ and ${\rm Ai}'{\rm AI}_\nu$.
\begin{proposition}
    Let ${\rm Re} (\gamma) > 0$. For $\nu=0,1$ we have
    \begin{align}
    &h_4(\gamma):=\int_{-\infty}^{\infty} e^{\gamma y} {\rm Ai}(y) {\rm A I}_{\nu}(y) d y=\frac{1}{2} e^{\frac{\gamma^3}{3}}\left({\rm Erf}\left(\frac{\gamma^{3 / 2}}{2}\right)-1\right)+\nu e^{\frac{\gamma^3}{3}},\\
    &h_5(\gamma):=\int_{-\infty}^{\infty} e^{\gamma y} {\rm Ai}'(y) {\rm A I}_{\nu}(y) d y=-h_1(\gamma)-\gamma h_4(\gamma).
\end{align}
\end{proposition}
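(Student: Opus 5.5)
Both formulas follow from first-order differential equations in $\gamma$, obtained by differentiating under the integral and using (\ref{2.3aS}) together with the Airy equation. The constant of integration is then fixed by letting $\gamma\to 0^+$, where everything reduces to elementary integrals of the Airy function.

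\emph{Convergence.} As $y\to+\infty$ the factor ${\rm Ai}(y)$ decays like $e^{-\frac23 y^{3/2}}$. As $y\to-\infty$, ${\rm Ai}(y)$ and ${\rm Ai}'(y)$ are bounded by $O(|y|^{1/4})$ and ${\rm AI}_\nu(y)$ stays bounded, while $e^{\gamma y}$ decays. So for ${\rm Re}(\gamma)>0$ both integrals converge absolutely, boundary terms in integrations by parts vanish, and differentiation in $\gamma$ under the integral is justified.

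\emph{Formula for $h_5$.} I would do this first since it is purely algebraic. By (\ref{2.3aS}), $\frac{d}{dy}({\rm Ai}\,{\rm AI}_\nu) = {\rm Ai}'{\rm AI}_\nu + {\rm Ai}^2$. Multiply by $e^{\gamma y}$, integrate over $\mathbb R$, and integrate by parts on the left. This gives $-\gamma h_4 = h_5 + h_1$, which is the stated relation.

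\emph{First-order ODE for $h_4$.} Differentiate $h_5$ in $\gamma$: $h_5'(\gamma)=\int e^{\gamma y}\, y\,{\rm Ai}'{\rm AI}_\nu\,dy$. By the second identity in (\ref{2.3aS}), $y\,{\rm Ai}\,{\rm AI}_\nu = \frac{d}{dy}({\rm Ai}'{\rm AI}_\nu) - {\rm Ai}\,{\rm Ai}'$, so $h_4'(\gamma) = -\gamma h_5 - h_3 = -\gamma h_5 + \frac{\gamma}{2}h_1$, using Proposition \ref{P1}. Substituting $h_5=-h_1-\gamma h_4$ yields
\[
h_4'(\gamma) - \gamma^2 h_4(\gamma) = \tfrac{3\gamma}{2}\, h_1(\gamma) = \tfrac{3}{4\sqrt\pi}\,\gamma^{1/2}e^{\gamma^3/12}.
\]
The integrating factor is $e^{-\gamma^3/3}$, so
\[
\frac{d}{d\gamma}\big(e^{-\gamma^3/3}h_4\big) = \tfrac{3}{4\sqrt\pi}\,\gamma^{1/2}e^{-\gamma^3/4}.
\]
With $t=\gamma^{3/2}/2$ the right side is $\frac{d}{d\gamma}\tfrac12{\rm Erf}(\gamma^{3/2}/2)$. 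Hence $h_4 = e^{\gamma^3/3}\big(\tfrac12{\rm Erf}(\gamma^{3/2}/2)+C_\nu\big)$, which matches the claimed structure; it remains to show $C_\nu=\nu-\tfrac12$.

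\emph{Fixing $C_\nu$ (the main obstacle).} The limit $\gamma\to0^+$ gives $C_\nu=\int_{\mathbb R}{\rm Ai}\,{\rm AI}_\nu\,dy$, but this is delicate because the integrand is only conditionally integrable at $-\infty$. The integral of ${\rm Ai}$ over $\mathbb R$ (equal to $1$) is oscillatory and conditionally convergent at $-\infty$, and ${\rm AI}_\nu$ tends to $\nu-1$ there. I would justify passing to the limit by Abel summation, using the decay of ${\rm Ai}(y)$ at $+\infty$ and its oscillatory $|y|^{-1/4}$ behaviour at $-\infty$. Then, since ${\rm Ai}={\rm AI}_\nu'$,
\[
C_\nu=\tfrac12\big[{\rm AI}_\nu^2\big]_{-\infty}^{\infty}=\tfrac12\big(\nu^2-(\nu-1)^2\big)=\nu-\tfrac12,
\]
which agrees with the statement for $\nu=0,1$. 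As an alternative that avoids the limit at $\gamma=0$, I would fix $C_\nu$ from the $\gamma\to\infty$ asymptotics: there the integral is dominated by large $y$ and must grow no faster than $e^{\gamma^3/12}$. This forces $\tfrac12+C_\nu-\nu=0$, giving the same value.
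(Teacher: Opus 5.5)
Your proof is correct and follows the same route as the paper. Integration by parts gives $h_5=-h_1-\gamma h_4$, and a second integration by parts gives the first-order equation $h_4'-\gamma^2 h_4=\tfrac{3\gamma}{2}h_1$. The constant is then fixed by $\int_{\mathbb R}{\rm Ai}\,{\rm AI}_1\,dy=\tfrac12$, which your Abel-limit evaluation $\tfrac12[{\rm AI}_\nu^2]_{-\infty}^{\infty}=\nu-\tfrac12$ reproduces (and justifies) for both values of $\nu$.

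There are two cosmetic slips. First, the derivative you actually compute is $h_4'(\gamma)=\int e^{\gamma y}y\,{\rm Ai}\,{\rm AI}_\nu\,dy$, not $h_5'$. Second, in your $\gamma\to\infty$ alternative it is $h_4-\nu e^{\gamma^3/3}$, not $h_4$ itself, that grows no faster than $e^{\gamma^3/12}$; your condition $\tfrac12+C_\nu-\nu=0$ already uses this implicitly.
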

\begin{proof}
    The first evaluation is obtained by integration by parts and solving a first order differential equation. The constant is fixed by the equation $\int_{-\infty}^{\infty}{\rm Ai}(y){\rm AI}_1(y)dy={ 1 \over 2} $. The second only requires integration by parts.
\end{proof}
As an immediate result, we can deduce explicit forms for $u_j^{\beta}(\gamma) $ with small $j$ from that of $r_j^{\rm G,\beta}(y) $:
\begin{align}
   \begin{aligned}
        &u_0^{\beta}(\gamma)=\frac{1}{4}(2 \nu_{\beta}-1) e^{\gamma^3/3}+\frac{e^{\gamma^3/12}}{4 \sqrt{\pi} \gamma^{3 / 2}}\left(2+\sqrt{\pi} e^{\frac{\gamma^3}{4}} \gamma^{\frac{3}{2}} \operatorname{Erf}\left(\frac{\gamma^{\frac{3}{2}}}{2}\right)\right),\\
&u_1^{\beta}(\gamma)=-\frac{e^{\gamma^{3}/12}}{80\sqrt{\pi}\,\gamma^{5/2}}
\left(30+25\gamma^{3}+8\gamma^{6}\right)
-\frac{e^{\gamma^{3}/3}}{20}\,
\gamma^{2}\left(5+\gamma^{3}\right)
\left(-1+2\nu_\beta +\operatorname{Erf}\!\left(\frac{\gamma^{3/2}}{2}\right)\right),\\
&u_2^{\beta}(\gamma)=\frac{e^{\gamma^{3}/12}}{89600\,\sqrt{\pi}\,\gamma^{7/2}}
\left(
-21000
+37100\,\gamma^{3}
+79870\,\gamma^{6}
+19975\,\gamma^{9}
+896\,\gamma^{12}
\right)
\\
&\qquad +
\frac{e^{\gamma^{3}/3}}{1400}\,
\gamma
\left(
700
+875\,\gamma^{3}
+170\,\gamma^{6}
+7\,\gamma^{9}
\right)
\left(
-1+2\nu_\beta+\operatorname{Erf}\!\left(\frac{\gamma^{3/2}}{2}\right)
\right).
   \end{aligned}
\end{align}
Here, one sees that $u_0^\beta(\gamma)$ is a linear combination of the two fundamental solutions (\ref{3.18}) of (\ref{3.17}), and that $u_1^\beta(\gamma), u_2^\beta(\gamma)$ are particular solutions of the corresponding cases of the inhomogeneous differential equation (\ref{3.15}),
which do not contain an additive term proportional to either of the fundamental solutions. This property of $u_1^\beta(\gamma), u_2^\beta(\gamma)$ is then the same as that observed for $\beta = 2$
(GUE case).

\begin{remark} \label{R3.5} From \cite[\S 2.3]{RF21}, we have available a seventh order linear differential equation for the density of the Gaussian $\beta$ ensemble with $\beta =6$ and its dual $\beta = 2/3$. Changing variables to a soft edge scaling, this differential equation can be shown to be of a form analogous to (\ref{2.1}) and (\ref{3.2}), and thus to permit 
the solution (\ref{2.8aG}), which expands in powers of $1/(N'_{\rm s})^{2/3}$. Upon using computer algebra to follow the methodology of \cite{RF21} further, this observation can also be made for $\beta=8,10,12$ (the relevant differential equations are of order $\beta+1$) and their duals $\beta=1/2,2/5,1/3$. As an explicit example, for $\beta = 6$ 
and $j=0,1$, we have
\begin{equation}\label{3.23}
\mathcal{D}_0^{\mathrm{G}, 6} {r}_j^{\mathrm{G}, 6}(y)= - \mathcal{D}_1^{\mathrm{G}, 6} {r}_{j-1}^{\mathrm{G}, 6}(y), \quad   {r}_{-1}^{\mathrm{G}, 6}(y) = 0,
\end{equation}
where $\mathcal{D}_0^{\mathrm{G}, 6}$ is as reported in 
\cite[Th.~4.1]{RF21}, while a computer algebra aided calculation gives 
\begin{equation*}
\mathcal{D}_1^{\mathrm{G}, 6}=-42y^2\frac{d^5}{dy^5}+42y\frac{d^4}{dy^4}+12(98y^3-9)\frac{d^3}{dy^3}-1404y^2\frac{d^2}{dy^2}-324y(16y^3-5)\frac{d}{dy}+3456y^3-234;
\end{equation*}
equivalent forms for $\beta=2/3$ result from appropriate scaling. However, for these $\beta$, there is no evidence that the limiting density can be expressed in terms of Airy functions as known for $\beta = 1,2$ and $4$. Instead, from \cite{DF06}, there is a six-dimensional integral form for $\beta =6$, and no known explicit functional form for $\beta = 2/3$. Thus, there is no scope to express the functional forms of corrections in terms of a basis of explicit transcendental functions.
\end{remark}

\section{The LUE edge densities} \label{S4}

We know from \cite{GT05} and \cite{RF21} that the LUE density
$\rho_{(1), N}^{\rm LUE}(x)$ (note that in this notation, the dependence on the Laguerre parameter $a$ has been suppressed) satisfies a third order homogeneous linear differential equation
$ \mathcal{D}^{\rm LUE} \rho_{(1),  N}^{\rm LUE}(x)=0$,
where
\begin{align}\label{4.0}
    \mathcal{D}^{\rm LUE}:=x^3 {d^3 \over d x^3}+4 x^2 {d^2 \over d x^2}-\left[x^2-2(a+2 N) x+a^2-2\right] x {d \over d x} +\left[(a+2 N) x-a^2\right].
\end{align}
The analysis of this at the soft edge depends on the parameter $a$ being fixed, or being proportional to $N$.

\subsection{Soft edge --- the case \texorpdfstring{$a$}{a} fixed} \label{S4.1}

Considering first the case that $a$ is fixed relative to $N$, let us now introduce the particular soft edge scaling variable $y$ by
\begin{equation}\label{4.1}
x = 4N + 2a   + 2 (2N+ a )^{1/3} y;
\end{equation}
cf.~\cite[\S 7.2.2]{Fo10} with the shift $N \mapsto N+a/2$ motivated by a finding of the study \cite{FT19}.

With $N'_{{\rm s},a}:= N + a/2$, after dividing through by $(2N'_{{\rm s},a})^2$, a (computer algebra aided) calculation gives that the differential operator (\ref{4.0}) then has the terminating large $N'_{{\rm s},a}$ expansion
\begin{multline}\label{4.2}
{1 \over (2N'_{{\rm s},a})^2} \mathcal{D}^{\rm LUE}
\Big |_{x \mapsto 4N'_{{\rm s},a} + 2 (2N'_{{\rm s},a})^{1/3} y}
\\ = \mathcal{D}_{0}^{{\rm LUE, s},a} +
{1 \over (2N'_{{\rm s},a})^{2/3}}\mathcal{D}_{1}^{{\rm LUE, s},a} +{1 \over (2N'_{{\rm s},a})^{4/3}}\mathcal{D}_{2}^{{\rm LUE,s},a} + {1 \over (2N'_{{\rm s},a})^{2}}\mathcal{D}_{3}^{{\rm LUE,s},a},
\end{multline}
where
\begin{align}\label{4.3}
\begin{aligned}
& \mathcal{D}_{0}^{{\rm LUE,s},a} := {d^3 \over d y^3} - 4 y {d \over dy} + 2,
\quad
  \mathcal{D}_{1}^{{\rm LUE,s},a} := 3 y {d^3 \over d y^3} +
4 {d^2 \over dy^2} - 8 y^2 {d \over dy} + 2 y, \\
&
 \mathcal{D}_{2}^{{\rm LUE,s},a} := 3 y^2 {d^3 \over d y^3} +
8 y {d^2 \over dy^2} - (4 y^3 +  (a^2 - 2)) {d \over dy}, \\
& \mathcal{D}_{3}^{{\rm LUE,s},a} :=   y^3 {d^3 \over d y^3} +
4 y^2 {d^2 \over d y^2} - (a^2 - 2)  y{d \over dy} - a^2.
\end{aligned}
\end{align}
Note that $\mathcal{D}_{0}^{{\rm LUE,s},a}$ is the same operator as appearing on the left-hand side of (\ref{2.1}).
Setting $\rho_{(1),n}^{{\rm LUE,s},a}(y) := (2N'_{{\rm s},a})^{1/3} \rho_{(1), N}^{\rm LUE}(4 N'_{{\rm s},a} + 2 (2 N'_{{\rm s},a})^{1/3} y)$, the decomposition (\ref{4.2}) shows that the differential equation admits solutions with a large $N'_{{\rm s},a}$ asymptotic expansion of the form
\begin{equation}\label{4.4}
\rho_{(1),N}^{{\rm LUE,s},a}(y) = r_0^{{\rm L,s},a}(y) + (2N'_{{\rm s},a})^{-2/3} r_1^{{\rm L,s},a}(y) +
(2 N'_{{\rm s},a})^{-4/3} r_2^{{\rm L,s},a}(y) + \cdots,
\end{equation}
since (\ref{4.2}) depends on $N_{{\rm s},a}'$ only through integer powers of $(N_{{\rm s},a}')^{-2/3}$; cf.~(\ref{2.2}). From the earlier work \cite{GFF05}, we know that $r_0^{{\rm L,s},a}(y) =
r_0^{\rm G}(y)$ as specified by (\ref{1.15}), in keeping with the universality of the soft edge limit for matrix ensembles with unitary symmetry. In particular, this quantity is independent of $a$. Analogous to (\ref{2.3}), we have that successive correction terms are linked by coupled inhomogeneous differential equations.

\begin{proposition}
    For $k=0,1,\dots$, we have
   \begin{equation}\label{4.5} 
 \mathcal{D}_{0}^{{\rm LUE,s},a} r_k^{{\rm L,s},a}(y) = - \Big (
 \mathcal{D}_{1}^{{\rm LUE,s},a}   r_{k-1}^{{\rm L,s},a}(y) +
  \mathcal{D}_{2}^{{\rm LUE,s},a} r_{k-2}^{{\rm L,s},a}(y)   +
   \mathcal{D}_{3}^{{\rm LUE,s},a} r_{k-3}^{{\rm L,s},a}(y) \Big ),
\end{equation}
with $r_{-3}^{{\rm L,s},a}(y) = r_{-2}^{{\rm L,s},a}(y) = r_{-1}^{{\rm L,s},a}(y) = 0$.
   \end{proposition}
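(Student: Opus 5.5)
The plan is to substitute the expansion (\ref{4.4}) into the scaled differential equation and equate coefficients of like powers of $(2N'_{{\rm s},a})^{-2/3}$. First, since $\mathcal{D}^{\rm LUE}\rho_{(1),N}^{\rm LUE}(x)=0$ and $\rho_{(1),N}^{{\rm LUE,s},a}(y)$ differs from $\rho_{(1),N}^{\rm LUE}(x)$ only by the constant factor $(2N'_{{\rm s},a})^{1/3}$ and the affine change of variables (\ref{4.1}), we can apply the linear operator in (\ref{4.2}) to $\rho_{(1),N}^{{\rm LUE,s},a}(y)$. After dividing by $(2N'_{{\rm s},a})^2$ this gives
\begin{equation*}
\Big(\mathcal{D}_{0}^{{\rm LUE,s},a} + \epsilon\,\mathcal{D}_{1}^{{\rm LUE,s},a} + \epsilon^2\,\mathcal{D}_{2}^{{\rm LUE,s},a} + \epsilon^3\,\mathcal{D}_{3}^{{\rm LUE,s},a}\Big)\rho_{(1),N}^{{\rm LUE,s},a}(y) = 0, \qquad \epsilon := (2N'_{{\rm s},a})^{-2/3}.
\end{equation*}

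Next we verify the decomposition (\ref{4.2}) itself, which is the one genuine computation. The key observation is that $a+2N = 2N'_{{\rm s},a}$ exactly, so $\mathcal{D}^{\rm LUE}$ depends on $N$ only through $M:=2N'_{{\rm s},a}$, apart from the $a^2$ terms. Under $x = 2M + 2M^{1/3}y$ we have $d/dx = \tfrac12 M^{-1/3}\, d/dy$. We then expand each term and track powers.

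For the third-order term,
\begin{equation*}
x^3 \frac{d^3}{dx^3} = \tfrac18 M^{-1}(2M+2M^{1/3}y)^3 \frac{d^3}{dy^3} = M^2\big(1 + 3\epsilon y + 3\epsilon^2 y^2 + \epsilon^3 y^3\big)\frac{d^3}{dy^3}.
\end{equation*}
This accounts for the $d^3/dy^3$ entries of $\mathcal{D}_0,\dots,\mathcal{D}_3$. The same procedure applies to $4x^2\, d^2/dx^2$.

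The first-order bracket needs more care. We have $x^2 - 2Mx = x(x-2M) = (2M+2M^{1/3}y)\,2M^{1/3}y$, which reduces the leading $M^2$ contributions to order $M^{4/3}$. Combined with the prefactor $x\, d/dx$, this produces the terms $-4y\, d/dy$, $-8y^2\, d/dy$ and $-4y^3\, d/dy$. The remaining pieces come from $-(a^2-2)x\,d/dx$ and from the zeroth-order term $Mx - a^2 = 2M^2 + 2M^{4/3}y - a^2$. These supply the constants $2$ and $2y$ and the $a^2$ contributions appearing in $\mathcal{D}_2$ and $\mathcal{D}_3$.

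Collecting everything and dividing by $M^2$ should yield exactly the four operators in (\ref{4.3}), with no further terms. This bookkeeping is the main obstacle, although it is routine. The point to watch is that no half-integer or other stray powers of $M^{1/3}$ survive. That cancellation is guaranteed by the identity $x(x-2M)$ above, which is precisely why the shift $N\mapsto N+a/2$ was chosen.

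Finally, we insert $\rho_{(1),N}^{{\rm LUE,s},a} = \sum_{k\ge0}\epsilon^k r_k^{{\rm L,s},a}$ from (\ref{4.4}). The operators $\mathcal{D}_i^{{\rm LUE,s},a}$ are independent of $N$, and we assume, as in the GUE case (\ref{2.3}), that the asymptotic expansion may be differentiated term by term. The coefficient of $\epsilon^k$ is then
\begin{equation*}
\sum_{i=0}^{3}\mathcal{D}_i^{{\rm LUE,s},a}\, r_{k-i}^{{\rm L,s},a}(y) = 0,
\end{equation*}
with the convention $r_{-1}=r_{-2}=r_{-3}=0$. Moving the terms with $i\ge1$ to the right-hand side gives (\ref{4.5}).
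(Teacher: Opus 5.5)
Your proposal is correct and takes the same route as the paper. The paper obtains the terminating decomposition (\ref{4.2}) by computer algebra (your hand expansion, using $a+2N=2N'_{{\rm s},a}$ and $x(x-2M)=4M^{4/3}y+4M^{2/3}y^2$, reproduces all four operators in (\ref{4.3}) exactly), then substitutes (\ref{4.4}) and equates coefficients of powers of $(2N'_{{\rm s},a})^{-2/3}$; like the paper, you take the existence of (\ref{4.4}) and its term-by-term differentiability as given rather than proving them.
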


   From \cite{Bo25}, we know that the $r_j^{{\rm L,s},a}(y)$ admit the same structural expansion (\ref{2.R}) as for the $r_j^{\rm G}(y)$. 
   With regards to $r_1^{{\rm L,s},a}(y)$, we observe from (\ref{4.3}) that the differential operator $ \mathcal{D}_{1}^{{\rm LUE,s},a} $ is independent of the parameter $a$. It then follows from (\ref{4.5}) with $k=1$ that
   $r_1^{{\rm L,s},a}(y)$ is similarly independent of $a$. It turns out that the case $a=0$ of $r_1^{{\rm L,s},a}(y)$ is known from
   \cite[Eq.~(3.10a) with $\tau = 1$]{Bo24}, \cite[Th.~4.3 with $\tau = 1$]{Bo25} (see also Table \ref{T3} below). Hence, for all $a>-1$,
   \begin{equation}\label{4.6}  
   r_1^{{\rm L,s},a}(y) = {1 \over 5} \Big ( 3 y^2 ({\rm Ai}(y))^2 - 2 y
   ({\rm Ai}'(y))^2 + 2 {\rm Ai}(y) {\rm Ai}'(y) \Big ).
\end{equation}   
We can check that this functional form is consistent with the inhomogeneous differential equation 
(\ref{4.5}) with $k=1$. Now that we have knowledge of $r_0^{{\rm L,s},a}(y),
r_1^{{\rm L,s},a}(y)$, we can furthermore use (\ref{4.5}) with $k=2$ to specify the inhomogeneous differential equation satisfied by $r_2^{{\rm L,s},a}(y)$. We seek a solution of this equation with the structure (\ref{2.R}), and for the polynomials therein having the same form as those in Table \ref{T1} with $j=2$,
 \begin{equation}\label{4.7} 
 \alpha_2^{\rm L}(y) = a_3 y^3 + a_0, \quad \beta_2^{\rm L}(y) = b_2 y^2, \quad \gamma_2^{\rm L}(y) = c_4 y^4 + c_1 y.
\end{equation} 
Such a solution is unique as it does not contain an additive multiple of $r_0^{\rm L,s}$, which is the solution of the homogeneous portion of the differential equation. Carrying out the calculation gives
 \begin{equation}\label{4.8} 
 r_2^{{\rm L,s},a}(y) = \Big (- {96 \over 175} y^3 +  {4 - 2 a^2 \over 100}
 \Big ) ({\rm Ai}(y))^2  + {37 \over 175} y^2 ({\rm Ai}'(y))^2 -
 \Big (  {1 \over 25} y^4 + {74 \over 175} y \Big )  {\rm Ai}(y) {\rm Ai}'(y).
\end{equation}  
As with $r_1^{{\rm L,s},a}(y)$, the case $a=0$ of $r_2^{{\rm L,s},a}(y)$ is available in  \cite[Th.~4.3 with $\tau = 1$]{Bo25} (see also Table \ref{T3} below), and agreement is found with this specialisation of (\ref{4.8}).

Important in relation to the analogue of the first equation in (\ref{1.12}) for the fixed-$a$ Laguerre soft edge generating function expansion
 \begin{equation}\label{4.8x} 
 \mathcal E_{N,2}((4 N'_{{\rm s},a} + 2 (2N'_{{\rm s},a})^{1/2} y,\infty); x^a e^{-x};\xi) =
  \mathcal E_2^{{\rm s}} ((y,\infty);\xi) + {1 \over (2 N'_{{\rm s},a})^{2/3}} 
    \mathcal E_2^{1,{\rm s},a} ((y,\infty);\xi) + \cdots 
 \end{equation}
 is a second order differential operator mapping $r_0^{{\rm L,s},a}(y)$ to
 $r_1^{{\rm L,s},a}(y)$. Thus, we seek the fixed-$a$ Laguerre analogue of 
 (\ref{2.3a}) for the GUE.

 \begin{proposition}
     We have
      \begin{equation}\label{4.8y}
      {d^2 \over d y^2} r_0^{{\rm L,s},a}(y) = - 2 {\rm Ai}(y) {\rm Ai}'(y), \quad   {d \over d y} r_0^{{\rm L,s},a}(y) = - ( {\rm Ai}(y) )^2,
 \end{equation} 
 and consequently,
 \begin{equation}\label{4.8z}  - {1 \over 5} {d \over d y} \Big ( {d \over d y} 
 + y^2 \Big ) r_0^{{\rm L,s},a}(y) = r_1^{{\rm L,s},a}(y).
  \end{equation} 
 \end{proposition}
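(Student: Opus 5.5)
The plan is a direct computation in the closed basis $\{({\rm Ai})^2, ({\rm Ai}')^2, {\rm Ai}\,{\rm Ai}'\}$ using the differentiation rules (\ref{2.Ra}). First, recall that $r_0^{{\rm L,s},a}(y) = r_0^{\rm G}(y) = ({\rm Ai}'(y))^2 - y({\rm Ai}(y))^2$ by (\ref{1.15}) and the universality statement from \cite{GFF05}. Differentiating once with (\ref{2.Ra}) gives
\[
2y{\rm Ai}(y){\rm Ai}'(y) - ({\rm Ai}(y))^2 - 2y{\rm Ai}(y){\rm Ai}'(y) = -({\rm Ai}(y))^2 .
\]
This is the second identity in (\ref{4.8y}). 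Differentiating again, again by (\ref{2.Ra}), gives the first identity, $-2{\rm Ai}(y){\rm Ai}'(y)$.

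For (\ref{4.8z}), I will expand the operator as
\[
-\tfrac15\Big({d^2 \over dy^2} + 2y + y^2{d \over dy}\Big)r_0^{{\rm L,s},a}(y).
\]
Substituting (\ref{4.8y}), the action of the operator on $r_0^{{\rm L,s},a}$ becomes
\[
-2{\rm Ai}\,{\rm Ai}' + 2y\big(({\rm Ai}')^2 - y({\rm Ai})^2\big) - y^2({\rm Ai})^2 = -3y^2({\rm Ai})^2 + 2y({\rm Ai}')^2 - 2{\rm Ai}\,{\rm Ai}'.
\]
Multiplying by $-\tfrac15$ reproduces (\ref{4.6}) exactly. That formula is already established for all $a>-1$, since $\mathcal D_1^{{\rm LUE,s},a}$ is independent of $a$ and the case $a=0$ is known from \cite{Bo24,Bo25}.

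There is no real obstacle here: the argument is a verification. The only point needing care is that (\ref{4.8z}) rests on the identification (\ref{4.6}), which was fixed by the uniqueness argument for the particular solution of (\ref{4.5}) with $k=1$ in the basis (\ref{2.R}). I will simply cite that identification rather than re-derive it.

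As an optional consistency check, one could instead verify directly that the left-hand side of (\ref{4.8z}) satisfies (\ref{4.5}) with $k=1$. This reduces to an operator commutation computation against $\mathcal D_0^{{\rm LUE,s},a}$, but it is not needed.
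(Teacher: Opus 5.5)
Your proposal is correct and is the paper's own proof. You derive (\ref{4.8y}) from (\ref{1.15}) and (\ref{2.Ra}), then verify (\ref{4.8z}) by expanding the operator and comparing with (\ref{4.6}), and your computations check out. One minor correction: (\ref{4.6}) rests on the $a$-independence of $\mathcal{D}_1^{{\rm LUE,s},a}$ together with Bornemann's $a=0$ result, as your third paragraph says, and not on a uniqueness argument, since within the basis (\ref{2.R}) the solution of (\ref{4.5}) with $k=1$ is only determined up to an additive multiple of $r_0^{{\rm L,s},a}$.
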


 \begin{proof}
     To derive (\ref{4.8y}), we make use of (\ref{1.15}) and (\ref{2.Ra}).
     With this established, (\ref{4.8z}) can be verified after recalling
     that $r_1^{{\rm L,s},a}(y)$ is given by (\ref{4.6}).
 \end{proof}
   
From the assumption that $ \mathcal E_2^{\rm s} ((y,\infty);\xi)$ and
$ \mathcal E_2^{1,{\rm s},a} ((y,\infty);\xi)$ are related by a $\xi$-independent second order differential operator, it follows from (\ref{4.8z}) that
 \begin{equation}\label{4.8z1}  - {1 \over 5}  \Big ( {d \over d y} 
 + y^2 \Big ) {d \over d y} 
  \mathcal E_2^{\rm s} ((y,\infty);\xi) =  \mathcal E_2^{1,{\rm s},a} ((y,\infty);\xi).
  \end{equation} 
  This can be compared against the result of \cite[Eq.~(3.7a) with $\tau = 1$]{Bo24}, which corresponds to the case $a=0$ (note that
  (\ref{4.8z1}) is independent of $a$). Agreement is found.

\subsection{Right soft edge --- the case \texorpdfstring{$a$}{a} proportional to \texorpdfstring{$N$}{N}}\label{S4.2}

For $\beta = 2$, we have from (\ref{1.2}) that $a = n - N$.
Here, we set $n = \gamma N$, $\gamma \ge 1$ so that $a = (\gamma - 1) N$, and is thus 
proportional to $N$, as is consistent with earlier work \cite{Jo01,Bo24}. We moreover introduce the soft edge scaling variable $y$ according to
\begin{align}\label{vJ}
    x=(1+\sqrt{\gamma})^{2}\,N_{\rm s}'+\gamma^{-1/6}(1+\sqrt{\gamma})^{4/3}(N_{\rm s}')^{1/3} y.
\end{align}
(For $\beta = 2$, $N_{\rm s}' = N$; using $N_{\rm s}'$ in
(\ref{vJ}) is required later when the cases $\beta =1$ and
4 are considered.)
It is further convenient to introduce the modification of $N_{\rm s}'$,
\begin{align}\label{Nh}
    \hat{N}_{\rm s}'=(4\gamma)^{1/2}\tau N_{\rm s}', \quad \tau=\frac{4}{\sqrt{\gamma}+1/\sqrt{\gamma}+2}.
\end{align}
In the work \cite{Bo25}, the expansion parameter $h_{n,p}$, when specified at the right soft edge, relates to our $\hat{N}_{\rm s}'$ by $
    h_{n,p}=(\hat{N}_{\rm s}')^{-2/3}$ for $\beta=2$ and $
    h_{n,p}=(\sqrt{\beta}\hat{N}_{\rm s}')^{-2/3}$ for $\beta=1,4$. After a change of variables in \eqref{4.0}, the LUE density scaled at the right soft edge with $a$ proportional to $N$, denoted by $\rho^{{\rm LUE},sr}_{(1),N}(y)$, can be checked to satisfy the differential equation
\begin{equation}\label{4.17r}
\Bigg(
(\hat N_{\rm s}')^{2}\,{}{\mathcal D}^{\mathrm{LUE},sr}_{0}
+(\hat N_{\rm s}')^{4/3}\,{}{\mathcal D}^{\mathrm{LUE},sr}_{1}
+(\hat N_{\rm s}')^{2/3}\,{}{\mathcal D}^{\mathrm{LUE},sr}_{2}
+{}{\mathcal D}^{\mathrm{LUE},sr}_{3}
\Bigg)\rho^{\mathrm{LUE},sr}_{(1),N}(y)=0,
\end{equation}
where
\begin{align} \label{4.17X}
\begin{aligned}
{\mathcal D}^{\mathrm{LUE},sr}_{0}
={}&
\frac{d^{3}}{dy^{3}}
-4y\frac{d}{dy}
+2, \quad 
{\mathcal D}^{\mathrm{LUE},sr}_{1}
={}
\tau\left(3y\frac{d^{3}}{dy^{3}}
+4\frac{d^{2}}{dy^{2}}-4y^{2}\frac{d}{dy}-2y\right)
-4\left(y^{2}\frac{d}{dy}-y\right),  \\[0.6em]
{\mathcal D}^{\mathrm{LUE},sr}_{2}
={}&
\tau^2\bigg(3y^{2}\frac{d^{3}}{dy^{3}}
+8y\frac{d^{2}}{dy^{2}}
+2\frac{d}{dy}\bigg)-4\tau y^{3}\frac{d}{dy},
\quad
{\mathcal D}^{\mathrm{LUE},sr}_{3}
={}
\tau^3\bigg(y^{3}\frac{d^{3}}{dy^{3}}
+4y^{2}\frac{d^{2}}{dy^{2}}
+2y\frac{d}{dy}\bigg).
\end{aligned}
\end{align}
As expected by the universality of the limiting soft edge density, the differential operator ${\mathcal D}^{\mathrm{LUE},sr}_{0}$ is identical to that on the left-hand side of (\ref{2.1}).
Now introducing the asymptotic expansion in powers of $(\hat{N}'_{\rm s})^{-2/3}$
established in \cite{Bo25},
\begin{align}\label{4.18X}
    \rho^{{\rm LUE},sr}_{(1),N}(y) =r_0^{{\rm L,sr}}(y)+r_1^{{\rm L,sr}}(y)(\hat{N}_{\rm s}')^{-2/3}+r_2^{{\rm L,sr}}(y)(\hat{N}_{\rm s}')^{-4/3}+\dots,
\end{align}
coupled differential equations for $\{ r_k^{{\rm L,sr}}(y) \}$ are immediate.

\begin{proposition}\label{P4.3X}
We have 
\begin{align}\label{4.16a}
    \begin{aligned}
    &{\mathcal D}^{\mathrm{LUE},sr}_{0}r_k^{{\rm L,sr}}(y)+{\mathcal D}^{\mathrm{LUE},sr}_{1}r_{k-1}^{{\rm L,sr}}(y) +{\mathcal D}^{\mathrm{LUE},sr}_{2}r_{k-2}^{{\rm L,sr}}(y)
+{\mathcal D}^{\mathrm{LUE},sr}_{3}r_{k-3}^{{\rm L,sr}}(y)
=0,
\end{aligned}
\end{align}
where $r_{-3}^{{\rm L,sr}}(y)=r_{-2}^{{\rm L,sr}}(y)=r_{-1}^{{\rm L,sr}}(y)=0$. 
\end{proposition}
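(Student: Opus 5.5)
The plan is to substitute the asymptotic expansion (\ref{4.18X}) into the differential equation (\ref{4.17r}) and equate coefficients of like powers of $\hat N_{\rm s}'$.

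First, divide (\ref{4.17r}) through by $(\hat N_{\rm s}')^{2}$, which puts the operator in the form
\[
\mathcal D^{\rm LUE,sr}_0 + (\hat N_{\rm s}')^{-2/3}\mathcal D^{\rm LUE,sr}_1 + (\hat N_{\rm s}')^{-4/3}\mathcal D^{\rm LUE,sr}_2 + (\hat N_{\rm s}')^{-2}\mathcal D^{\rm LUE,sr}_3 .
\]
None of the operators (\ref{4.17X}) depends on $N$. They depend only on $y$ and on $\tau$, and $\tau$ is a function of $\gamma$ alone. Applying this operator to $\sum_{j\ge0} r_j^{\rm L,sr}(y)(\hat N_{\rm s}')^{-2j/3}$ and collecting terms, the coefficient of $(\hat N_{\rm s}')^{-2k/3}$ is
\[
\sum_{i=0}^{3}\mathcal D^{\rm LUE,sr}_i\, r^{\rm L,sr}_{k-i}(y),
\]
with the convention that $r_{-i}^{\rm L,sr}=0$ for $i\ge1$. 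Since the right-hand side of (\ref{4.17r}) is zero, each of these coefficients must vanish, and this is exactly (\ref{4.16a}).

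The step needing justification is the term-by-term manipulation. The expansion (\ref{4.18X}) is only asymptotic, so I will take from \cite{Bo25} that it holds in the strong sense: it is valid locally uniformly in $y$, and it may be differentiated term by term up to third order, each remainder after $K$ terms being $O((\hat N_{\rm s}')^{-2(K+1)/3})$ uniformly on compacts. This is the standard situation, since the $y$-derivatives of the scaled density are themselves expanded via the same Bessel/Airy asymptotics of the Laguerre polynomials. Alternatively, one can argue with distributions. Pair (\ref{4.17r}) with a test function $\phi\in C_c^\infty(\mathbb R)$ and move the derivatives onto $\phi$ via the formal adjoints. Then only the undifferentiated expansion (\ref{4.18X}) is needed, uniformly on ${\rm supp}\,\phi$. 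This yields (\ref{4.16a}) weakly, and since the $r_j^{\rm L,sr}$ are smooth (they have the Airy-product structure (\ref{2.R})), it also holds pointwise.

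Uniqueness of asymptotic expansions in the scale $\{(\hat N_{\rm s}')^{-2k/3}\}$ then allows the coefficients to be matched inductively in $k$. Truncate at order $k$, multiply by $(\hat N_{\rm s}')^{2k/3}$, use the already established relations of lower order, and let $N\to\infty$.

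The main obstacle is therefore not algebraic. It is ensuring that the expansion from \cite{Bo25} is valid with sufficient uniformity, or with derivatives, for the limit interchange. The distributional route reduces this to uniformity on compact $y$-sets, which is what I expect to be available. As a consistency check, the case $k=0$ gives $\mathcal D_0^{\rm LUE,sr} r_0^{\rm L,sr}=0$. This is satisfied by $r_0^{\rm L,sr}=r_0^{\rm G}$, in keeping with universality and with (\ref{2.4}).
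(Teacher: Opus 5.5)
Your proposal is correct and is essentially the paper's argument: the paper calls the relations (\ref{4.16a}) ``immediate'' once the expansion (\ref{4.18X}) is substituted into (\ref{4.17r}) and coefficients of $(\hat N_{\rm s}')^{-2k/3}$ are equated, exactly as you do after dividing through by $(\hat N_{\rm s}')^{2}$. Your distributional justification of the term-by-term step is a sound addition that the paper leaves implicit. Moving the derivatives onto a test function means only locally uniform validity of (\ref{4.18X}) is needed, and smoothness of the $r_j^{\rm L,sr}$ then recovers the relations pointwise.
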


From (\ref{4.16a}), it can be seen that unlike the fixed parameter case, the first correction term $r_{1}^{{\rm L,sr}}(y)$ now depends on the Laguerre parameter $a$ (through dependence on $\tau$). 
As for the Gaussian case, it is shown in \cite{Bo25} that the $r_j^{{\rm L,sr}}(y) $ have the structure
\begin{align}\label{rjLUEsr}
    r_j^{{\rm L,sr}}(y)= \alpha_j^{{\rm L,sr}}(y)  ( {\rm Ai}(y)  )^2 +
  \beta_j^{{\rm L,sr}}(y)  ( {\rm Ai}'(y)  )^2 + \gamma_j^{{\rm L,sr}}(y)
   {\rm Ai}(y)  {\rm Ai}'(y).
\end{align}
The low order coefficients, as calculated in \cite{Bo25}, are listed in Table \ref{T3}. 

\begin{table}[h!]
\centering
\renewcommand{\arraystretch}{1.2}
\begin{tabular}{c|c|c|c}
$j$ 
& $\alpha_j^{\rm L,sr}(y)$ 
& $\beta_j^{\rm L,sr}(y)$ 
& $\gamma_j^{\rm L,sr}(y)$ \\
\hline
0 
& $-y$ 
& $1$ 
& $0$ \\[0.2cm]

1 
& $\frac{3(2{\tau}-1)}{5}\,y^2$
& $-\frac{2(2{\tau}-1)}{5}\,y$
& $\frac{3-{\tau}}{5}$
\\[0.4cm]

2 
& $-\frac{214{\tau}^2-79{\tau}-39}
{175}\,y^3
+\frac{({\tau}-3)^2}
{100}$
&
$\frac{143{\tau}^2-103{\tau}-3}
{175}\,y^2$
&
$-\frac{(2{\tau}-1)^2}
{25}\,y^4
+\frac{29{\tau}^2-4{\tau}-99}
{175}\,y$
\end{tabular}
\caption{Explicit form of low order cases of the coefficients in \eqref{rjLUEsr}. Setting $\tau = 0$ reclaims the GUE results of Table \ref{T1} (see \cite[Remark 4.1]{Bo25}).}
\label{T3}
\end{table}

\noindent
Our point in relation to these explict functional forms and Proposition \ref{P4.3X} is that, as for the case of $a$ fixed considered in the previous subsection, both
$r_1^{{\rm L,sr}}(y)$ and $r_2^{{\rm L,sr}}(y)$ are particular solutions of the respective inhomogeneous differential equations (\ref{4.16a}). 

\begin{remark}
Analogous to (\ref{4.8z}), as noted in \cite{Bo25}, the corrections
$r_1^{\rm L,sr}(y)$, $r_2^{\rm L,sr}(y)$ can be related to
$r_0^{\rm L,sr}(y)$ by particular differential operators (of degree two and four respectively). Explicitly, for $r_1^{\rm L,sr}(y)$, as 
 can be checked using Table \ref{T3} and (\ref{1.15}),
 we have 
 \begin{equation}\label{4.8zA}  
 {1 \over 5 } {d \over d y}  \bigg ( {\tau - 3 \over 2} {d \over d y} 
 - (2 \tau - 1)  y^2 \bigg ) r_0^{\rm L,sr}(y) = r_1^{\rm L,sr}(y).
  \end{equation} 
  In keeping with \cite[Remark 4.1]{Bo25},  setting $\tau = 0$ reclaims (\ref{2.3a}).
\end{remark}

\subsection{Left soft edge --- the case \texorpdfstring{$a$}{a} proportional to \texorpdfstring{$N$}{N}}\label{S4.3}
In the context of the LUE, the left soft edge was first considered in \cite{BFP98}. With the same notation as in the last subsection, it is
realized by the scaling variable
\begin{align*}
    x=(1-\sqrt{\gamma})^{2}\,N_{\rm s}'-\gamma^{-1/6}(\sqrt{\gamma}-1)^{4/3}(N_{\rm s}')^{1/3} y.
\end{align*}
To fit the left soft edge case, we also modify the parameters
\begin{align}
    \hat{N}_{\rm s,l}' :=(4\gamma)^{1/2}\tau_{\rm l} N_{\rm s}',\quad {\tau}_{\rm l}:=\frac{4}{\sqrt{\gamma}+1/\sqrt{\gamma}-2}.
\end{align}
Denote by $\rho^{\mathrm{LUE,sl}}_{(1),N}(y)$ the left soft edge scaled LUE density. Starting with {\eqref{4.0}, and proceeding as in the calculation of (\ref{4.17r}), we find that
\begin{equation}\label{4.24}
\Bigg(
(\hat N_{\rm s,l}')^{2}\,{}{\mathcal D}^{\mathrm{LUE,sl}}_{0}
+(\hat N_{\rm s,l}')^{4/3}\,{}{\mathcal D}^{\mathrm{LUE,sl}}_{1}
+(\hat N_{\rm s,l}')^{2/3}\,{}{\mathcal D}^{\mathrm{LUE,sl}}_{2}
+{}{\mathcal D}^{\mathrm{LUE,sl}}_{3}
\Bigg)\rho^{\mathrm{LUE,sl}}_{(1),N}(y)=0,
\end{equation}
where
\begin{align}\label{4.25}
\begin{aligned}
    {}{\mathcal D}^{\mathrm{LUE},sl}_0
&=\frac{d^{3}}{dy^{3}}
-4y\frac{d}{dy}
+2,\quad
{}{\mathcal D}^{\mathrm{LUE,sl}}_1
=-\tau_{\rm l}\bigg(3y\frac{d^3}{dy^3}
+4\frac{d^2}{dy^2}-4y^2\frac{d}{dy}-2y\bigg)
-4y^2\frac{d}{dy}
+4y,\\
{}{\mathcal D}^{\mathrm{LUE,sl}}_2
&=\tau_{\rm l}^2\bigg(3y^2\frac{d^3}{dy^3}
+8y\frac{d^2}{dy^2}+2\frac{d}{dy}\bigg)
+4{\tau}_{\rm l} y^3\frac{d}{dy},\quad
{}{\mathcal D}^{\mathrm{LUE,sl}}_3
=-\tau_{\rm l}^3\bigg(y^3\frac{d^3}{dy^3}
+4y^2\frac{d^2}{dy^2}
+2y\frac{d}{dy}\bigg).
\end{aligned}
\end{align}

We see that (\ref{4.24}) is identical in structure to (\ref{4.17r}), while (\ref{4.25}) is identical to (\ref{4.17X}) with ${\tau} \mapsto - {\tau}_{\rm l}$. 
Hence, defining $\{ r_j^{\rm L,sl}(y)\}$ in analogy with (\ref{4.18X}), the coupled differential equations of Proposition \ref{P4.3X} remain valid with this replacement.
Assuming $r_j^{\rm L,sl}(y)$ for $j=1,2$ are determined as the particular solutions with no additive contribution from $r_0^{\rm L}(y)$
(as is the case for the right soft edge), it follows that
these functions have the form (\ref{rjLUEsr}) with coefficients as in
Table \ref{T3}, now setting $\tau = - {\tau}_{\rm l}$.

\subsection{The hard edge} \label{S4.4}
Generally in the Laguerre ensembles, the hard edge refers to the neighbourhood of the origin with the parameter $a$ fixed. In the case of the density with $\beta$ even, and the probability density function of the smallest eigenvalue for general $\beta > 0$ with $a$ even,
or for $\beta = 2$ with general $a > -1$, it was demonstrated in \cite{FT19} (in the latter setting, see also
\cite{EGP16,Bo16,PS16,HHN16}) that an optimal rate of convergence is obtained through the hard edge scaling
\begin{equation}\label{4.15}
x = {y \over 4 N_{\rm h}'}, \qquad N_{\rm h}':= N + {a \over \beta}.
\end{equation}
Our interest here is specific to the LUE, which corresponds to $\beta = 2$.
We know from \cite[Eqns.~(2.15)--(2.17)]{FT19} that the hard edge scaled density $\rho_{(1),N}^{\rm LUE,h}(y) = (1 / 4 N_{\rm h}')\rho_{(1),N}^{\rm LUE,h}(y/4 N_{\rm h}')$ has the large $N$ expansion
\begin{equation}\label{4.16}
\rho_{(1),N}^{\rm LUE,h}(y) = r_0^{\rm L,h}(y) + {1 \over (N_{\rm h}')^2} r_1^{\rm L,h}(y) + {\rm O} \bigg ( {1 \over (N_{\rm h}')^3} \bigg ),
\end{equation}
where $ r_0^{\rm L,h}(y), r_1^{\rm L,h}(y)$ are given in terms of Bessel functions according to 
\begin{align}\label{4.17}
& r_0^{\rm L,h}(y) = {1 \over 4} \Big ( (J_a(\sqrt{y}))^2 - J_{a+1}(\sqrt{y})
J_{a-1}(\sqrt{y}) \Big ), \nonumber \\
& r_1^{\rm L,h}(y) = - {1 \over 192} \Big (
(2 y + a^2) (J_a(\sqrt{y}))^2 + 4 \sqrt{y} J_a(\sqrt{y}) 
J_a'(\sqrt{y}) + y (J_a'(\sqrt{y}))^2 \Big ).
\end{align}
Here, the notation $J_a'(\sqrt{y})$ is for $J_a'(u)|_{u = \sqrt{y}}$.
We remark too that the limit law as specified by $r_0^{\rm L,h}(y)$ was already known from the  earlier work \cite{Fo93}.

Our general method of analysing the linear differential equation satisfied by the density allows us to easily demonstrate that the expansion
(\ref{4.16}) contains only even powers of $(N_{\rm h}')^{-1}$.

\begin{proposition}\label{P4.5}
With $\mathcal D^{\rm LUE}$ specified by (\ref{4.0}), we have
\begin{equation}\label{4.18}
\mathcal D^{\rm LUE} \Big |_{x = y/(4 N_{\rm h}')} =
y^3 {d^3 \over d y^3} +4 y^2 {d^2 \over d y^2} +( y-a^2+2) y {d \over d y} + {y \over 2}-a^2 - {y^3 \over (4 N_{\rm h}')^2} {d \over d y}.
\end{equation}
Hence, the hard edge density permits
the large $N_{\rm h}'$ expansion
\begin{equation}\label{4.4h}
\rho_{(1),N}^{\rm LUE,h}(y) = r_0^{\rm L,h}(y) + (N_{\rm h}')^{-2} r_1^{\rm L,h}(y) +
(N_{\rm h}')^{-4} r_2^{\rm L,h}(y) + \cdots
\end{equation}
(cf.~(\ref{4.16})).
 Further, the successive terms are related by the nested inhomogeous differential equations
\begin{equation}\label{4.4j}
\mathcal D_0^{\rm LUE,h} r_j^{\rm L,h}(y) = - \mathcal D_1^{\rm LUE,h} r_{j-1}^{\rm L,h}(y), \quad r_{-1}^{\rm L,h}(y) = 0 \: \: (j=0,1,2,\dots ),
\end{equation}
 where we have used the notation $\mathcal D_0^{\rm LUE,h} +  (N_{\rm h}')^{-2} \mathcal D_1^{\rm LUE,h}$ for the right-hand side of (\ref{4.18}).
\end{proposition}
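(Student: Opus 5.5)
The plan is to get (\ref{4.18}) by a direct change of variables in (\ref{4.0}), and then read off (\ref{4.4h}) and (\ref{4.4j}) by matching powers of $N_{\rm h}'$.

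For the change of variables, take $x = y/(4N_{\rm h}')$. Then $x\,d/dx = y\,d/dy$, and more generally $x^k d^k/dx^k = y^k d^k/dy^k$. Hence the first two terms of (\ref{4.0}) become $y^3 d^3/dy^3 + 4y^2 d^2/dy^2$ without change. The key observation is that for $\beta=2$ we have $a+2N = 2N_{\rm h}'$, by the definition of $N_{\rm h}'$ in (\ref{4.15}). This gives
\[
2(a+2N)x = y, \qquad (a+2N)x = \tfrac{y}{2}, \qquad x^2 = \frac{y^2}{(4N_{\rm h}')^2}.
\]
Substituting, the bracketed third term becomes
\[
-\Big[\tfrac{y^2}{(4N_{\rm h}')^2} - y + a^2 - 2\Big] y\,\frac{d}{dy} = (y - a^2 + 2)\,y\,\frac{d}{dy} - \frac{y^3}{(4N_{\rm h}')^2}\,\frac{d}{dy},
\]
and the final term becomes $y/2 - a^2$. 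Together these give exactly (\ref{4.18}). The point to emphasise is that all dependence on $N$ and $a$ jointly has been absorbed into $N_{\rm h}'$, and it enters only through the single power $(N_{\rm h}')^{-2}$. The overall prefactor $1/(4N_{\rm h}')$ in $\rho_{(1),N}^{\rm LUE,h}$ is irrelevant, since the equation is linear and homogeneous.

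For the expansion, I will take as given the existence of an asymptotic expansion of $\rho_{(1),N}^{\rm LUE,h}(y)$ in integer powers of $(N_{\rm h}')^{-1}$. This is what (\ref{4.16}) from \cite{FT19} supplies, including the absence of an $(N_{\rm h}')^{-1}$ term. Writing the expansion as $\sum_k (N_{\rm h}')^{-k}\tilde r_k$ and applying $\mathcal D_0^{\rm LUE,h} + (N_{\rm h}')^{-2}\mathcal D_1^{\rm LUE,h}$, the coefficient of $(N_{\rm h}')^{-k}$ gives
\[
\mathcal D_0^{\rm LUE,h}\tilde r_k = -\mathcal D_1^{\rm LUE,h}\tilde r_{k-2}, \qquad \mathcal D_1^{\rm LUE,h} = -\tfrac{1}{16}\,y^3\,\frac{d}{dy}.
\]
The even and odd indices therefore decouple, and the even chain, relabelled with $r_j^{\rm L,h} := \tilde r_{2j}$, is exactly (\ref{4.4j}).

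It remains to show that the odd chain vanishes. Since $\tilde r_1 = 0$ by (\ref{4.16}), the equation for $\tilde r_3$ is homogeneous, $\mathcal D_0^{\rm LUE,h}\tilde r_3 = 0$, and by induction every odd coefficient solves the homogeneous equation with data generated from zero.

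This is the step I expect to be the main obstacle: ruling out a nonzero homogeneous component in $\tilde r_3$, $\tilde r_5,\dots$. My approach is to characterise the solutions of $\mathcal D_0^{\rm LUE,h}u = 0$. This is the hard edge limit of the LUE equation, with solution space spanned by the products $J_aJ_a$, $J_aY_a$, $Y_aY_a$ evaluated at $\sqrt y$. I would then exclude all but the first using the small-$y$ behaviour $O(y^{a})$ inherited from the weight $x^a$. Finally I would exclude a multiple of $r_0^{\rm L,h}$ itself using a normalisation argument: for each fixed $N$ the density is an exact function, and its small-$y$ leading coefficient can be computed exactly as a rational function of $N_{\rm h}'$. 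If that exact expansion is even in $N_{\rm h}'$, the odd multiples must vanish.

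If this refinement proves awkward, the fallback is the weaker reading of ``permits'': the structure of (\ref{4.18}) is consistent with a solution containing only even powers, and the even coefficients are linked by (\ref{4.4j}), matching (\ref{4.16}).
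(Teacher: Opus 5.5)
\textbf{Comparison with the paper's route.} Your derivation of (\ref{4.18}) is the paper's: use $a+2N=2N_{\rm h}'$ and $x^k d^k/dx^k=y^k d^k/dy^k$. The derivation of (\ref{4.4j}) by matching powers is also the same.

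You differ on the evenness in (\ref{4.4h}). The paper does not argue via the ODE. Existence of the expansion comes from (\ref{E6}) with Proposition \ref{PE1}. Evenness comes from Proposition \ref{PE2}: the Kummer transformation (\ref{E3b}) makes the hard edge scaled kernel invariant under $N_{\rm h}'\mapsto -N_{\rm h}'$. The scalar prefactor $(4N_{\rm h}')^{-(a+1)}\Gamma(N+a+1)/\Gamma(N)$ is even, cf.~(\ref{EZ1}).

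Your route uses the decoupling of the odd chain instead. Since $\tilde r_1=0$, every odd coefficient solves the homogeneous equation. So it suffices to exclude the singular homogeneous solutions and then check evenness of one number: the exact coefficient of $y^a$ in $\rho^{\rm LUE,h}_{(1),N}$. From (\ref{E6}) this is $(4N_{\rm h}')^{-(a+1)}\Gamma(N+a+1)/(\Gamma(N)\Gamma(a+1)\Gamma(a+2))$, which is even by (\ref{EZ1}).

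This is a valid reduction. It shows that, given the ODE, evenness of the prefactor alone forces evenness of the whole expansion. It also supplies what the one-line remark ``since the operator gives zero, (\ref{4.4h}) follows'' leaves implicit, namely exclusion of homogeneous contributions at odd orders, so your worry there is justified. It is, however, longer than the Kummer argument, because you still need the Appendix D structure.

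\textbf{Two points need repair.}

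First, (\ref{4.16}) only gives the expansion up to ${\rm O}((N_{\rm h}')^{-3})$. It does not supply an expansion in powers of $1/N_{\rm h}'$ to all orders. You also need that expansion in a form that can be differentiated term by term and whose small-$y$ coefficients can be read off. Both must come from (\ref{E6}) and Proposition \ref{PE1}.

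Second, excluding the singular homogeneous solutions by the bound ${\rm O}(y^a)$ fails on part of the range $a>-1$. The indicial polynomial of $\mathcal D_0^{\rm LUE,h}$ at $y=0$ is $(s+1)(s-a)(s+a)$. For $-1<a<0$, the solution behaving like $y^{-a}$ (the $J_{-a}$ analogue of $r_0^{\rm L,h}$) is itself ${\rm O}(y^a)$, so your criterion does not remove it.

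You need the finer fact, again from (\ref{E6}), that each $\tilde r_k(y)$ is $y^a$ times a series in integer powers of $y$. This excludes the $y^{-a}$ solution unless $2a\in\mathbb Z$. The remaining case $a=-1/2$, where $y^{-a}=y^{a}\cdot y$, needs one more argument, e.g.\ continuity in $a$ of the expansion coefficients, which are analytic in $a$ by the Appendix D construction.

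With these inputs your argument goes through.
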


\begin{proof}
    The form (\ref{4.18}) is immediate from (\ref{4.0}).
  The existence of a well defined large $N$ expansion of the hard edge scaled density in inverse powers of $N$ is assured by (\ref{E6}) and the working of
Appendix D below. 
    Since the operator (\ref{4.18}) acting on the hard edge density
$\rho_{(1),N}^{\rm LUE,h}(y)$ gives zero, 
(\ref{4.4h}) follows.
Substituting this in the  rewritten differential equation gives (\ref{4.4j}).
    \end{proof}

    Using the Bessel function identities
\begin{equation}\label{4.18t}    
{2 a \over u} J_a(u) = J_{a-1}(u) + J_{a+1}(u), \quad
2 J_a'(u) = J_{a-1}(u) - J_{a+1}(u),
\end{equation}
we see that the $r_0^{\rm L,h}(y)$ as specified in (\ref{4.17}) can be rewritten as
\begin{equation}\label{4.19} 
r_0^{\rm L,h}(y) = {1 \over 4} \bigg (
\Big ( 1 - {a^2 \over y} \Big ) (J_a(\sqrt{y}))^2 + (J_a'(\sqrt{y}))^2 \bigg ).
\end{equation}
Comparing this with the functional form of $r_1^{\rm L,h}(y)$  in (\ref{4.17}) suggests that we seek solutions of the coupled equations
(\ref{4.4j}) of the form
\begin{equation}\label{2.RL}
 r_j^{\rm L, h}(y) = \alpha_j^{\rm L,h}(y) {1 \over y}  ( J_a(\sqrt{y})  )^2 +
  \beta_j^{\rm L,h}(y)  ( J_a'(\sqrt{y})  )^2  + \gamma_j^{\rm L,h}(y)
  {1 \over \sqrt{y}}  J_a(\sqrt{y})  J_a'(\sqrt{y}),
\end{equation} 
where $ \alpha_j^{\rm L,h}(y), \beta_j^{\rm L,h}(y),   \gamma_j^{\rm L,h}(y)$ are polynomials; cf.~(\ref{2.R}). The explicit forms of the latter can be read off from (\ref{4.19}) and the second equation in (\ref{4.17}); we record their values in Table \ref{T3}.

\begin{table}[h!]
\centering
\begin{tabular}{c|c|c|c} 
$j$ & $\alpha_j^{\rm L,h}(y)$ & $\beta_j^{\rm L,h}(y)$ & $\gamma_j^{\rm L,h}(y)$ \\
\hline
  0  &  ${1 \over 4} (y - a^2)$ & ${1 \over 4}$  & 0 \\[.2cm]
  1   & $- {1 \over 192} (2y + a^2)y$ & $-{1 \over 192}y$  & $-{1 \over 48}y$
  \end{tabular}
  \caption{Explicit forms of some low order cases of the coefficients in
  (\ref{2.RL}).}
  \label{T4}
  \end{table}

  The functional form (\ref{2.RL}) is compatible with the nested differential equations (\ref{4.4j}) since it remains closed under the derivative operation $y {d \over dy}$, or equivalently $\mathcal D_y:= {d \over d y} y$. The relevance of this property, to now be verified, is
  that the differential operator (\ref{4.18}) can be written as a polynomial of either of these operations.

  \begin{proposition}\label{P4.3}
Let $\mathcal D_y$ be as defined above. Let
\begin{equation}\label{4.20}
b_1(y) = {1 \over y}  ( J_a(\sqrt{y})  )^2, \quad
b_2(y) =  ( J_a'(\sqrt{y})  )^2, \quad
b_3(y) = {1 \over \sqrt{y}}   J_a(\sqrt{y})  J_a'(\sqrt{y}). 
\end{equation} 
We have
\begin{equation}\label{4.21}
\mathcal D_y b_1(y) = b_3(y), \quad
\mathcal D_y b_2(y) = - (y - a^2) b_3(y), \quad \mathcal D_y b_3(y) = 
{1 \over 2}(- (y - a^2)  b_1(y) +  b_2(y)).
\end{equation} 
\end{proposition}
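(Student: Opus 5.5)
The plan is a direct verification: change variable to $u=\sqrt{y}$, then use the Bessel differential equation to remove second derivatives. The key observation is that $\mathcal D_y f = \frac{d}{dy}\big(y f(y)\big)$. Each of $y b_1, y b_2, y b_3$ is a simple expression in $u$, namely
\begin{equation*}
y b_1 = (J_a(u))^2, \qquad y b_2 = u^2 (J_a'(u))^2, \qquad y b_3 = u J_a(u) J_a'(u).
\end{equation*}
We then differentiate using the chain rule $\frac{d}{dy} = \frac{1}{2u}\frac{d}{du}$.

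For $b_1$ no differential equation is needed. Differentiating gives $\frac{1}{2u}\cdot 2 J_a J_a' = J_a J_a'/u$, which is $b_3(y)$.

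For $b_2$ and $b_3$ a second derivative $J_a''$ appears. We eliminate it with Bessel's equation in the form
\begin{equation*}
J_a''(u) = -\frac{1}{u} J_a'(u) - \Big(1 - \frac{a^2}{u^2}\Big) J_a(u).
\end{equation*}
For $b_2$, differentiating gives $(J_a')^2 + u J_a' J_a''$. Substituting Bessel's equation, the $(J_a')^2$ terms cancel, leaving $-(u - a^2/u) J_a J_a'$. This equals $-(u^2-a^2)\, J_a J_a'/u = -(y-a^2) b_3(y)$.

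For $b_3$, differentiating gives $\frac{1}{2u}\big(J_a J_a' + u (J_a')^2 + u J_a J_a''\big)$. Substituting Bessel's equation, the $J_a J_a'$ terms cancel, leaving
\begin{equation*}
\frac{1}{2}\Big((J_a')^2 - \Big(1-\frac{a^2}{u^2}\Big)(J_a)^2\Big).
\end{equation*}
Since $\big(1-\frac{a^2}{u^2}\big)(J_a)^2 = (y-a^2)\, b_1(y)$ and $(J_a')^2 = b_2(y)$, this is $\frac{1}{2}\big(-(y-a^2) b_1(y) + b_2(y)\big)$, as claimed.

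No step is a genuine obstacle. The only point requiring care is the bookkeeping of powers of $u$ when converting back to $y$, for example $u - a^2/u = (y-a^2)/\sqrt{y}$. This is exactly what produces the polynomial coefficient $y-a^2$ multiplying the basis functions, and so confirms that the span of $b_1,b_2,b_3$ over polynomials is closed under $\mathcal D_y$. As a consistency check, we would verify that $r_0^{\rm L,h}(y)=\frac14\big((y-a^2)b_1(y)+b_2(y)\big)$ from (\ref{4.19}) is annihilated by $\mathcal D_0^{\rm LUE,h}$, after expressing the latter as a polynomial in $\mathcal D_y$.
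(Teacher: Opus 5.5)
Your proposal is correct and follows the same route as the paper: you write $\mathcal D_y f=\frac{d}{dy}(yf)$, change to $u=\sqrt{y}$, and eliminate $J_a''$ using Bessel's equation (\ref{4.22}). Your computations for all three identities check out, and the closing consistency remark about $r_0^{\rm L,h}$ is accurate but not needed for the proof.
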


\begin{proof}
These follow from the definitions and the rules for differentiation, making use too of the second order differential equation satisfied by the Bessel function, written in the form
\begin{equation}\label{4.22}
J_a''(u) = - {1 \over u^2} \Big ( u J_a'(u) + (u^2 - a^2) J_a(u) \Big ).
\end{equation}
\end{proof}

\begin{corollary} \label{Co4.7}
In addition to the differential relation (\ref{4.4j}) with $j=1$, we have
the explicit differential relation
\begin{equation}\label{4.22a}
r_1^{\rm L,h}(y) = - \Big ( {1 \over 12} \mathcal D_y^2 r_0^{\rm L,h}(y) +
{1 \over 48}  \mathcal D_y(y r_0^{\rm L,h}(y)) + {1 \over 24}  ( a^2 - 2)  \mathcal D_y r_0^{\rm L,h}(y) \Big ).
\end{equation} 
Let the generating function of there being $k$ ($k=0,1,2,\dots$) eigenvalues in the interval $(0,s)$ in the LUE be expanded for large $N_{\rm h}'$ according to
\begin{equation}\label{1.11j}
\mathcal E_{N,2}((0,y/(4N_{\rm h}'));x^a e^{-x};\xi)   = 
 \mathcal E_{2}^{\rm h}((0,y);\xi)  +
 {1 \over (N_{\rm h}')^2}  \mathcal E_{2}^{1,\rm L, h}((0,y);\xi)  + {1 \over (N_{\rm h}')^4}  \mathcal E_{2}^{2,\rm L, h}((0,y);\xi)  + \cdots;
 \end{equation}
 existence of an expansion of this form follows from Proposition \ref{PE2} below. Under the assumption
 that $ \mathcal E_{2}^{\rm h}((0,y);\xi)$ and
 $\mathcal E_{2}^{1,\rm L, h}((0,y);\xi)$ are related by a 
 $\xi$-independent second order differential operator applied to the former, we have
 \begin{equation}\label{1.11k}
   \mathcal E_{2}^{1,\rm L, h}((0,y);\xi) = - 
   \Big ( {1 \over 12}  y^2{d \over d y} + {1 \over 48}  y^2 + {1 \over 24}  a^2 y \Big ) {d \over d y}
    \mathcal E_{2}^{\rm h}((0,y);\xi).
  \end{equation}  
\end{corollary}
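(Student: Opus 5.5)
The plan is to verify (\ref{4.22a}) directly in the basis $b_1,b_2,b_3$ of Proposition~\ref{P4.3}, and then to lift it to (\ref{1.11k}) by integrating the density relation, as in the passage from (\ref{1.16}) to (\ref{1.17}) but in reverse.

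\emph{Step 1: verification of (\ref{4.22a}).} By (\ref{4.19}) and Table~\ref{T4}, $r_0^{\rm L,h}=\tfrac14\big((y-a^2)b_1+b_2\big)$. Applying (\ref{4.21}) gives
\[
\mathcal D_y r_0^{\rm L,h}=\tfrac14\big(b_1+(y-a^2)b_3-(y-a^2)b_3\big)=\tfrac14 b_1 .
\]
Here I use that $\mathcal D_y(p(y)f)=p\,\mathcal D_y f+y p'(y) f$ for a polynomial $p$. Applying (\ref{4.21}) again gives $\mathcal D_y^2 r_0^{\rm L,h}=\tfrac14 b_3$. In the same way,
\[
\mathcal D_y(y r_0^{\rm L,h})=y\,\mathcal D_y r_0^{\rm L,h}+y r_0^{\rm L,h}=\tfrac14\big((2y-a^2)y\,b_1+y\,b_2\big).
\]
Substituting into the right-hand side of (\ref{4.22a}) and collecting terms in $b_1,b_2,b_3$, I expect to obtain
\[
-\tfrac{1}{192}(2y+a^2)y\,b_1-\tfrac1{192}y\,b_2-\tfrac1{48}y\,b_3 .
\]
Note that the $(a^2-2)$ term of (\ref{4.22a}) is what converts $-2y+a^2\cdot$ into $2y+a^2$ in the $b_1$ coefficient. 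This is exactly Table~\ref{T4}, $j=1$, i.e.\ $r_1^{\rm L,h}$ from (\ref{4.17}). As a consistency check, one can also confirm that this expression solves (\ref{4.4j}) with $j=1$, since (\ref{4.18}) is a polynomial in $\mathcal D_y$ and $y$.

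\emph{Step 2: rewrite (\ref{4.22a}) as a total derivative.} Since $\mathcal D_y f=(yf)'$, the right-hand side of (\ref{4.22a}) equals $-\frac{d}{dy}B(y)$, where
\[
B(y)=\tfrac1{12}y\,(y r_0^{\rm L,h})'+\tfrac1{48}y^2 r_0^{\rm L,h}+\tfrac1{24}(a^2-2)y r_0^{\rm L,h}.
\]
Using $y(yr_0)'=y^2r_0'+yr_0$, the $-2$ in the last term cancels the extra $\tfrac1{12}yr_0$, so
\[
B(y)=\Big(\tfrac1{12}y^2\tfrac{d}{dy}+\tfrac1{48}y^2+\tfrac1{24}a^2y\Big)r_0^{\rm L,h}(y).
\]

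\emph{Step 3: integrate and lift to the generating function.} By the analogue of (\ref{1.13}) for the interval $(0,y)$, the coefficient of $\xi$ in (\ref{1.11j}) at each order in $(N_{\rm h}')^{-2}$ is, up to a common sign, $\int_0^y r_j^{\rm L,h}$. Integrating $r_1^{\rm L,h}=-B'$ from $0$ to $y$ requires $B(0)=0$. This holds because $r_0^{\rm L,h}(y)={\rm O}(y^{a})$ and $y r_0^{\rm L,h\prime}={\rm O}(y^{a})$ as $y\to0$, so each term of $B$ is ${\rm O}(y^{a+1})$, which tends to $0$ since $a>-1$. With $F_0=\int_0^y r_0^{\rm L,h}$ we have $r_0^{\rm L,h}=F_0'$, and so
\[
\int_0^y r_1^{\rm L,h}=-\Big(\tfrac1{12}y^2\tfrac{d}{dy}+\tfrac1{48}y^2+\tfrac1{24}a^2y\Big)\tfrac{d}{dy}F_0 .
\]

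Under the stated hypothesis, $\mathcal E_2^{1,\rm L,h}=L\,\mathcal E_2^{\rm h}$ with $L$ second order and $\xi$-independent. Comparing the $\xi^0$ terms, $L1=0$ because the $\xi^0$ term of $\mathcal E_2^{1,\rm L,h}$ vanishes; hence $L=p_2\frac{d^2}{dy^2}+p_1\frac{d}{dy}$. Comparing the $\xi^1$ terms, $L F_0$ must equal the expression just obtained, and this is achieved by the operator displayed in (\ref{1.11k}). This identification step is the main obstacle. Strictly, $(p_1,p_2)$ is fixed by the single function $F_0$ only if one regards the natural polynomial-coefficient candidate as forced, and I would take that as part of the hypothesis, in the same spirit as (\ref{4.8z1}). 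I would then corroborate the result by checking the $y\to0$ behaviour against the known hard edge generating function, or, if time permits, against the $\xi^2$ term.
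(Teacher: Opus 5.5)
Your route matches the paper's. You verify (\ref{4.22a}) in the basis $b_1,b_2,b_3$ via (\ref{4.21}), then obtain (\ref{1.11k}) by running the argument behind (\ref{1.17}) in reverse. Your Steps 2 and 3 are correct (the total-derivative form with $B$, the check $B(0)=0$, and $L1=0$ from the $\xi^0$ term), and they spell out the lift more carefully than the paper does.

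\textbf{Step 1 is wrong as written.} You state $\mathcal D_y(pf)=p\,\mathcal D_y f+y p' f$ correctly, but with $p=y-a^2$ the extra term is $y\,b_1$, not $b_1$. The correct values are
\[
\mathcal D_y r_0^{\rm L,h}=\tfrac14\,y\,b_1=\tfrac14\big(J_a(\sqrt y)\big)^2,\qquad
\mathcal D_y^2 r_0^{\rm L,h}=\tfrac14\,y\,(b_1+b_3)=\tfrac14\Big(\big(J_a(\sqrt y)\big)^2+\sqrt y\,J_a(\sqrt y)J_a'(\sqrt y)\Big),
\]
which are exactly the formulas (\ref{1.11m}) the paper uses. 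With your values $\tfrac14 b_1$ and $\tfrac14 b_3$, Table~\ref{T4} would not be reproduced:
\begin{itemize}
\item the right-hand side of (\ref{4.22a}) would have $b_3$-coefficient $-\tfrac1{48}$ instead of $-\tfrac1{48}y$;
\item its $b_1$-coefficient would be $-\tfrac1{192}\big((2y-a^2)y+2(a^2-2)\big)$.
\end{itemize}
Your $\mathcal D_y(y\,r_0^{\rm L,h})$ comes out right only because it silently uses the correct value $\tfrac14 y\,b_1$.

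With the corrected values, the $b_1$-coefficient is $-\tfrac1{192}\big(4y+(2y-a^2)y+2(a^2-2)y\big)=-\tfrac1{192}(2y+a^2)y$. The $\mathcal D_y^2$ and $(a^2-2)$ terms together flip $-a^2$ to $+a^2$, and their $\pm 4y$ contributions cancel. The $b_2$ and $b_3$ coefficients, $-\tfrac{y}{192}$ and $-\tfrac{y}{48}$, then agree with Table~\ref{T4}.

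\textbf{Identification step.} Here you make exactly the assumption the paper makes. The paper removes it through the $\sigma$-Painlev\'e III$'$ verification in Appendix C.

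If you restrict to polynomial coefficients, you can settle your uniqueness worry directly. Two admissible pairs $(p_1,p_2)$ would give $p_2\,(r_0^{\rm L,h})'+p_1\,r_0^{\rm L,h}=0$, making $(r_0^{\rm L,h})'/r_0^{\rm L,h}$ rational. That is impossible: $y^{-a}r_0^{\rm L,h}(y)$ is an entire function of order at most $1/2$ that is not a polynomial, since it grows exponentially in $\sqrt{|y|}$ along the negative axis. Such a function has infinitely many zeros, so its logarithmic derivative cannot be rational.
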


\begin{proof}
Using Proposition \ref{P4.3} and the form of $r_0^{\rm L,h}(y)$ as implied by Table \ref{T3}, we compute
 \begin{equation}\label{1.11m}
 \mathcal D_y r_0^{\rm L,h}(y) = {1 \over 4}  ( J_a(\sqrt{y})  )^2, \quad
 \mathcal D_y^2 r_0^{\rm L,h}(y) =  {1 \over 4} \Big ( \sqrt{y} J_a(\sqrt{y})
 J_a'(\sqrt{y}) + ( J_a(\sqrt{y}))^2 \Big ).
 \end{equation} 
 These allow (\ref{4.22a}) to be verified. The formula (\ref{1.11k}) then
 follows from the same argument as that leading to (\ref{1.17}).
\end{proof}

\begin{remark} \label{R4.8}
${}$ \\
1.~From \cite{Fo93} we have the simple exact result $\mathcal E_{N,\beta}((0,y/(4N));e^{-\beta x/2};\xi) |_{\xi = 1} = e^{-\beta y / 8}$. Thus, all correction terms in a large $N$ expansion are identically zero. Indeed, substituting 
\begin{equation}\label{5.37f}
\mathcal E_{2}^{\rm h}((0,y);\xi) |_{\xi = 1} = e^{- y / 4}
\end{equation}
in the right-hand side of (\ref{1.11k}) with $a=0$ (in keeping with the weight being $e^{-\beta x/2}|_{\beta=2}$) returns zero. Another simple exact result 
is that \cite[Prop.~10]{FT19}
\begin{multline}\label{5.37}
 \mathcal   E_{N,\beta}((0,y/(4N_{\rm h}'));x e^{-\beta x/2};\xi) |_{\xi = 1} =
e^{-\beta y / 8}  \, {}_0F_1 \Big ( {2 \over \beta}; {y \over 4} \Big ) + \\
{1 \over (N_{\rm h}')^2} {y \over 48} e^{-\beta y/8} \bigg (
-(1-1/\beta) \, {}_0F_1 \Big ( {2 \over \beta}; {y \over 4} \Big ) +
((1-1/\beta) + y \beta/8) \, 
 {}_0F_1 \Big ( {2 \over \beta}+1; {y \over 4} \Big ) \bigg ) + \cdots .
 \end{multline}
 In the case $\beta = 2$, we can check that the functional forms of the leading term and the first correction are consistent with (\ref{1.11k}) when setting $a=1$ (here, the weight is $xe^{-x}$). \\
 2.~Characterisations of both 
 $\mathcal E_{2}^{\rm h}((0,y);\xi)$ and
 $\mathcal E_{2}^{1,\rm L, h}((0,y);\xi)$ in terms of a particular $\sigma$-Painlev\'e III$'$ transcendent are known
 \cite{TW94,FW02,FT19}. This allows us to verify 
 (\ref{1.11k}), and thus establish its validity without any assumptions. The required working is sketched in Appendix C.
 \\
 3.~The generating function  $\mathcal E_{2}^{\rm h}((0,y);\xi)$ with $\xi = 1$ and the Laguerre parameter $a$ a positive integer permits an expression as a particular matrix average over Haar distributed unitary matrices of size $a$ (see, e.g., \cite[Eq.~(2.1)]{FW26}). This matrix in turn has an interpretation in terms of the length of the longest increasing subsequence of a random permutation \cite{Ra98}. In this context, the scaled $a \to \infty$ limit of $\mathcal E_{2}^{\rm h}((0,y)$ with $y \mapsto a^2 - 2 a (a/2)^{1/3} y$ is of interest \cite{BF03}. 
 The optimal expansion parameter and the resulting functional forms in the corresponding asymptotic expansion
 are given in \cite{BJ13,FM23,Bo24x,Bo24y}. The former for $\beta =2$ is $a^{-2/3}$. From the present viewpoint,
 this can be understood from  the operator decomposition 
\begin{equation}
{\mathcal D}^{\mathrm{LUE,h}} \Big |_{y \mapsto a^2 - 2 a (a/2)^{1/3} y} =
 a^{2}\,{}{\mathcal D}^{\mathrm{LUE,hs}}_{0}
+a^{4/3}\,{}{\mathcal D}^{\mathrm{LUE,hs}}_{1}
+a^{2/3}\,{}{\mathcal D}^{\mathrm{LUE,hs}}_{2}
+{}{\mathcal D}^{\mathrm{LUE,hs}}_{3}.
\end{equation}
Here, ${\mathcal D}^{\mathrm{LUE,h}}$ is the differential operator specified in Proposition \ref{P4.5}. For economy of space, we do not list the explicit forms of ${\mathcal D}^{\mathrm{LUE,hs}}_{i}$,
except to say that for $i=0$, this is equal to the operator on the left-hand side of (\ref{2.1}).
 \end{remark}

Not available in earlier literature is the evaluation of $r_2^{\rm L,h}(y)$. We would like to compute this here. The starting point is to use the form of the correlation kernel $K_N^{\rm LUE}(x,y)$ for the LUE specified in \cite[Eq.~(2.4)]{FT19}. The LUE density is the limit $x \to y$ of this. Making use of L'H\^{o}pital's rule and a Laguerre polynomial identity, we obtain
 \begin{align}\label{E6}
 \rho_{(1),N}^{\rm LUE}(x) & = {N! \over \Gamma(a+N)} x^a e^{-x} \Big (
 L_{N-1}^{(a)}(x)  L_{N}^{(a)}(x) -  L_{N-1}^{(a+1)}(x)  L_{N}^{(a-1)}(x) \Big ) \nonumber \\
 & = {\Gamma(N+a+1) \over (\Gamma(a+1))^2 \Gamma(N)} x^a e^{-x} \Big (
  {}_1 F_1 (-N+1;a+1;x)  {}_1 F_1 (-N;a+1;x) \nonumber \\
  &\qquad \qquad -{a \over a + 1}
   {}_1 F_1 (-N+1;a+2;x) {}_1 F_1 (-N;a;x) \Big ),
  \end{align}
  where the second equality makes use of the hypergeometric polynomial form of Laguerre polynomials given in (\ref{E1}) below.
  In relation to computing $r_2^{\rm L,h}(y)$, we have from (\ref{4.15}) with $\beta = 2$ and (\ref{4.16}) that we require the
  large $N$ expansion of (\ref{E6}) with
  $x = y/(4 N_{\rm h}') $. From Proposition \ref{PE1} of Appendix D, we have available the large
  $N$ form of the hypergeometric polynomials in (\ref{E6}) with $ x \mapsto x/N$. An appropriate $N$-dependent scaling of $x$ allows for the latter to be  used in (\ref{E6}),
  leading then to the sought evaluation.

  \begin{proposition}\label{P4.7}
  In the notation of (\ref{2.RL}), $r_2^{\rm L,h}(y)$ is specified by
  \begin{align}\label{E7i}
  &\alpha_2^{\rm L,h}(y) = {y \over 92 160} \Big ( 14 a^2 (-4+a^2) + 2 (48-9a^2) y - 26 y^2 \Big ), \nonumber \\
  &\beta_2^{\rm L,h}(y) = {y \over 92 160} \Big ( 192-128a^2+20a^4 +
  (-104+20a^2) y + 5 y^2 \Big ),  \nonumber \\
   &\gamma_2^{\rm L,h}(y) = {y \over 92 160} \Big ( 14 (-4 + a^2) + 6 y \Big ).
   \end{align}
  \end{proposition}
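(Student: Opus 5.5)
We will compute $r_2^{\rm L,h}(y)$ from the nested differential equations (\ref{4.4j}), using the explicit hypergeometric form (\ref{E6}) only to fix the single constant the equations leave free.

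\textbf{Step 1: an ansatz for $r_2^{\rm L,h}$.} Proposition \ref{P4.5} gives
$$\mathcal D_0^{\rm LUE,h} r_2^{\rm L,h} = - \mathcal D_1^{\rm LUE,h} r_1^{\rm L,h}, \qquad \mathcal D_1^{\rm LUE,h} = -\tfrac{1}{16} y^3 {d\over dy},$$
with $r_1^{\rm L,h}$ known from Table \ref{T4}. Following the pattern of Table \ref{T4}, we take the form (\ref{2.RL}) with $\alpha_2,\beta_2,\gamma_2$ equal to $y$ times polynomials of degree at most two in $y$. The coefficients of these polynomials are unknowns depending on $a$.

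\textbf{Step 2: reduce to linear algebra.} We write $\mathcal D_0^{\rm LUE,h}$ and $y^3\, d/dy$ as polynomials in $\mathcal D_y$ and multiplication by $y$. Since $y\,d/dy = \mathcal D_y - 1$, and $y^k d^k/dy^k$ is a polynomial in $y\,d/dy$, this is always possible. The relations (\ref{4.21}), together with the commutation rule $\mathcal D_y\, p(y) = p(y)\mathcal D_y + y p'(y)$, then express both sides as polynomial combinations of $b_1,b_2,b_3$. Because $b_1,b_2,b_3$ are linearly independent over polynomials, we equate coefficients of $y^k b_i$ and obtain a linear system for the unknowns. Solving it by computer algebra should give a unique solution up to adding a multiple $c\, r_0^{\rm L,h}(y)$. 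This multiple is allowed because $r_0^{\rm L,h}$ solves the homogeneous equation, and it fits the ansatz only through the terms $y b_1$, $a^2 b_1$ and $b_2$. We also expect the degree bound in the ansatz to be forced by matching top-degree terms.

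\textbf{Step 3: fix $c$, the main obstacle.} The earlier sections found that $c=0$ in analogous cases. The paper, however, notes a counterexample at the hard edge for the LOE, so here $c$ must be determined directly. We will use (\ref{E6}) with $x=y/(4N_{\rm h}')$ and $N_{\rm h}'=N+a/2$. Proposition \ref{PE1} of Appendix D supplies the large-$N$ expansion of ${}_1F_1(-N+k;\,b;\,x/N)$ in Bessel-type functions ${}_0F_1$, with corrections in powers of $1/N$. We rescale $x/N \to y/(4N_{\rm h}')$ by writing $N=N_{\rm h}'-a/2$ and re-expanding. We also expand the prefactor $\Gamma(N+a+1)/\Gamma(N)$ and the factor $e^{-x}$.

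It suffices to extract just one coefficient, preferably the small-$y$ behaviour, which can be computed from the terminating series without the full Bessel asymptotics. Concretely, we compare the coefficient of $y^{a+1}$ (relative to the leading $y^a$) in the $(N_{\rm h}')^{-4}$ term. This should be obtainable by expanding (\ref{E6}) directly as a power series in $x$ with exact $N$-dependent coefficients. Matching that coefficient against the Step 2 solution determines $c$.

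The hard part is keeping track of all the $N\mapsto N_{\rm h}'$ re-expansions to order $(N_{\rm h}')^{-4}$. As a check, the odd orders should vanish, consistent with Proposition \ref{P4.5}. A second check is to compare a few further small-$y$ coefficients, or to compare numerically for $a=0,1$ against the exact density, confirming (\ref{E7i}).
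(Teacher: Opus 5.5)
Your route is genuinely different from the paper's and is workable, but two steps need repair. The paper never solves (\ref{4.4j}). It expands the exact formula (\ref{E6}) at $x=y/(4N_{\rm h}')$ to order $(N_{\rm h}')^{-4}$. For this it uses Proposition \ref{PE1} for the hypergeometric polynomials, the re-expansion (\ref{E6a}) in $\epsilon=a/(2N_{\rm h}')$, the exponential factor and the prefactor (\ref{EZ1}). Computer algebra then puts the result into the form (\ref{2.RL}), and the differential equation appears only afterwards as a consistency check.

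You instead let (\ref{4.4j}) fix $r_2^{\rm L,h}$ modulo $c\,r_0^{\rm L,h}$ by linear algebra via (\ref{4.21}), and take a single number from (\ref{E6}). Write $[y^k]f$ for the coefficient of $y^k$. The needed number is cheap if you use the $y^a$ coefficient itself. Every ${}_1F_1$ in (\ref{E6}) equals $1$ at $x=0$, so the bracket there is $1/(a+1)$, and by (\ref{EZ1})
\[
[y^a]\, r_2^{\rm L,h} = \frac{4^{-(a+1)}}{\Gamma(a+1)\Gamma(a+2)}\cdot\frac{5a^6+12a^5-25a^4-60a^3+20a^2+48a}{5760},
\]
while $[y^a]\, r_0^{\rm L,h}=4^{-(a+1)}/(\Gamma(a+1)\Gamma(a+2))\neq 0$. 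No Bessel asymptotics are needed. The $y^{a+1}$ coefficient adds nothing. For $j\ge 1$, the $y^{a+1}$ coefficient of (\ref{4.4j}) gives $[y^{a+1}]=-[y^a]/(2(a+2))$, and (\ref{E6}) gives the same relation exactly for every $N$. So a ratio taken relative to the leading coefficient cannot determine $c$.

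The trade-off is this. The paper's computation does not presuppose the form (\ref{2.RL}). Yours is far shorter and explains that structure, but it rests on Proposition \ref{P4.5} and on the two points below.

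First, your ansatz of $y$ times polynomials cannot contain $c\,r_0^{\rm L,h}$, whose coefficients $\frac14(y-a^2)$ and $\frac14$ have constant terms. Solved literally, it silently imposes $c=0$, which is exactly what must be shown. Allow constant terms, obtain the one-parameter family, and then fix $c$.

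Second, (\ref{4.4j}) alone allows adding any of three homogeneous solutions, not just $r_0^{\rm L,h}$. Write $\mathcal D_0^{\rm LUE,h}=(\theta+1)(\theta-a)(\theta+a)+y(\theta+\frac12)$ with $\theta=y\,d/dy$; its indicial roots are $a,-a,-1$. By (\ref{E6}), each $r_j^{\rm L,h}$ is $y^a$ times a power series. For generic $a$ this leaves only $c\,r_0^{\rm L,h}$, and the remaining values of $a$ follow by continuity.

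Finally, carrying out your plan will not reproduce (\ref{E7i}) as printed: it yields $\beta_2^{\rm L,h}$ and $\gamma_2^{\rm L,h}$ interchanged, and your Step 3 check detects this. Let $A_0,B_0,C_0$ be the values at $y=0$ of $92160\,\alpha_2^{\rm L,h}/y$, $92160\,\beta_2^{\rm L,h}/y$ and $92160\,\gamma_2^{\rm L,h}/y$. Then the $y^a$ coefficient of (\ref{2.RL}) is $(A_0+a^2B_0+aC_0)/(92160\cdot 4^a\Gamma(a+1)^2)$. The exact value above requires $A_0+a^2B_0+aC_0=4a(a-1)(a-2)(a+2)(5a+12)$. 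The printed coefficients give $20a^6-114a^4+14a^3+136a^2-56a$ instead, while the swapped ones give exactly the required polynomial.

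The same problem shows up at $a=0$. There the printed formula has $y$-coefficient $124/92160$, although (\ref{4.4j}) forces it to vanish. Solving your system at $a=0$ gives $92160\,(\alpha_2,\beta_2,\gamma_2)=(96y^2-26y^3,\,-56y+6y^2,\,192y-104y^2+5y^3)$ plus a multiple of $r_0^{\rm L,h}$, and the $y^a$ match makes that multiple zero. So the paper's remark that no multiple of $r_0^{\rm L,h}$ occurs is correct; only the labels of $\beta_2^{\rm L,h}$ and $\gamma_2^{\rm L,h}$ are swapped.
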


  \begin{proof}
According to (\ref{E6}) and (\ref{E1}), we require use of Proposition \ref{PE1} with $\alpha = a, a \pm 1$. This is immediate, since the operators in (\ref{E2}) and (\ref{E2a}) are independent of $\alpha$. 
In the latter formulas it is convenient to make use of the derivative formula
 \begin{equation}\label{E6i}
 {\sf D}_x^k \, {}_0 F_1 (\alpha+1;-x) = {(-x)^k \over \prod_{l=1}^k (l + a)}
 \, {}_0 F_1 (\alpha + k + 1;-x),
  \end{equation}
  which is implemented by default in the computer algebra system Mathematica.
  With $x \mapsto x/N$, we use the results of Proposition \ref{PE1} to expand the difference of hypergeometric polynomials in 
  (\ref{E6}) in powers of $N^{-1}$. After this,
we must replace $x$ by ${y \over 4} (1 - \epsilon)$, $\epsilon := a/(2 N_{\rm h}')$,
and now expand in powers of $1/N_{\rm h}'$. In theory, this is carried out using a 
Taylor series, 
 \begin{equation}\label{E6a}
f((1-\epsilon)y/4) = f(y/4) - \epsilon {\sf D}_x f(x) \Big |_{x \mapsto y/4} + {\epsilon^2 \over 2!} {\sf D}_x^2 f(x) \Big |_{x \mapsto y/4} + \cdots.
 \end{equation}
In practice, this step can be carried out internally in Mathematica, which automatically evaluates the
derivatives according to (\ref{E6i}), as previously remarked.
Proceeding in this way, we arrive at the expansion in powers of $1/N_{\rm h}'$
of the combination of hypergeometric polynomials in (\ref{E6}) with $x$ replaced as specified. Again internally in Mathematica, we update this expansion by multiplying it by the expansion in powers of $1/N_{\rm h}'$ of $e^{-x/(4N_{\rm h}')}$. One observes that an expansion in terms of $1/(N_{\rm h}')^2$ results.

With regards  to the prefactor $\Gamma(N+a+1)/\Gamma(N)$, we first assume $a \in \mathbb Z_{\ge 0}$.  Under this circumstance,
\begin{equation}\label{EZ}
{\Gamma(N+a+1) \over \Gamma(N)} = N^{a+1} \prod_{l=0}^a \Big ( 1 + {l \over N} \Big ).
\end{equation}
Making use of (\ref{E3}), terms in the $1/N$ expansion are seen to be polynomials in $a$, thus allowing us to remove the restriction
$a \in \mathbb Z_{\ge 0}$ (see also \cite{TE51}). This latter expansion, multiplied by $(4N_{\rm h}')^{-(a+1)}$, then expanded in powers of $1/N_{\rm h}'$, shows
\begin{multline}\label{EZ1}
(4N_{\rm h}')^{-(a+1)} {\Gamma(N+a+1) \over \Gamma(N)} \\ = 4^{-(a+1)} \Big (
1 - {a^3 + 3 a^2 + 2 a \over 24 (N_{\rm h}')^2} + { 5 a^6 + 12 a^5 - 25 a^4 -60 a^3 + 20 a^2 + 48 a \over 5760 (N_{\rm h}')^4} + \cdots \Big ).
\end{multline}
(The working in the proof of Proposition \ref{PE2}  below assures that this expansion is in powers of $(N_{\rm h}')^{-2}$.)

Multiplying (\ref{EZ1}) by the expansion obtained in the paragraph before, and multiplying by the factor of $1/(\Gamma(a+1))^2$ as also appears in (\ref{E6}), then using the $\tt FullSimplify$ command in Mathematica, we obtain a form equivalent to (\ref{2.RL}) (specifically, it involves $J_{a-1}(\sqrt{y})$ instead of $J_{a}'(\sqrt{y})$, which can be eliminated by appropriate use of (\ref{4.18t}))
with the coefficients as specified in (\ref{E7i}).
\end{proof} 

\begin{remark} ${}$ \\
1.~Above (\ref{E6}) mention is made of the LUE correlation kernel
$K_N^{\rm LUE}(x,y)$. In subsection D.2 of Appendix D we make of its form in terms of hypergeometric polynomials. Moreover, we use that form to exhibit that the hard edge scaled large $N_{\rm h}'$ expansion is of the form
 \begin{equation}\label{E1X}
 \frac{1}{4 N_{\rm h}'} K_N^{\rm LUE}(x/4N_{\rm h}',y/4N_{\rm h}') = K_{\infty,0}^{\rm h}(x,y) +
 {1 \over (N_{\rm h}')^2} K_{\infty,1}^{\rm h}(x,y) +
  {1 \over (N_{\rm h}')^4} K_{\infty,2}^{\rm h}(x,y) + \cdots
\end{equation}
Since the correlation kernel fully determines the general $k$-point correlation functions, which in turn fully determines the generating function in (\ref{1.11j}) (see \cite[Eq.(9.1)]{Fo10}), this establishes the
validity of the expansion in (\ref{1.11j}). An analogous argument has been used in the recent work \cite{FS25} in relation to establishing an expansion in powers of $1/N^2$ for the spacing distributions of the bulk scaled circular ensembles.\\
2.~From the viewpoint of the inhomogeneous equation (\ref{4.4j}), the solution for $j=2$ given in Proposition \ref{P4.7} does not contain an additive multiple of $r_0^{\rm L,h}(y)$, and so can be characterised as the particular solution with this property. Moreover, a consistency check on Proposition \ref{P4.7} is that (\ref{4.4j}) with $j=2$ is indeed satisfied. This can be verified with the aid of computer algebra.

\end{remark}

\section{The LOE and LSE edge densities} \label{S5}

Denote by $\rho_{(1), \beta, N}^{{\rm L} \beta {\rm E}}(x)$ the density of the Laguerre $\beta$ ensemble; for even $\beta$, it is known from \cite{RF21} to satisfy a differential equation:
\begin{align}\label{5.1}
    \mathcal{D}_{N}^{\rm L\beta E} \rho_{(1), \beta, N}^{{\rm L} \beta {\rm E}}(x)=0,
\end{align}
where $\mathcal{D}_{N}^{\rm L\beta E}$ is a linear differential operator of order $\beta+1$. 
By a duality satisfied by the  spectral moments \cite{DE06,FRW17},
the same equation characterises the density with $\beta\to 4/\beta$,
up to mappings of $N$ and $a$.
Explicit forms of the operators are given for $\beta=1,2,4$ in \cite{RF21}; recall (\ref{4.0}) for the case of $\beta = 2$ (LUE) and see \cite[Eq.~(2.29)]{RF21}, with the proviso that the Laguerre weight is taken as $x^a e^{-x}$ instead of as
$x^a e^{-\beta x/2}$ as in (\ref{1.2}), for $\beta = 1,4$.
We now study the cases $\beta = 1$ (LOE) and
$\beta = 4$ (LSE). In the interest of efficiency of presentation, attention will be focused on two classes of edge setting (for the LUE, we considered four): the right soft edge with $a={\beta \over 2}(\gamma-1)N+({\beta \over 2}-1)$, as is relevant to Wishart matrices in multidimensional statistics
and considered in the present context in \cite{Bo24,Bo25}, and the
hard edge.

\subsection{Right soft edge --- the case \texorpdfstring{$a={\beta \over 2}(\gamma -1)N+({\beta \over 2}-1)$}{a=β(γ-1)N/2+(β/2-1)}}\label{S5.1}

Let $\hat{N}_{\rm s}'$ be as in \eqref{Nh}.
For this choice of $a$, changing variables in 
the differential operators \cite[Eq.~(2.29) with $x \mapsto \beta x / 2$]{RF21}
according to (\ref{vJ}) to give a soft edge scaling gives the form
\begin{equation}\label{5.9}
\Bigg(\sum_{k=0}^5
\frac{1}{(\hat{N}_{\rm s}')^{2k/3}}{\mathcal D}^{\mathrm{L},\beta,sr}_k
\Bigg)\rho_{(1),N}^{\mathrm{L\beta E},sr}(y)=0.
\end{equation}
The operators ${\mathcal D}^{\mathrm{L},\beta,sr}_k$,
determined with the aid of computer algebra, are listed as a proposition in the following.

\begin{proposition}
With reference to (\ref{5.9}), and with $\tau$ as in \eqref{Nh}, we have
\begin{align}
{\mathcal D}^{\mathrm{L},\beta,sr}_0&=\frac{4}{\beta }\frac{d^5}{dy^5}-20 y \frac{d^3}{dy^3}+12 \frac{d^2}{dy^2}+16 \beta  y^2\frac{d}{dy}-8 \beta  y, \label{5.3}
\\ {\mathcal D}^{\mathrm{L},\beta,sr}_1
&=\frac{20\tau}{\beta }\,y\frac{d^5}{dy^5}
+\frac{40\tau}{\beta }\frac{d^4}{dy^4}
-5(4+12\tau)\,y^2\frac{d^3}{dy^3}
+(24-52\tau)\,y\frac{d^2}{dy^2}.
 \nonumber
   \\&\quad
+16\beta(2+\tau)\,y^3\frac{d}{dy}
+2(16\tau-12)\frac{d}{dy}
-8\beta(3-\tau)\,y^2
,
\\ {\mathcal D}^{\mathrm{L},\beta,sr}_2&=
\frac{40 \tau^2 y^2}{\beta }\frac{d^5}{dy^5}
+\frac{160 \tau^2 y }{\beta }\frac{d^4}{dy^4}
+\left(\frac{93 \tau^2}{\beta }-60 \tau y^3-60 \tau^2 y^3\right) \frac{d^3}{dy^3} \nonumber\\
&\quad-\left(16 \tau +140 \tau^2\right) y^2 \frac{d^2}{dy^2}
+\left(16\beta y^3+32\beta \tau y^3-8\tau\right) y \frac{d}{dy} \nonumber\\
&\quad-\left(16\beta y^3-8\beta \tau y^3+16\tau-10\tau^2\right)
,
    \end{align}
    \begin{align}
 {\mathcal D}^{\mathrm{L},\beta,sr}_3&=\frac{40 \tau^3 y^3}{\beta }\frac{d^5}{dy^5}
+\frac{240 \tau^3 y^2}{\beta}\frac{d^4}{dy^4}
+\left(\frac{279 \tau^3}{\beta }-60 \tau^2 y^3-20 \tau^3 y^3\right) y \frac{d^3}{dy^3} \nonumber\\
&\quad+\left(\frac{38 \tau^3}{\beta }-104 \tau^2 y^3-76 \tau^3 y^3\right)\frac{d^2}{dy^2}
+\left(16\beta \tau y^3-8\tau^2-32\tau^3\right) y^2 \frac{d}{dy} \nonumber\\
&\quad-\left(12\tau^2-2\tau^3\right) y,
\\ {\mathcal D}^{\mathrm{L},\beta,sr}_4&=\frac{20 \tau^4 y^4}{\beta }\frac{d^5}{dy^5}
+\frac{160 \tau^4 y^3}{\beta}\frac{d^4}{dy^4}
+\left(\frac{279 \tau^4}{\beta }-20 \tau^3 y^3\right) y^2 \frac{d^3}{dy^3} \nonumber\\
&\quad+\left(\frac{76 \tau^4}{\beta }-64 \tau^3 y^3\right) y \frac{d^2}{dy^2}
-\left(\frac{\tau^4}{\beta }+24 \tau^3 y^3\right) \frac{d}{dy}
-4 \tau^3 y^2,
\\ {\mathcal D}^{\mathrm{L},\beta,sr}_5&=\tau^5\bigg(\frac{4 y^5}{\beta }\frac{d^5}{dy^5}+\frac{40 y^4 }{\beta}\frac{d^4}{dy^4}+\frac{93 y^3}{\beta}\frac{d^3}{dy^3}+\frac{38 y^2}{\beta}\frac{d^2}{dy^2}-\frac{y}{\beta
   }\frac{d}{dy}+\frac{1}{\beta }\bigg).\label{5.8}
\end{align}
\end{proposition}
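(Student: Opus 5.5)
The statement is a change of variables in a known fifth order operator, so the plan is a direct substitution organised so the terminating structure is visible.

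\textbf{Step 1: the starting operator.} Take the fifth order operator $\mathcal{D}_{N}^{\rm L\beta E}$ of \cite[Eq.~(2.29)]{RF21} for $\beta=1,4$, with the weight rescaled via $x\mapsto \beta x/2$ as indicated. Write it as
\[
\sum_{k=0}^5 P_k(x;N,a)\,\frac{d^k}{dx^k},
\]
where the $P_k$ are polynomials in $x$, $N$ and $a$.

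\textbf{Step 2: the substitution.} Substitute
\[
a=\tfrac{\beta}{2}(\gamma-1)N+\big(\tfrac{\beta}{2}-1\big).
\]
Express $N$ through $N_{\rm s}'=N+(\beta-2)/(2\beta)$, and hence through $\hat N_{\rm s}'=(4\gamma)^{1/2}\tau N_{\rm s}'$. Eliminate $\gamma$ in favour of $\tau$ using $\sqrt\gamma+1/\sqrt\gamma+2=4/\tau$, that is $(1+\sqrt\gamma)^2=4\sqrt\gamma/\tau$. Then perform the affine change of variables (\ref{vJ}). Under it, $d/dx=c^{-1}d/dy$ with
\[
c=\gamma^{-1/6}(1+\sqrt\gamma)^{4/3}(N_{\rm s}')^{1/3}.
\]
A key simplification is that
\[
x=(1+\sqrt\gamma)^2N_{\rm s}'\Big(1+\tau\,y\,(\hat N_{\rm s}')^{-2/3}\Big).
\]
This follows because $c/\big((1+\sqrt\gamma)^2N_{\rm s}'\big)=\tau(\hat N_{\rm s}')^{-2/3}$, which is checked directly from the definitions. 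Consequently every power of $x$ becomes a polynomial in $\tau y(\hat N_{\rm s}')^{-2/3}$. This explains the $\tau^k y^k$ pattern visible in the listed operators, for example $\mathcal{D}_5$ being $\tau^5$ times an Euler-type operator.

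\textbf{Step 3: normalise and collect.} Multiply the transformed operator by the overall power of $\hat N_{\rm s}'$ and constant that makes the leading term exactly $\mathcal{D}_0$ as in (\ref{5.3}), and expand in powers of $(\hat N_{\rm s}')^{-2/3}$.

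\textbf{Main obstacle.} The difficulty lies in the coefficients $P_k$, which carry $N$ and $a$ separately. After substitution, $\gamma$ (equivalently $\sqrt\gamma$) must cancel completely, leaving only $\tau$, and the expansion must terminate at $k=5$ with no half-integer or fractional remainders. This cancellation relies on the specific shifts $(\beta/2-1)$ in $a$ and $(\beta-2)/(2\beta)$ in $N_{\rm s}'$. I would therefore not expand blindly. Instead I would substitute $\sqrt\gamma=s$ and rewrite everything as a rational function of $s$. I would then verify, with computer algebra, that each coefficient of $(\hat N_{\rm s}')^{-2k/3}$ is symmetric under $s\mapsto 1/s$ and hence a polynomial in $\tau$.

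\textbf{Cross-checks.} Two consistency checks confirm the result. First, $\mathcal{D}_0$ agrees with $\mathcal{D}_0^{{\rm G},\beta}$ of (\ref{3.11}), as soft edge universality requires. Second, setting $\tau=0$ should formally reproduce the Gaussian decomposition of Proposition~\ref{P3.2}, with the appropriate rescaling of $N$, in line with the $\beta=2$ analogue relating Table~\ref{T3} to Table~\ref{T1}. Finally, I would confirm the listed operators by applying the sum to the known $r_0$ (the $j=0$ entries of Table~\ref{T2}) and checking that the leading order vanishes.
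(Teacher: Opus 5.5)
Your proposal is correct and follows essentially the paper's route: the paper likewise obtains these operators by substituting $a=\frac{\beta}{2}(\gamma-1)N+(\frac{\beta}{2}-1)$ and the soft edge variable (\ref{vJ}) into the fifth order operator of \cite[Eq.~(2.29)]{RF21} (after $x\mapsto \beta x/2$), and then collecting powers of $(\hat N_{\rm s}')^{-2/3}$ with computer algebra. Your identity $x=(1+\sqrt\gamma)^2N_{\rm s}'\big(1+\tau y(\hat N_{\rm s}')^{-2/3}\big)$ is correct, and it sharpens to $x=2\hat N_{\rm s}'/\tau^2+2(\hat N_{\rm s}')^{1/3}y/\tau$, so the change of variables involves only $\tau$ and $\hat N_{\rm s}'$, and any residual $\gamma$-dependence can enter only through the $N$- and $a$-dependent coefficients, as you identify.
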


Comparing (\ref{5.3}) with the first operator in (\ref{3.11})
shows that they are equal: ${\mathcal D}^{\mathrm{L},\beta,sr}_0 = \mathcal D_0^{\rm G, \beta}$. This is in keeping with the limiting soft edge density being the same for the Gaussian and Laguerre ensembles, depending only on the symmetry class ($\beta$ value).
A feature of the explicit forms of ${\mathcal D}^{\mathrm{L},\beta,sr}_k$ for $k = 1,2$ is
 that for ${\tau} \to 0$, $ {\mathcal D}^{\mathrm{L},\beta,sr}_k \to {\mathcal D}^{\mathrm{G},\beta}_k$,
where the ${\mathcal D}^{\mathrm{G},\beta}_k$ are as specified in Proposition \ref{P3.2} (${\mathcal D}^{\mathrm{G},\beta}_k=0$ for $k>2$).
This property is consistent with \cite[Remark 4.1]{Bo25}.

One observes too that analogous to a property of the differential operators (\ref{3.11}), we have that the scaled operators
\begin{equation}\label{3.11L}
\tilde{\mathcal D}_k^{\rm L,\beta} := \beta^{(k-2)/3} 
{\mathcal D}_k^{{\rm L}, \beta, sr} \Big |_{y \mapsto \beta^{-1/3} y}
\end{equation}
are independent of $\beta$. This suggests proceeding analogous to (\ref{2.8a}) and introducing the scaled expansion
\begin{equation}\label{2.8aL} 
 \beta^{\frac{1}{6}}\rho_{(1),N}^{\mathrm{L},\beta,sr}(y) =
 {}{r}_0^{\rm L,\beta,sr}(\beta ^{\frac{1}{3}}y) + (\sqrt{\beta}\hat{N}'_{\rm s})^{-2/3} {}{r}_1^{\rm L,\beta,sr}(\beta ^{\frac{1}{3}}y) +
(\sqrt{\beta}\hat{N}'_{\rm s})^{-4/3} {}{r}_2^{\rm L,\beta,sr}(\beta ^{\frac{1}{3}}y) + \cdots.
 \end{equation}
From this expansion, the analogue of (\ref{3.14E}) in Corollary \ref{C3.3} can be formulated.

 \begin{corollary}
     For $j=0,1,\dots$ we have
\begin{equation}\label{2.8aL1} 
 \sum_{k=0}^5    \tilde{\mathcal D}_k^{\rm L,\beta} \, {r}_{j-k}^{\rm L,\beta,sr}(y) = 0,
  \end{equation}    
  where ${r}_{l}^{\rm L,\beta,sr}(y) = 0$ for $l < 0$.
 \end{corollary}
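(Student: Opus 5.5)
The plan is to substitute the scaled expansion (\ref{2.8aL}) directly into the terminating operator identity (\ref{5.9}) and collect powers of $(\hat N_{\rm s}')^{-2/3}$, mirroring the derivation of Corollary \ref{C3.3} from Proposition \ref{P3.2}. The work splits into three steps: a change of variables, a rescaling of the expansion parameter, and the collection of coefficients.

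\textbf{Change of variables.} Put $z=\beta^{1/3}y$ and write $f_j(y):=r_j^{\rm L,\beta,sr}(\beta^{1/3}y)$. A differential operator $P(y,d/dy)$ acting on $f_j$ equals the operator $P(\beta^{-1/3}z,\beta^{1/3}d/dz)$ acting on $r_j^{\rm L,\beta,sr}(z)$. Comparing with the definition (\ref{3.11L}), this gives
\[
\mathcal D_k^{{\rm L},\beta,sr} f_j(y) = \beta^{(2-k)/3}\, \big(\tilde{\mathcal D}_k^{\rm L,\beta} r_j^{\rm L,\beta,sr}\big)(z).
\]
The observation that $\tilde{\mathcal D}_k^{\rm L,\beta}$ does not depend on $\beta$ is not needed for this step. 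It only matters afterwards, for interpreting the equations as common to $\beta=1,4$.

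\textbf{Rescaling the expansion parameter.} Write $\epsilon:=(\sqrt\beta\,\hat N_{\rm s}')^{-2/3}$, so that $(\hat N_{\rm s}')^{-2k/3}=\beta^{k/3}\epsilon^{k}$. Multiply (\ref{5.9}) by $\beta^{1/6}$ and insert (\ref{2.8aL}). The left-hand side becomes
\[
\sum_{k=0}^5\sum_{j\ge0}\beta^{k/3}\epsilon^{k}\,\epsilon^{j}\,\beta^{(2-k)/3}\big(\tilde{\mathcal D}_k^{\rm L,\beta} r_j^{\rm L,\beta,sr}\big)(z)
=\beta^{2/3}\sum_{m\ge0}\epsilon^m\sum_{k=0}^5\tilde{\mathcal D}_k^{\rm L,\beta}r_{m-k}^{\rm L,\beta,sr}(z).
\]
The $\beta$-powers cancel exactly, and the normalisation $\beta^{(k-2)/3}$ in (\ref{3.11L}) is precisely what makes this happen.

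\textbf{Collecting coefficients.} Since the expansion is asymptotic in $\epsilon$ and the operators in (\ref{5.9}) are $N$-independent and finite in number, each coefficient of $\epsilon^m$ must vanish. After relabelling $m\to j$ and using the convention $r_l^{\rm L,\beta,sr}=0$ for $l<0$, this is exactly (\ref{2.8aL1}).

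\textbf{Main obstacle.} The delicate point is justifying term-by-term differentiation of the asymptotic expansion (\ref{2.8aL}), taken from \cite{Bo25}. This requires knowing that the expansion holds with derivatives up to fifth order, uniformly for $y$ in compacts. I would invoke the structure of the corrections established there: they are polynomial combinations of $\mathrm{Ai}^2$, $(\mathrm{Ai}')^2$, $\mathrm{Ai}\,\mathrm{Ai}'$, $\mathrm{Ai}\,\mathrm{AI}_\nu$ and $\mathrm{Ai}'\,\mathrm{AI}_\nu$, and the expansion is derived from differentiable kernel expansions. Given that, differentiability of the expansion follows, as was implicitly assumed for Corollary \ref{C3.3}. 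The remaining step, checking the scaling identity above for each of the six operators, is a routine computer-algebra check.
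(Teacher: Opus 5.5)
Your argument is correct and follows the route the paper intends when it says the analogue of (\ref{3.14E}) "can be formulated". You substitute the scaled expansion (\ref{2.8aL}) into (\ref{5.9}), use the normalisation $\beta^{(k-2)/3}$ in (\ref{3.11L}) together with $(\hat N_{\rm s}')^{-2k/3}=\beta^{k/3}(\sqrt\beta\,\hat N_{\rm s}')^{-2k/3}$ to reduce to the common factor $\beta^{2/3}$, and equate coefficients of powers of $(\sqrt\beta\,\hat N_{\rm s}')^{-2/3}$, as for Corollary \ref{C3.3}. The paper simply assumes the expansion from \cite{Bo25} can be differentiated term by term, so your closing remark on that point adds to the paper's argument rather than departing from it.
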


As is consistent with the fact that $\tilde{D}_0^{\rm L, \beta} =
\tilde{D}_0^{\rm G, \beta}$, one has for both $\beta = 1$ and
$4$ that ${r}_{0}^{\rm L,\beta,sr}(y) =
{r}_{0}^{\rm G,\beta}(y)$, with the latter as specified  by (\ref{rGbeta}) and Table \ref{T2}.
Moreover, we know from \cite{Bo25} that the
analogue to (\ref{rGbeta}) holds:
\begin{align}\label{rGbetaL}
      \begin{aligned}
          {}{r}_j^{{\rm L},\beta,sr}(y) =& {}{\alpha}_j^{\rm L,\beta,sr}(y)  ( {\rm Ai}(y)  )^2 +
  {}{\beta}_j^{\rm L,\beta,sr}(y)  ( {\rm Ai}'(y)  )^2 + {}{\gamma}_j^{\rm L,\beta,sr}(y)
   {\rm Ai}(y)  {\rm Ai}'(y)\\
   &+{}{\xi}_j^{\rm L,\beta,sr}(y){\rm Ai}(y){\rm AI}_\nu(y)+{}{\eta}_j^{\rm L,\beta,sr}(y){\rm Ai}'(y){\rm AI}_\nu(y)
      \end{aligned}
 \end{align}
 for certain polynomial coefficients
 ${}{\alpha}_j^{\rm L,\beta,sr}(y),\dots,{}{\eta}_j^{\rm L,\beta,sr}(y)$. For $j=0,1,2$, these are specified explicitly in \cite[Th.~4.1]{Bo25}. For $j=0$, we read these polynomials off from the first row of Table \ref{T2}.
 For $j=1$, they are specified by
 \begin{align}\label{5.12a} 
&{}{\alpha}_1^{\rm L,\beta,sr}(y) = {2 \tau - 1 \over 2} y^2,
\quad {}{\beta}_1^{\rm L,\beta,sr}(y) = - {2(2 \tau - 1) \over 5} y,\quad {\gamma}_1^{\rm L,\beta,sr}(y) =  {3 - \tau  \over 10} , \: \:
\nonumber \\& {\xi}_1^{\rm L,\beta,sr}(y) = - {3  \tau + 1 \over 10} y, \quad {\eta}_j^{\rm L,\beta,sr}(y)= -  {2 \tau - 1 \over 10 } y^2.
 \end{align}
 Setting $\tau = 0$ reclaims the values of these coefficients
 for $\beta = 1,4$ as specified in Table \ref{T2};
 recall the discussion in the paragraph below (\ref{5.8}).
 They are lengthier again for $j=2$ and we refer to \cite{Bo25} for their explicit form. Results from \cite[\S 6]{Bo24} and \cite{Bo25} also relate ${r}_j^{{\rm L},\beta,sr}(y)$ for $j=1,2$ (and $j=3$ too) to ${r}_0^{{\rm L},\beta,sr}(y)$ via a differential operator. For example,
 \begin{equation}
{d \over dy} \bigg ( - \Big( {2 \tau - 1 \over 5} \Big ) y^2 + { \tau - 3 \over 5}{d \over dy} \bigg )
 {r}_0^{{\rm L},\beta,sr}(y) = {r}_1^{{\rm L},\beta,sr}(y),
 \end{equation}
 which for $\tau=0$ reduces to (\ref{2.3ar}).

\medskip

\subsection{The hard edge} \label{S5.2}
In this subsection, we consider the hard edge scaled density for the LOE/LSE. 
To present the results in a unified way, we adopt a convention similar to that used in \cite{Bo25} for the weight functions:
\begin{align}\label{5.14}
    w_{\beta}(x):=\left\{\begin{array}{cc}
    x^{\frac{a-1}{2}}e^{-x},     & \beta=1, \\
    x^{a+1}e^{-x},     & \beta=4,
    \end{array}\right.
\end{align}
where $a$ is fixed at the hard edge. Denote by $\rho_{(1),N}^{\rm L\beta E,h}(y)$ the corresponding density scaled in the hard edge variable \eqref{4.15}. 
Introduce too the modification of \eqref{4.15},
\begin{equation}\label{hN}
\hat{N}_{\rm h}' = \begin{cases}  N  + (a - 1)/2, & \beta = 1, \\ N + (a+1)/4, & \beta = 4. \end{cases}
\end{equation}
Then, as a consequence of 
knowledge of the explicit form of the operator 
$\mathcal{D}_{N}^{\rm L\beta E}$ in \eqref{5.1}
for $\beta =1$ and 4,
we have the following result for the expansion of
the density with respect to (\ref{hN}).

\begin{proposition}
Let
\begin{align}\label{LSE-para}
     \tilde{a}=a^2-1,
 \end{align}
 and specify $\hat{N}_{\rm h}'$ by \eqref{hN}.
 For the 
 differential operator $\mathcal{D}_{N}^{\rm L\beta E}$ in
 (\ref{5.1}) with $\beta=1,4$, after a transformation of the Laguerre weight function
 $x^a e^{-x}\to w_\beta(x)$ as specified by (\ref{5.14})
 and introduction of the hard edge scaled variable, we have
    \begin{align}
&\left.\mathcal{D}_{N}^{\rm L\beta E}\right|_{\substack{x=y/(4\hat{N}_h')\\ w\to w_{\beta}}}
=
\Bigg(
\tilde a^{2}-(4+3\tilde a)\,y+2y^{2}
+\Big(4y^2-4(-3+\tilde a)y-16-14\tilde a+\tilde a^{2}\Big)y\frac{d}{dy} \nonumber\\
&\quad
+\big(38y+16-22\tilde a\big)y^{2}\frac{d^{2}}{dy^{2}}
+\big(10y+88-5\tilde a\big)y^{3}\frac{d^{3}}{dy^{3}}
+40y^{4}\frac{d^{4}}{dy^{4}}
+4y^{5}\frac{d^{5}}{dy^{5}}
\Bigg) \nonumber\\[0.3em]
& \quad
+\frac{y^2}{(2\sqrt{\beta}\hat{N}_{\rm h}')^{2}}
\Bigg(
\tilde a-y
+2(-2+\tilde a-2y)\,y\frac{d}{dy}
-16y^{2}\frac{d^{2}}{dy^{2}}
-5y^{3}\frac{d^{3}}{dy^{3}}
\Bigg) \nonumber
+\frac{y^{5}}{(2\sqrt{\beta}\hat{N}_{\rm h}')^{4}}
\frac{d}{dy}\\
& \quad :=\mathcal{D}_{0}^{\rm L\beta E,h}+\mathcal{D}_{1}^{\rm L\beta E,h}(2\sqrt{\beta}\hat{N}_{\rm h}')^{-2}+\mathcal{D}_{2}^{\rm L\beta E,h}(2\sqrt{\beta}\hat{N}_{\rm h}')^{-4}.
\end{align}
Thus, the hard edge density for the LOE/LSE satisfies
\begin{align}
    \bigg(\mathcal{D}_{0}^{\rm L\beta E,h}+\mathcal{D}_{1}^{\rm L\beta E,h}(2\sqrt{\beta}\hat{N}_{\rm h}')^{-2}+\mathcal{D}_{2}^{\rm L\beta E,h}(2\sqrt{\beta}\hat{N}_{\rm h}')^{-4}\bigg)\rho_{(1),N}^{\rm L\beta E,h}( y)=0
\end{align}
and so  permits the large $\hat{N}_{\rm h}'$ expansion
\begin{align}\label{5.20v}
    \rho_{(1),N}^{\rm L\beta E,h}( y) = {r}_0^{\rm L,h,\beta}(y) + {1 \over (2\sqrt{\beta}\hat{N}_{\rm h}')^2} {r}_1^{\rm L,h,\beta}(y)+ {1 \over (2\sqrt{\beta}\hat{N}_{\rm h}')^4} {r}_2^{\rm L,h,\beta}(y) + \cdots .
\end{align}
With $r_{-1}^{\rm L,h,\beta}(y)= {r}_{-2}^{\rm L,h,\beta}(y)=0$,
the successive terms are related by
\begin{align}\label{5.18}
    &\mathcal{D}_{0}^{\rm L\beta E,h} {r}_j^{\rm L,h,\beta}(y)=-\mathcal{D}_{1}^{\rm L\beta E,h} {r}_{j-1}^{\rm L,h,\beta}(y)-\mathcal{D}_{2}^{\rm L\beta E,h} {r}_{j-2}^{\rm L,h.\beta}(y),\qquad (j=0,1,2,\dots).
\end{align}
\end{proposition}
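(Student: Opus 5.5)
The statement is in essence a change of variables followed by a reorganisation, so I would prove it by direct substitution, mirroring Proposition \ref{P4.5} for the LUE.

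\emph{Step 1: transform the operator.} Start from the explicit fifth order operator $\mathcal{D}_{N}^{\rm L\beta E}$ of \cite[Eq.~(2.29)]{RF21} for $\beta=1,4$, written for the weight $x^a e^{-x}$.
\begin{itemize}
\item Replace the Laguerre exponent by the value dictated by (\ref{5.14}): $(a-1)/2$ for $\beta=1$, $a+1$ for $\beta=4$. In the LSE case, also account for any rescaling $x\mapsto \beta x/2$ tied to the exponential factor, as was done for (\ref{5.9}).
\item Substitute $x=y/(4\hat N_{\rm h}')$. Each $x^k\,d^k/dx^k$ becomes $y^k\,d^k/dy^k$, so the Euler-type parts are invariant.
\item The remaining coefficients are polynomials in $x$, $N$ and $a$. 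Rewrite every occurrence of $N$ as $\hat N_{\rm h}'$ minus the shift in (\ref{hN}). After this, $x$ appears only through $y/(4\hat N_{\rm h}')$ and the operator is a polynomial in $1/\hat N_{\rm h}'$.
\item Multiply through by the appropriate overall power of $\hat N_{\rm h}'$, so that the top coefficient $y^5\,d^5/dy^5$ appears with the normalisation shown. This is a routine computer algebra calculation, as for (\ref{4.2}) and (\ref{5.9}).
\end{itemize}

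\emph{Step 2: the main obstacle.} The content of the claim is that, with this specific shift $\hat N_{\rm h}'$, all odd powers of $1/\hat N_{\rm h}'$ cancel. Powers of order $(\hat N_{\rm h}')^{-1}$ and $(\hat N_{\rm h}')^{-3}$ would otherwise arise from the linear-in-$N$ terms combined with $x$. The surviving coefficients must then depend on $a$ only through $\tilde a=a^2-1$.

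I would check this by first writing the operator with a general shift $N=\hat N-c$. The coefficient of $(\hat N)^{-1}$ should vanish only for $c=(a-1)/2$ (respectively $(a+1)/4$), which confirms the choice (\ref{hN}). One then verifies that the $(\hat N)^{-3}$ coefficient also vanishes and that no power beyond $(\hat N)^{-4}$ survives. The dependence on $\tilde a$ alone should reflect the $a\mapsto -a$ symmetry of the Bessel-type limiting equation. Collecting terms with the scale $2\sqrt\beta\,\hat N_{\rm h}'$ then exhibits $\mathcal D_0^{\rm L\beta E,h}$, $\mathcal D_1^{\rm L\beta E,h}$ and $\mathcal D_2^{\rm L\beta E,h}$ as displayed. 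The $\sqrt\beta$ absorbs the difference between the two cases, analogous to (\ref{3.11L}).

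\emph{Step 3: the expansion and the nested equations.} Existence of an asymptotic expansion of the hard edge density in inverse powers of $\hat N_{\rm h}'$ would be taken from the analogue, for $\beta=1,4$, of the kernel analysis of Appendix D, as in the proof of Proposition \ref{P4.5}. Given such an expansion, the operator depends on $\hat N_{\rm h}'$ only through $(2\sqrt\beta\,\hat N_{\rm h}')^{-2}$. Substituting a general series into the differential equation and equating coefficients order by order then shows that:
\begin{itemize}
\item the odd-order terms satisfy the homogeneous equations with zero data, so (consistent with the known leading behaviour) they vanish, giving (\ref{5.20v});
\item the even-order terms satisfy exactly the recursion (\ref{5.18}), with $r_{-1}^{\rm L,h,\beta}=r_{-2}^{\rm L,h,\beta}=0$.
\end{itemize}
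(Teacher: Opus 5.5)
Your Steps 1--2 follow the paper's route. The paper takes the explicit fifth order operator of \cite[Eq.~(2.29)]{RF21}, changes the Laguerre exponent according to (\ref{5.14}), substitutes $x=y/(4\hat N_{\rm h}')$ and simplifies with computer algebra. Equating powers as in Proposition \ref{P4.5} then gives (\ref{5.18}). One caution on Step 1: $w_\beta$ in (\ref{5.14}) keeps the factor $e^{-x}$, which is already the convention of that operator, so only the exponent changes here. The rescaling $x\mapsto\beta x/2$ used for (\ref{5.9}) must not be applied, because $\mathcal D_0^{\rm L\beta E,h}$ is not invariant under $y\mapsto cy$.

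The step that fails is in Step 3, where you conclude that the odd-order coefficients vanish because they satisfy homogeneous equations. The equation is linear and homogeneous in the density. So $(1+c/\hat N_{\rm h}')\rho_{(1),N}^{\rm L\beta E,h}(y)$ satisfies it equally well for any constant $c$, and this produces the odd term $c\,r_0^{\rm L,h,\beta}(y)/\hat N_{\rm h}'$. More generally, the first odd coefficient $s_1$ need only satisfy $\mathcal D_0^{\rm L\beta E,h}s_1=0$. Since $r_0^{\rm L,h,\beta}$ is itself such a solution, with the same behaviour as $y\to0$ and $y\to\infty$, the ``known leading behaviour'' cannot exclude it. Proposition \ref{PEv} shows that multiples of $r_0^{\rm L,h,\beta}$ really do enter the correction terms (at second order, with $C=(1-a^2)/4$). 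The same issue recurs at every odd order.

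You have two options:
\begin{itemize}
\item Claim only what the operator gives. It depends on $\hat N_{\rm h}'$ solely through $(2\sqrt\beta\hat N_{\rm h}')^{-2}$ and hence \emph{permits} an even expansion, whose coefficients then obey (\ref{5.18}). This is all the proposition asserts.
\item Bring in independent input. For the LUE, evenness did not come from the ODE but from the Kummer symmetry $N_{\rm h}'\mapsto-N_{\rm h}'$ of the kernel in Proposition \ref{PE2}. For $\beta=1$ you can expand the finite-$N$ formula (\ref{2.RLy}) directly as in Appendix E. There the $1/\hat N_{\rm h}'$ terms from (\ref{5.29b}) cancel against those from the remaining term of (\ref{2.RLy}).
\end{itemize}
For the same reason, existence of the expansion cannot be quoted from Appendix D, which is specific to the LUE kernel; for $\beta=1$ it comes through (\ref{2.RLy}).
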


We remark that the operator $\mathcal{D}_{0}^{\rm L\beta E,h}$ was reported earlier in \cite[Th.~4.2]{RF21} with $\kappa x = y$ and a modified meaning of $\tilde{a}$. It is known from \cite{FNH99}, \cite[Eq.~(7.110) after correction and (7.155)]{Fo10} that for $\beta = 1,4$,
\begin{align}\label{5.21}
    {r}_0^{\rm L,h,\beta}\left({y \over 2}\right)=\frac{1}{2}\ \Big (J_a (\sqrt{y})^2-J_{a +1}(\sqrt{y}) J_{a -1}(\sqrt{y})\Big )+\frac{J_{a}(\sqrt{y})}{2\sqrt{y}}{\rm JI}_a(\sqrt{y}) ,
\end{align}
where 
\begin{align}\label{5.22}
    {\rm JI}_a(y):= \begin{cases}
    \int_y^\infty J_a(t)dt, & \beta = 1, \\
    - \int_0^y J_a(t)dt, & \beta = 4. \end{cases}
\end{align}
One should compare the definition of this latter quantity
with that of ${\rm AI}_\nu(y)$
in (\ref{3.6}).
% One notes that the right-hand side of (\ref{5.21}) is independent of $\beta$; compare the $\beta$-independent definition of $ {\rm JI}_a(y)$ in (\ref{5.22}) with 
%  the $\beta$-dependent definition 
% of . There is nonetheless indirect $\beta$-dependence through the parameter $a$ due to our redefinition of the weight $w_\beta(x)$ given in (\ref{5.14}).

The functional form (\ref{5.21}) suggests introducing the further basis functions
\begin{align}
    b_4(y)=\frac{1}{\sqrt{y}}J_a(\sqrt{y}){\rm JI}_a(\sqrt{y}),\quad b_5(y)=J_a'(\sqrt{y}){\rm JI}_a(\sqrt{y})
\end{align}
(cf.~the basis functions on the second line of (\ref{rGbeta})),
for which the analogue of \eqref{4.21} is given by 
\begin{align} \label{5.25}
    \mathcal D_y b_4(y)=\frac{1}{2}(b_4(y)+b_5(y)-yb_1(y)),\quad \mathcal D_y b_5(y)=\frac{1}{2}(b_5(y)-(y-a^2)b_4(y) -y b_3(y)).
\end{align}

The facts that the span of $\{b_1(y),\dots,b_5(y) \}$ with
polynomial coefficients is closed under the operation
$\mathcal D_y=\frac{d}{dy}y$ and that ${r}_0^{\rm L,h,\beta}(y/2)$ is expressible as such a linear combination encourages us to seek solutions of (\ref{5.18}) of a form extending (\ref{2.RL}),
\begin{multline}\label{2.RLx}
r_j^{\rm L, h, \beta}\Big ({y \over 2} \Big ) 
\\ = \alpha_j^{\rm L,h,\beta}(y) b_1(y) +
  \beta_j^{\rm L,h,\beta}(y) b_2(y)   + \gamma_j^{\rm L,h,\beta}(y)
  b_3(y) + \xi_j^{\rm L,h,\beta}(y) b_4(y) +
  \eta_j^{\rm L,h,\beta}(y) b_5(y),
\end{multline} 
where $ \alpha_j^{\rm L,h,\beta}(y), \dots, \eta_j^{\rm L,h,\beta}(y)$
are polynomials, the same for both $\beta = 1,4$. 
Comparing with (\ref{5.21}) shows that for the case $j=0$,
\begin{equation}\label{sn}
{\alpha}_0^{\rm L,h,\beta}(y) = \frac{y-a^2}{2}, \: \:
{\beta}_0^{\rm L,h,\beta}(y) = {1 \over 2}, \; \:
{\gamma}_0^{\rm L,h,\beta}(y) = 0, \: \:
{\xi}_0^{\rm L,h,\beta}(y) = {1 \over 2}, \: \:
{\eta}_0^{\rm L,h,\beta}(y) = 0.
\end{equation}

With knowledge of (\ref{sn}), and aided by computer algebra, we can compute a particular solution of the fifth order inhomogeneous equation
\begin{equation}\label{sn1}
\mathcal{D}_{0}^{\rm L\beta E,h} Y(y)=-\mathcal{D}_{1}^{\rm L\beta E,h} {r}_{0}^{\rm L,h,\beta}(y)
\end{equation}
corresponding to the case $j=1$ of (\ref{5.18}), and we can further identify the solution of the homogeneous equation in this class.

\begin{proposition}
Solutions of (\ref{sn1}) with $Y(y/2)$ of the form
(\ref{2.RLx}) have the structure
\begin{equation}\label{sn2}
Y(y/2) = C r_0^{\rm L, h, \beta}(y/2) +
P(y/2)
\end{equation}
where $C$ is a constant and $P(y/2)$ is such that the coefficients in (\ref{2.RLx}) do not contain an additive multiple of (\ref{sn}). Moreover, 
\begin{multline}\label{sn3}
P(y/2) = -\Big ( \frac{(y-a^2)^2+a^4-2a^2}{16} \Big ) b_1(y) \\
-
  \frac{y-3a^2+3}{24} b_2(y) - {y \over 12} b_3(y) +
  {y \over 48}  b_4(y) - 
 \frac{y+2a^2-2}{48} b_5(y).
\end{multline}
\end{proposition}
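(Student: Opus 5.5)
The plan is to reduce (\ref{sn1}) to a finite linear-algebra problem on the polynomial coefficients in (\ref{2.RLx}). Throughout I work in the variable $y$ with $Y(y/2)$ expanded in the basis $b_1,\dots,b_5$. I will first rewrite $\mathcal{D}_{0}^{\rm L\beta E,h}$ and $\mathcal{D}_{1}^{\rm L\beta E,h}$, after the substitution $y\mapsto y/2$, in terms of $\mathcal D_y=\frac{d}{dy}y$ and multiplication by $y$. This is possible because each term has the form $y^k\frac{d^k}{dy^k}$ times a polynomial, and $y^k\frac{d^k}{dy^k}$ is a polynomial in $y\frac{d}{dy}=\mathcal D_y-1$. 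By Proposition \ref{P4.3} and (\ref{5.25}), $\mathcal D_y$ acts on the polynomial-coefficient span of $\{b_1,\dots,b_5\}$ as an explicit $5\times5$ matrix with entries linear in $y$, supplemented by $\mathcal D_y(p\,b_i)=(yp')b_i+p\,\mathcal D_y b_i$. Hence both operators map this span to itself, and the action is computable symbolically.

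Next I compute the right-hand side $-\mathcal{D}_{1}^{\rm L\beta E,h}r_0^{\rm L,h,\beta}$ from (\ref{sn}) as an explicit combination $\sum_i f_i(y)b_i(y)$. For the unknown I take an ansatz whose degrees are guided by the LUE case, Table \ref{T4}. The correction there gained one power of $y$, so I allow $\alpha$ of degree $2$, $\beta,\gamma,\xi,\eta$ of degree $\le1$ (a little more for safety), all with undetermined coefficients depending on $a$. Applying $\mathcal D_0$ and equating coefficients of $y^m b_i$ gives an overdetermined linear system. For this to be legitimate, $b_1,\dots,b_5$ must be linearly independent over polynomials. I would justify this by the large-$y$ oscillatory asymptotics of $J_a$, together with ${\rm JI}_a$ tending to a constant or to a non-oscillatory behaviour for $\beta=1,4$ respectively, or more simply by small-$y$ series for generic $a$.

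Solving the system, I expect a one-parameter family of solutions. Its homogeneous direction should coincide with the coefficient vector (\ref{sn}), since $r_0^{\rm L,h,\beta}$ solves $\mathcal{D}_{0}^{\rm L\beta E,h}Y=0$ and lies in the class. To confirm this is the only homogeneous solution in the class, I would show the homogeneous system has a one-dimensional kernel within the ansatz. One route is a leading-degree argument: the top-degree coefficients must satisfy a triangular recursion forcing the polynomial degrees to be bounded, and the kernel is then read off. Normalising the free parameter so that the solution has no additive multiple of (\ref{sn}) yields $P$. I would fix this by requiring that the constant term of $\beta$ contain no free piece proportional to $1/2$, and similarly for $\xi$; the displayed (\ref{sn3}) should then drop out, and I would verify it by direct substitution.

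The main obstacle is the uniqueness claim: that every solution in the class differs from $P$ by a multiple of $r_0$. This requires ruling out homogeneous solutions of arbitrarily high polynomial degree in the ansatz. I would handle it by examining the action of $\mathcal D_0$ on top-degree terms $y^m b_i$. There the dominant part comes from $4y^5\frac{d^5}{dy^5}$, $40y^4\frac{d^4}{dy^4}$, and so on, combined with the $y$-multiplications in $\mathcal D_y b_i$. This gives an indicial-type linear map on the leading coefficient vector, which should be invertible for $m$ above a small threshold. That bounds the degrees, after which the finite computation above finishes the argument.
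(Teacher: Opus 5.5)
Your computational core is the paper's route. The paper simply solves (\ref{sn1}) by computer algebra inside the class (\ref{2.RLx}). Your reduction via $\mathcal D_y$, (\ref{4.21}) and (\ref{5.25}), with $\deg\alpha\le 2$ and the other four coefficients of degree $\le 1$ (exactly the shape of (\ref{sn3})), would reproduce that.

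What you add beyond the paper is an argument that the only homogeneous solutions in the class are multiples of $r_0^{\rm L,h,\beta}$. That is where the proposal breaks: the leading-order map you describe is never invertible.

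\emph{Why the degree bound fails.} Grade by growth at $y=\infty$. For $\beta=1$, give $y$ weight $1$ and $b_1,\dots,b_5$ weights $-\frac32,-\frac12,-1,-1,-\frac12$; then $\mathcal D_y$ raises weight by exactly $\frac12$ on each $b_i$.
\begin{itemize}
\item In the variable of (\ref{2.RLx}), the top part of $\mathcal D_0^{\rm L\beta E,h}$ is $4T^5+5yT^3+y^2T$, where $T$ is the weight-raising part of $y\frac{d}{dy}$.
\item To leading order this acts on $e^{\lambda\sqrt y}$ as multiplication by $\frac18y^{5/2}\lambda(\lambda^2+1)(\lambda^2+4)$.
\item The $b_i$ carry only the rates $\lambda\in\{0,\pm i,\pm2i\}$, which are exactly its roots. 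So the leading map is identically zero for every $m$; this can be checked directly from (\ref{4.21}) and (\ref{5.25}).
\item If you grade by plain polynomial degree instead, the only surviving top-degree condition is the coefficient of $y^{m+3}b_1$. That is one scalar equation on five leading coefficients.
\end{itemize}
Either way, $m$ enters only at subleading order. Bounding the degrees would need an analysis of formal solutions at the irregular singular point $y=\infty$, which your plan does not contain.

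\emph{A way to get uniqueness without any degree bound.} Work at $y=0$ instead.
\begin{itemize}
\item The indicial roots of $\mathcal D_0^{\rm L\beta E,h}$ at $y=0$ are $a+1$, $1-a$, $-1$, $\frac{a-1}{2}$, $-\frac{a+1}{2}$.
\item Write $J_a(u)=u^af(u^2)$, $J_a'(u)=u^{a-1}h(u^2)$ and $\int_0^uJ_a=u^{a+1}g(u^2)$, with $f,g,h$ entire. Then the class (\ref{2.RLx}) lies in $y^{a-1}\mathcal O$ for $\beta=4$, and in $y^{a-1}\mathcal O+y^{(a-1)/2}\mathcal O$ for $\beta=1$. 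Here $\mathcal O$ denotes the entire functions.
\item Hence, for generic $a$, only the Frobenius solutions with exponents $a+1$ and $\frac{a-1}{2}$ can occur in the class.
\item The exponent-$\frac{a-1}{2}$ solution is $J_a(\sqrt y)/\sqrt y$. It is twice the difference of the $\beta=1$ and $\beta=4$ versions of (\ref{5.21}), so it solves the homogeneous equation.
\item It is not in the $\beta=1$ class: it oscillates like $e^{\pm i\sqrt y}$, whereas everything in that class has only the rates $0,\pm2i$.
\end{itemize}
So the homogeneous solutions in the class are exactly the multiples of $r_0^{\rm L,h,\beta}(y/2)$, and your finite computation then finishes the proof for generic $a$.

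\emph{Normalisation.} The free constant is fixed by a single linear condition, and (\ref{sn3}) corresponds to $\xi(0)=0$. The $\beta$-coefficient in (\ref{sn3}) has constant term $(a^2-1)/8$. Requiring that constant term to vanish instead selects $P+\frac{1-a^2}{4}r_0^{\rm L,h,\beta}$, which is $r_1^{\rm L,h,\beta}$ of Proposition \ref{PEv}. Imposing both conditions, as you propose, is inconsistent unless $a^2=1$.

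\emph{Linear independence.} Independence of $b_1,\dots,b_5$ over $\mathbb C[y]$ genuinely fails for special $a$. For $\beta=1$ and $a=1$ one has ${\rm JI}_1(\sqrt y)=J_0(\sqrt y)=J_1'(\sqrt y)+J_1(\sqrt y)/\sqrt y$, so $b_4=b_1+b_3$ and $b_5=b_2+b_3$; the same happens for other odd $a$. Coefficient matching is therefore justified only for generic $a$.
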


In the previous cases considered, we have found that both the first and second corrections are solutions of the corresponding inhomogeneous equation with no contribution from the solution of the homogeneous equation. Specialising to $\beta = 1$ and making use of the known finite $N$ expression
\cite{AFNV00}, \cite[Eq.~(7.154)]{Fo10}, \cite{Bo25}
\begin{multline}\label{2.RLy}
{1 \over 2} \rho_{(1),N}^{\rm L \beta E,h} \Big ( {y \over 2} \Big ) \Big |_{\beta = 1} =
\rho_{(1),N-1}^{\rm L U E,h}(y) -  {1 \over 4 \hat{N}_{\rm h}'} {(N-1)! \over 4 \Gamma(a-1+N)} x^{(a-1)/2} e^{-x/2} L_{N-1}^{(a)}(x) \\
\times \bigg ( 2 \int_0^x L_{N-2}^{(a)}(u)
 u^{(a-1)/2} e^{-u/2} \, du -
 {\Gamma((N+1)/2) \Gamma(a-1+N) \over 2^{a/2-3/2} \Gamma(N)
 \Gamma((N+a)/2)} \bigg ) \bigg |_{x = y/(4\hat{N}_{\rm h}')},
 \end{multline}
 we will investigate the validity of that property in the present setting. Thus, our 
 task is to compute the large
 $\hat{N}_{\rm h}' = N + (a-1)/2$ expansion of (\ref{2.RLy}), making explicit the ${\rm O}((\hat{N}_{\rm h}')^{-2})$
 first order correction, which 
 subject to the structural hypothesis (\ref{2.RLx}) must be
 of the form (\ref{sn2}).

 \begin{proposition}\label{PEv}
Consider the expansion (\ref{5.20v}) for $\beta = 1$.
As with the leading functional form (\ref{5.21}), after replacing the argument $y$ by $y/2$, the first correction has the functional form (\ref{2.RLx}). 
With $ r_0^{\rm L, h, \beta}(y/2)$ specified by the coefficients (\ref{sn}), and $P(y/2)$ specified by (\ref{sn3}), we have that $ r_1^{\rm L, h, \beta}(y/2)$ is specified by the right-hand side of (\ref{sn2}) with
$C = (1-a^2)/4$.
Thus, from the viewpoint of the underlying inhomogeneous differential equation, there is a contribution from the solution of the homogeneous part.
Specifically,  $ r_1^{\rm L, h, \beta}(y/2)$ is given by
(\ref{2.RLx}) with coefficients
\begin{multline}\label{sn4}
{\alpha}_1^{\rm L,h,\beta}(y) = \frac{2y - y^2}{16}, \quad
{\beta}_1^{\rm L,h,\beta}(y) = -{y  \over 24}, \quad
{\gamma}_1^{\rm L,h,\beta}(y) = -{y  \over 12}, \\
{\xi}_1^{\rm L,h,\beta}(y) = {y - 6 a^2 + 6 \over 48}, \quad
{\eta}_1^{\rm L,h,\beta}(y) = -{y + 2 a^2 -2 \over 48}.
\end{multline}

 \end{proposition}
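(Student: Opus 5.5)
The plan is to expand the finite-$N$ formula (\ref{2.RLy}) directly in powers of $1/\hat N_{\rm h}'$, with $x=y/(4\hat N_{\rm h}')$, $\hat N_{\rm h}'=N+(a-1)/2$. The result is then matched against the ansatz (\ref{2.RLx}) at order $(2\hat N_{\rm h}')^{-2}$ (recall $\beta=1$, so $2\sqrt\beta\hat N_{\rm h}'=2\hat N_{\rm h}'$). The right-hand side of (\ref{2.RLy}) has two pieces, which I will treat separately.

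\textbf{The LUE piece.} The first piece is $\rho^{\rm LUE,h}_{(1),N-1}(y)$. Its natural hard-edge variable is $N_{\rm h}'=(N-1)+a/2$, whereas here we have $\hat N_{\rm h}'=N+(a-1)/2$. These coincide exactly: $N-1+a/2=N+(a-1)/2$. Hence (\ref{4.16}) applies verbatim, giving
\[
r_0^{\rm L,h}(y)+(\hat N_{\rm h}')^{-2}r_1^{\rm L,h}(y)+{\rm O}((\hat N_{\rm h}')^{-4}),
\]
with (\ref{4.17}) and Table \ref{T4}. The only care needed is that the left side of (\ref{2.RLy}) is at argument $y/2$ while the LUE term is at $y$. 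Since $b_1,b_2,b_3$ in (\ref{4.20}) are evaluated at $y$, the Table \ref{T4} coefficients at order $(\hat N_{\rm h}')^{-2}=4(2\hat N_{\rm h}')^{-2}$ feed directly, after a factor $2\cdot 4$, into the $\alpha,\beta,\gamma$ entries of (\ref{sn4}).

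\textbf{The integral piece.} This piece is $1/(4\hat N_{\rm h}')$ times
\[
\frac{(N-1)!}{\Gamma(a-1+N)}\,x^{(a-1)/2}e^{-x/2}L^{(a)}_{N-1}(x)\bigl(2\textstyle\int_0^x\cdots-\text{const}\bigr).
\]
I will follow the method of Proposition \ref{P4.7}:
\begin{itemize}
\item Write the Laguerre polynomials as ${}_1F_1$ and apply Proposition \ref{PE1} (for $\alpha=a$) to expand $L^{(a)}_{N-1}(x/N)$ and $L^{(a)}_{N-2}(u/N)$ in $1/N$, in terms of ${}_0F_1$ and its derivatives via (\ref{E6i}).
\item Rescale to $x=y/(4\hat N_{\rm h}')$ using the Taylor device (\ref{E6a}), and multiply by the expansion of $e^{-x/2}$.
\item Expand the gamma-ratio prefactors as in (\ref{EZ1}). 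This includes the constant term $\Gamma((N+1)/2)/\Gamma((N+a)/2)$, whose large-$N$ expansion comes from the standard gamma-ratio asymptotics (Tricomi–Erdélyi), written in $\hat N_{\rm h}'$.
\end{itemize}
The integral $\int_0^x L^{(a)}_{N-2}(u)u^{(a-1)/2}e^{-u/2}du$, after $u=t^2/(4\hat N_{\rm h}')$, becomes at leading order $\int_0^{\sqrt y}J_a(t)\,dt$ times a power of $\hat N_{\rm h}'$. Combined with the constant (which tends to $\int_0^\infty J_a=1$ after normalisation), this produces $-{\rm JI}_a(\sqrt y)$ for $\beta=1$. At the next order, the integrands carry polynomial-in-$t$ multiples of $J_a$, $J_{a\pm1}$. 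I will integrate these by parts using (\ref{4.18t}) and (\ref{4.22}), so that everything reduces to $b_1,\dots,b_5$ with polynomial coefficients.

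\textbf{Main obstacle.} I expect the hardest step to be the $1/\hat N_{\rm h}'$ terms. The overall $1/(4\hat N_{\rm h}')$ and the subleading pieces of the integral piece produce odd powers, which must cancel against the odd-order part of the constant's expansion, leaving only even powers. To check this, I would first verify the leading order reproduces (\ref{5.21}) (a consistency test of the normalisation), then carry the expansion to relative order $(\hat N_{\rm h}')^{-2}$ in Mathematica.

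\textbf{Identifying $C$.} Collecting coefficients of $b_1,\dots,b_5$ gives (\ref{sn4}). The final step is to subtract $P(y/2)$ from (\ref{sn3}); the remainder should be proportional to the coefficient vector (\ref{sn}). Reading off the $b_4$ coefficient gives $\tfrac{1}{2}C=(-6a^2+6)/48-0$, i.e. $C=(1-a^2)/4$. I would then confirm that the $b_1$ and $b_2$ components agree with this value; for example, the $b_2$ component gives $-y/24+(y-3a^2+3)/24=(1-a^2)/8$. As an independent check, I would verify that (\ref{sn4}) satisfies (\ref{sn1}).
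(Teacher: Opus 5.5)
You follow the paper's route: split (\ref{2.RLy}) into the LUE term and the second term, expand the latter with Proposition \ref{PE1}, integrate by parts using (\ref{4.22}), expand the gamma ratios, then read off $C$ from the $b_4$ coefficient. Your value $C=(1-a^2)/4$ is also right. However, the first step contains an arithmetic error that breaks the bookkeeping. The two variables do not coincide: $N-1+a/2=\hat N_{\rm h}'-\tfrac12$, not $\hat N_{\rm h}'$. Since (\ref{2.RLy}) holds at the common point $x=y/(4\hat N_{\rm h}')$, you cannot apply (\ref{4.16}) verbatim. You must re-expand, as in (\ref{5.29b}):
\[
\frac{1}{4\hat N_{\rm h}'}\rho^{\rm LUE}_{(1),N-1}\Big(\frac{y}{4\hat N_{\rm h}'}\Big)=(1-\epsilon)\,r_0^{\rm L,h}\big((1-\epsilon)y\big)+\frac{1}{(\hat N_{\rm h}')^{2}}\,r_1^{\rm L,h}(y)+\cdots,\qquad \epsilon=\frac{1}{2\hat N_{\rm h}'}.
\]
At order $(\hat N_{\rm h}')^{-1}$ this gives $-\tfrac12\mathcal D_y r_0^{\rm L,h}(y)=-\tfrac18(J_a(\sqrt y))^2$, by (\ref{1.11m}). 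This is exactly what cancels the $+\tfrac18(J_a(\sqrt y))^2$ left over at that order in the second term of (\ref{2.RLy}). That leftover comes from the leading $y^{-1/2}J_a(\sqrt y)$ of the Laguerre factor multiplied by the $-\sqrt y\,J_a(\sqrt y)$ that the integral produces at first order. The constant in (\ref{2.RLy}) cannot remove it, because its expansion only converts $\int_0^{\sqrt y}J_a$ into $-{\rm JI}_a(\sqrt y)$ order by order. So the odd-order cancellation you anticipate fails as you have set it up. At order $(\hat N_{\rm h}')^{-2}$ the re-expansion also contributes an extra term, $\tfrac18\big(2y\tfrac{d}{dy}+y^2\tfrac{d^2}{dy^2}\big)r_0^{\rm L,h}(y)=\tfrac{1}{32}\,y\,b_3(y)$.

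It follows that your claim that $8\times$ Table \ref{T4} supplies the $\alpha,\beta,\gamma$ entries of (\ref{sn4}) is false. It gives $\alpha=-(2y+a^2)y/24$ and $\gamma=-y/6$, not $(2y-y^2)/16$ and $-y/12$; only $\beta_1=-y/24$ agrees, and that is a coincidence. Besides the shift term, the second term of (\ref{2.RLy}) also feeds $b_1$ and $b_3$. After the factor $-\tfrac14$, its contribution at order $(\hat N_{\rm h}')^{-2}$ is
\[
\tfrac{y(y+2a^2+6)}{384}b_1-\tfrac{y}{48}b_3+\tfrac{y-6a^2+6}{384}b_4-\tfrac{y+2a^2-2}{384}b_5 .
\]
Adding $r_1^{\rm L,h}(y)+\tfrac{y}{32}b_3(y)$ to this and multiplying by $8$ gives (\ref{sn4}). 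Once you account for the half-unit shift between $N-1+a/2$ and $\hat N_{\rm h}'$, the rest of your plan goes through as in the paper.
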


 \begin{proof}
 Starting from the explicit functional form (\ref{2.RLy}), this requires an asymptotic analysis similar to that used for the proof of Proposition \ref{P4.7}. While the latter was to one order higher, in the present problem we are faced with the task of simplifying varying integrals. For this, as is seen in our detailed working of Appendix E, essential use is made of the second order differential equation satisfied by $J_a(u)$ (this gives a mechanism explaining why the asymptotic expansion has the structure
 (\ref{2.RLx})).
 \end{proof}

 We conclude with an analogue of Corollary \ref{Co4.7}.
 \begin{corollary}
In addition to the differential relation (\ref{5.18}) with $j=1$, we have the explicit differential relation
\begin{equation} \label{5.33new}
r_1^{\rm L,h,\beta}\Big(\frac{y}{2}\Big) = -\frac{1}{12}\Big(8\mathcal{D}_y^2r_0^{\rm L,h,\beta}\Big(\frac{y}{2}\Big)+\mathcal{D}_y\Big(yr_0^{\rm L,h,\beta}\Big(\frac{y}{2}\Big)\Big)+2(a^2-5)\mathcal{D}_yr_0^{\rm L,h,\beta}\Big(\frac{y}{2}\Big)\Big).
\end{equation}
For $\beta=1,4$ and $w_\beta(x)$ specified by (\ref{5.14}), let the generating function of there being $k$ ($k=0,1,2,\ldots$) eigenvalues in the interval $(0,s)$ be expanded for large $\hat{N}_{\rm h}'$ according to
\begin{multline}
\mathcal E_{N,\beta}((0,y/(4\hat{N}_{\rm h}'));w_\beta(x);\xi)   = 
 \mathcal E_{\beta}^{\rm h}((0,y);\xi)  +
 {1 \over (2\sqrt{\beta}\hat{N}_{\rm h}')^2}  \mathcal E_{\beta}^{1,\rm L, h}((0,y);\xi) 
 \\+ {1 \over (2\sqrt{\beta}\hat{N}_{\rm h}')^4}  \mathcal E_{\beta}^{2,\rm L, h}((0,y);\xi)  + \cdots.
 \end{multline}
 Under the assumption that $\mathcal{E}_{\beta}^{\rm h}((0,y);\xi)$ and $\mathcal{E}_{\beta}^{\rm 1,L,h}((0,y);\xi)$ are related by a $\xi$-independent second order differential operator applied to the former, we have
 \begin{equation} \label{5.35}
\mathcal{E}_\beta^{\rm 1,L,h}((0,y);\xi)=-\frac{1}{6}\Big(4y^2\frac{d}{dy}+y^2+(a^2-1)y\Big)\frac{d}{dy}\mathcal{E}_\beta^{\rm h}((0,y);\xi).
 \end{equation}
 \end{corollary}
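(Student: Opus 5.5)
The plan has two parts. First verify (\ref{5.33new}) by direct computation in the basis $\{b_1,\dots,b_5\}$. Then deduce (\ref{5.35}) by integrating and comparing the coefficient of $\xi$, in the same way (\ref{1.17}) was obtained from (\ref{1.16}).

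For the first part, write $f_0(y):=r_0^{\rm L,h,\beta}(y/2)$. By (\ref{sn}), $f_0=\tfrac12\big((y-a^2)b_1+b_2+b_4\big)$. I will apply $\mathcal D_y$ using the rules (\ref{4.21}) and (\ref{5.25}), together with $\mathcal D_y(p(y)b_i)=y p'(y)b_i+p(y)\mathcal D_y b_i$ for polynomial $p$. This gives $\mathcal D_y f_0$ as a polynomial-coefficient combination of the $b_i$; there should be cancellations analogous to (\ref{1.11m}). Applying $\mathcal D_y$ once more gives $\mathcal D_y^2 f_0$. Next I compute $\mathcal D_y(yf_0)=yf_0+y\mathcal D_y f_0$. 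I then form the combination $-\tfrac1{12}\big(8\mathcal D_y^2f_0+\mathcal D_y(yf_0)+2(a^2-5)\mathcal D_yf_0\big)$ and read off its five coefficient polynomials. These should match (\ref{sn4}), which is $Cf_0+P(y/2)$ with $C=(1-a^2)/4$ by Proposition \ref{PEv}. If some term fails to match, it will most likely come from the rules (\ref{5.25}) for $b_4,b_5$, because ${\rm JI}_a$ enters through $\tfrac{d}{du}{\rm JI}_a(u)=\mp J_a(u)$; both choices in (\ref{5.22}) give $-J_a$, so the same rules apply for $\beta=1,4$.

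For the second part, the analogue of (\ref{1.13}) at the hard edge reads
\[
\mathcal E_{N,\beta}((0,s);w_\beta;\xi)=1-\xi\int_0^s\rho_{(1),N}(x)\,dx+\cdots.
\]
Taking the coefficient of $-\xi$ at each order in $(2\sqrt\beta\hat N_{\rm h}')^{-2}$ expresses $\mathcal E^{\rm h}_\beta$ and $\mathcal E^{1,\rm L,h}_\beta$ as $\int_0^y$ of $r_0^{\rm L,h,\beta}$ and $r_1^{\rm L,h,\beta}$. The key integration identity is $\int_0^y\mathcal D_y g=[xg(x)]_0^y=y\,g(y)$. The boundary term at $0$ vanishes since $g$ is integrable near the origin; this follows from $J_a(\sqrt x)\sim x^{a/2}$ with $a>-1$.

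Integrating (\ref{5.33new}) from $0$ therefore gives, with $F_0'=f_0$,
\[
F_1=-\tfrac1{12}\big(8y(yf_0)'+y^2f_0+2(a^2-5)yf_0\big)=-\tfrac1{12}\big(8y^2f_0'+y^2f_0+2(a^2-1)yf_0\big),
\]
so $F_1=-\tfrac16\big(4y^2\tfrac{d}{dy}+\tfrac12y^2+(a^2-1)y\big)\tfrac{d}{dy}F_0$. This is a second order operator with a right factor $\tfrac{d}{dy}$. The terms $y^2\tfrac{d^2}{dy^2}$ and $y\tfrac{d}{dy}$ are invariant under rescaling $y$, while the $y^2\tfrac{d}{dy}$ term picks up the scale factor. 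Undoing the argument $y/2$ used in (\ref{2.RLx}), and carefully tracking the Jacobian and the normalisation of (\ref{5.20v}), should convert $\tfrac12y^2$ into the $y^2$ appearing in (\ref{5.35}).

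This bookkeeping of the factor $2$ between the density variable and the interval variable is where I expect an error to be easiest to make. I will check it against the $\beta=1,4$ analogue of the exact result (\ref{5.37}) at $\xi=1$, as was done in Remark \ref{R4.8}.

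Finally, the assumed $\xi$-independent second order operator $\mathcal L$ with $\mathcal E^{1,\rm L,h}_\beta=\mathcal L\,\mathcal E^{\rm h}_\beta$ must, at order $\xi$, map $\int_0^y r_0$ to $\int_0^y r_1$. Comparison with the integrated identity identifies $\mathcal L$ as the operator in (\ref{5.35}), under the same uniqueness reasoning as used for (\ref{1.11k}).
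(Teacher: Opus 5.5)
Your proposal is correct and follows essentially the same route as the paper: computing $\mathcal D_y r_0^{\rm L,h,\beta}(y/2)$ and $\mathcal D_y^2 r_0^{\rm L,h,\beta}(y/2)$ in the basis $\{b_1,\dots,b_5\}$ via (\ref{4.21}) and (\ref{5.25}) does reproduce the coefficients (\ref{sn4}), and the paper obtains (\ref{5.35}) in the same way, by rescaling the argument by a factor of $2$ and applying the reasoning behind (\ref{1.17}). The only difference is cosmetic: you integrate before undoing the $y/2$ rescaling rather than after, and your bookkeeping is right, since under $z=2y$ the term $\frac12 z^2\frac{d}{dz}$ becomes $y^2\frac{d}{dy}$ while the other terms are unchanged.
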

\begin{proof}
Using Proposition \ref{P4.3} and its extension (\ref{5.25}), along with the form of $r_0^{\rm L,h,\beta}(y/2)$ given in (\ref{5.21}), we compute
\begin{align*}
 \mathcal{D}_y r_0^{\rm L,h,\beta}\Big(\frac{y}{2}\Big) &= \frac{1}{4}\Big( (J_a(\sqrt{y}))^2 +\frac{1}{\sqrt{y}}J_a(\sqrt{y}){\rm JI}_a(\sqrt{y})+J_a'(\sqrt{y}){\rm JI}_a(\sqrt{y})\Big),
\\  \mathcal{D}_y^2 r_0^{\rm L,h,\beta}\Big(\frac{y}{2}\Big) &= \frac{1}{8}\Big( (J_a(\sqrt{y}))^2 + \sqrt{y}J_a(\sqrt{y})J_a'(\sqrt{y})+\frac{a^2+1-y}{\sqrt{y}}J_a(\sqrt{y}){\rm JI}_a(\sqrt{y})
\\&\qquad+2J_a'(\sqrt{y}){\rm JI}_a(\sqrt{y})\Big).
\end{align*}
Thus, we verify (\ref{5.33new}). Changing variables by a factor of $2$ produces a relation between $r_0^{\rm L,h,\beta}(y)$ and $r_1^{\rm L,h,\beta}(y)$, which gives (\ref{5.35}) upon applying the same reasoning as that leading to (\ref{1.17}).
\end{proof}

\begin{remark}
 Recall from Remark \ref{R4.8} the exact results (\ref{5.37f}) and (\ref{5.37}). With our weight $w_\beta(x)$ being specified by (\ref{5.14}), the first of these gives that when $\beta=1$ and $a=1$, or $\beta=4$ and $a=-1$, we have
$$
\mathcal E_{\beta}^{\rm h}((0,y);\xi)|_{\xi=1}=e^{-y/4},
\quad  \mathcal E_{\beta}^{j,\rm{L,h}}((0,y);\xi)|_{\xi=1}=0,\\ \: (j=1,2,\cdots).
$$
This is indeed consistent with (\ref{5.35}). On the other hand, setting $\beta=1$ in the second of these results shows that when $a=3$, we have
$$
\mathcal E_1^{\rm h}((0,y);\xi)|_{\xi=1} = e^{-y/4}{}_0F_1\Big(2;\frac{y}{2}\Big),
\quad \mathcal E_1^{\rm 1,L,h}((0,y);\xi)|_{\xi=1} = \frac{y^2}{24}e^{-y/4}{}_0F_1\Big(3;\frac{y}{2}\Big).
$$
Calculation shows that this is again consistent with (\ref{5.35}). A similar check with $\beta=4$ and $a=0$ also holds.
\end{remark}

\section*{Acknowledgements}
The work of PJF is supported by a grant from the Australian Research Council, Discovery Project
DP250102552. AAR is supported by Hong Kong RGC grants GRF 16304724 and GRF 17304225.
 BJS is supported by  the Shanghai Jiao Tong
University Overseas Joint Postdoctoral Fellowship Program.
Correspondence on the topic of this work from F.~Bornemann is appreciated.

\appendix
\section*{Appendix A: An orthogonal polynomial approach to the third order differential equations}
\renewcommand{\thesection}{A} 
\renewcommand{\theHsection}{A} 
\setcounter{equation}{0}
\setcounter{theorem}{0}

Here, we consider the classical unitary ensembles, with eigenvalue probability density function proportional to 
\begin{equation}\label{A1}
\prod_{l=1}^N w(x_l) \mathbbm 1_{x_l \in I} \prod_{1 \le j < k \le N} (x_k - x_j)^2.
    \end{equation}
    For $\{p_j(x) \}$ the corresponding (monic) orthogonal polynomials,
    $\int_I w(x) p_j(x) p_k(x) \, dx = h_j \delta_{j,k}$, one has
    that the eigenvalue density is given by \cite[Eq.~(5.13)]{Fo10} 
\begin{align}\label{k0}
    \rho_{(1),N} (x)=K_N(x,x)=\frac{w(x)}{h_{N-1}}(p_{N}'(x)p_{N-1}(x)-p_{N}(x)p_{N-1}'(x)).
\end{align}
Here, the notation $K_N(x,x)$ comes from the right-hand side being derived as the limit $x \to y$ of the so-called correlation kernel $K_N(x,y)$
summed according to the Christoffel--Darboux formula; see, e.g.,~\cite[Prop.~5.1.3]{Fo10}.
Our aim is to establish that, for
$w(x)$ one of the Gaussian, Laguerre, Jacobi or Cauchy weights from the theory of classical orthogonal polynomials, the density 
$ \rho_{(1),N} (x)$ satisfies a third order linear differential equation which can be written in a structured form common to all these classical cases. Although this third order differential equation is known from earlier work --- see \cite{GT05,WF14,RF21} for the Gaussian case;  \cite{GT05,ATK11,RF21} for the Laguerre case;
\cite{RF21} for the Jacobi case; \cite{FR21} for the Cauchy case ---
none of the previous derivations apply to all the classical cases simultaneously. In the Gaussian case, our derivation reduces to that given in \cite{WF14}.

The characterising feature of the classical weight functions is that they satisfy Pearson's equation
\begin{align}\label{Pearson-eq}
    (\sigma(x)w(x))'=\tau(x)w(x),
\end{align}
where $\sigma(x)$ and $\tau(x)$ are polynomials with  $\deg \sigma(x)\leq 2$ and $\deg \tau(x)\leq 1$. Explicit forms for $\sigma(x)$ and $\tau(x) $ are listed in Table \ref{tab:pearson_params}.
The corresponding classical orthogonal polynomials $\{p_{j}(x)\}_{j\in \mathbb{N}}$ are uniquely determined, up to normalization, as the polynomial solutions to the differential equation
\begin{align}\label{diff-eq-pN}
    (\sigma(x)w(x)p_N'(x))'=\lambda_Nw(x)p_N(x).
\end{align}
Note that here $\lambda_N$, which can be viewed as an eigenvalue,
can be determined by comparing the leading coefficients of the polynomials in the equation. In the general framework, we fix the normalization by requiring $p_N(x)$ to be monic; while specializing to one of the classical cases, we adopt the common convention of the normalization.
\begin{table}[h]
\centering
\renewcommand{\arraystretch}{1.5}
\begin{tabular}{|l|c|c|c|c|}
\hline
\textbf{Polynomial} & \textbf{Interval} & \textbf{Weight} $w(x)$ & $\sigma(x)$ & $\tau(x)$ \\
\hline
Hermite $H_N(x)$ & $(-\infty, \infty)$ & $e^{-x^2}$ & $1$ & $-2x$ \\
\hline
Laguerre $L_N^{(\alpha)}(x)$ & $[0, \infty)$ & $x^\alpha e^{-x}$ & $x$ & $1+\alpha-x$ \\
\hline
Jacobi $P_N^{(\alpha, \beta)}(x)$ & $[-1, 1]$ & $(1-x)^\alpha (1+x)^\beta$ & $1-x^2$ & $\beta - \alpha - (\alpha + \beta + 2)x$ \\
\hline
R.-R. $R_N^{(\eta,\bar{\eta})}(x)$ &  $(-\infty, \infty)$ & $(1 - i x)^\eta(1+ix)^{\bar{\eta}}$ & $1+x^2$ & $2(q+(1-N-p)x)$ \\
\hline
\end{tabular}
\caption{Weight and Pearson pair for classical orthogonal polynomials. Here, R.--R.~stands for Routh--Romanovski (see, e.g.,~the review \cite{RWAK07}) and in the corresponding weight function $\eta$ is $N$ dependent,
$\eta = - (N+p-iq)$ with $p,q$ real and $p>-1$. Adopting the notation in \cite{RWAK07}, we denote $Q_{N}^{(2q,1-p)}(x):=R_N^{(\eta,\bar{\eta})}(x) $ from now on. }
\label{tab:pearson_params}
\end{table}

\subsection{The raising and lowering operators}
It is fundamental that orthogonal polynomials satisfy the three term recurrence relation
\begin{align}
    xp_N(x)=p_{N+1}(x)+d_Np_N(x)+e_N p_{N-1}(x),
\end{align}
and with classical weights, the structure relation
\begin{align}
    \sigma(x) p_N'(x)=a_Np_{N+1}(x)+b_N p_{N}(x)+c_N p_{N-1}(x);
\end{align}
with regards to the latter see, e.g.,~\cite{AFNV00}.
In these equations, $a_N,b_N,c_N,d_N$ and $e_N$ are certain constants that can be determined by comparing coefficients on both sides.
Eliminating one of the three terms on the right-hand side results in the following equations
\begin{align}
    &\bigg(\sigma(x) \frac{d}{dx}+a_N(d_N-x)-b_N\bigg)p_N(x)=(c_N-a_Ne_N)p_{N-1}(x),\label{lower}\\
    &\bigg(\sigma(x) \frac{d}{dx}+\frac{c_N}{e_N}(d_N-x)-b_N\bigg)p_N(x)=\bigg(a_N-\frac{c_N}{e_N}\bigg)p_{N+1}(x)\label{raise}.
\end{align}
The operators in the brackets on the left-hand side are usually called the lowering and raising operators, respectively. For the cases we are considering, these relations can be specified by the following:
\begin{itemize}
    \item \textbf{Hermite Polynomials} $H_n(x)$
    \[
    \begin{aligned}
        H_n'(x) &= 2n H_{n-1}(x), \\
        H_n'(x) &= 2x H_n(x) - H_{n+1}(x).
    \end{aligned}
    \]

    \item \textbf{Laguerre Polynomials} $L_n^{(\alpha)}(x)$
    \[
    \begin{aligned}
        x L_n^{(\alpha)\prime}(x) &= n L_n^{(\alpha)}(x) - (n+\alpha) L_{n-1}^{(\alpha)}(x), \\
        x L_n^{(\alpha)\prime}(x) &= (n+1) L_{n+1}^{(\alpha)}(x) - (n+\alpha+1-x) L_n^{(\alpha)}(x).
    \end{aligned}
    \]

    \item \textbf{Jacobi Polynomials} $P_n^{(\alpha, \beta)}(x)$
    \[
    \begin{aligned}
        (2n+\alpha+\beta)(1-x^2) P_n^{(\alpha, \beta)\prime}(x) &= -n \left[ (2n+\alpha+\beta)x - (\beta-\alpha) \right] P_n^{(\alpha, \beta)}(x) \\
        &\quad + 2(n+\alpha)(n+\beta) P_{n-1}^{(\alpha, \beta)}(x), \\
        (2n+\alpha+\beta)(1-x^2) P_n^{(\alpha, \beta)\prime}(x) &= (n+\alpha+\beta) \left[ (2n+\alpha+\beta)x + (\alpha-\beta) \right] P_n^{(\alpha, \beta)}(x) \\
        &\quad - 2(n+1)(n+\alpha+\beta+1) P_{n+1}^{(\alpha, \beta)}(x).
    \end{aligned}
    \]

     \item \textbf{R.-R.~polynomials} $Q_n^{(\alpha, \beta)}(x)$
    \begin{align*}
        &Q^{(\alpha,\beta)\prime}_{n}(x)=n(2\beta+n-1)Q^{(\alpha,\beta)}_{n-1}(x),\\&(1+x^2)Q^{(\alpha,\beta)\prime}_{n}(x)=Q^{(\alpha,\beta)}_{n+1}(x)-(2(\beta-n-1)x+\alpha)Q_{n}^{(\alpha,\beta)}(x).
    \end{align*}
\end{itemize}

\subsection{Differential equations satisfied by the eigenvalue densities of classical ensembles}

Let $K_N(x,x)$ be specified by (\ref{k0}). We define the operator $\theta:f(x)\mapsto \frac{d}{dx}(\sigma(x)f(x))$, which will be used extensively to simplify notation.
As a simple consequence of equation \eqref{diff-eq-pN}, we have 
\begin{align}\label{k1}
    \theta K_N(x,x)=\frac{(\lambda_N-\lambda_{N-1})w(x)}{h_{N-1}}p_{N}(x)p_{N-1}(x).
\end{align}
Further applications of $\theta$ to $K_N(x,x)$ give
\begin{align}
    &\begin{aligned}\label{k2}
        \theta^2 &K_N(x,x)= \tau(x)\theta K_{N}(x,x)+\frac{(\lambda_N-\lambda_{N-1})\sigma(x)w(x)}{h_{N-1}}  (p_{N}'(x)p_{N-1}(x)+p_{N}(x)p_{N-1}'(x))\\
        &=\tau(x)\theta K_{N}(x,x)+(\lambda_N-\lambda_{N-1})\sigma(x)K_N(x,x)\\
        &\qquad +\frac{2(\lambda_N-\lambda_{N-1})\sigma(x)w(x)}{h_{N-1}}p_{N}(x)p_{N-1}'(x),
    \end{aligned}\\
    &\begin{aligned}\label{k3}
        \theta^3K_N&=\tau'\sigma\theta K_N+\tau \theta^2 K_N\\
        &\qquad +\frac{(\lambda_N-\lambda_{N-1})w}{h_{N-1}}(\sigma'(p_Np_{N-1})'+(\lambda_N+\lambda_{N-1})\sigma p_Np_{N-1}+2\sigma^2 p_N'p_{N-1}')\\
    &=\tau'\sigma\theta K_N+\tau \theta^2 K_N+\sigma'(\theta^2 K_N-\tau \theta K_N)+(\lambda_N+\lambda_{N-1})\sigma \theta K_N\\
    &\qquad +\frac{2(\lambda_N-\lambda_{N-1})\sigma^2w}{h_{N-1}}p_{N}'p_{N-1}',
    \end{aligned}
\end{align}
where $\tau$ and $\sigma$ are the polynomials in Pearson's equation \eqref{Pearson-eq}.
Equations \eqref{k0},\eqref{k1},\eqref{k2} and \eqref{k3} are linear equations of five unknown quantities $p_N'p_{N-1}',p_Np_{N-1}',p_N'p_{N-1},p_Np_{N-1}$, $K_N$. In order to obtain a differential equation for $K_N$, we need an extra equation to form a closed system.
This can be accomplished through the raising and lowering operators for the classical orthogonal polynomials.  By multiplying together both sides of equation \eqref{lower} and equation \eqref{raise}, with the latter having $N$ replaced by $N-1$, we obtain
\begin{align}\label{eq5}
    \begin{aligned}
        (c_N&-a_Ne_N)(a_{N-1}e_{N-1}-c_{N-1}) p_{N-1}p_N\\
    =&(c_N(d_N-x)-b_Ne_N)\sigma p_Np_{N-1}'+(a_{N-1}(d_{N-1}-x)-b_{N-1})e_{N}\sigma p_N'p_{N-1}\\
    &+(c_{N}(d_N-x)-b_N e_N)(a_{N-1}(d_{N-1}-x)-b_{N-1})p_Np_{N-1}+e_N \sigma^2p_N'p_{N-1}'.
    \end{aligned}
\end{align}
This can be further specified according to the following:
\begin{itemize}
    \item \textbf{Hermite Polynomials} $H_N(x)$
    \[
    H_N' H_{N-1}' - 2x H_N' H_{N-1}= -2N H_NH_{N-1}.
    \]
    \item \textbf{Laguerre Polynomials} $L_N^{(\alpha)}(x)$
    \[
    x L_N^{(\alpha)\prime} L_{N-1}^{(\alpha)\prime} + (N+\alpha-x) L_N^{(\alpha)\prime} L_{N-1}^{(\alpha)} - N L_N^{(\alpha)} L_{N-1}^{(\alpha)\prime} = -N L_N^{(\alpha)} L_{N-1}^{(\alpha)}.
    \]

    \item \textbf{Jacobi Polynomials} $P_N^{(\alpha, \beta)}(x)$ \\
    Let $\mathcal{D}_N = 2N+\alpha+\beta$, we have
    \[
    \begin{aligned}
        \mathcal{D}_N(\mathcal{D}_N-2)(1-x^2) &P_N^{(\alpha, \beta)\prime} P_{N-1}^{(\alpha, \beta)\prime} 
         + N(\mathcal{D}_N-2) \left[ \mathcal{D}_N x + (\alpha-\beta) \right] P_N^{(\alpha, \beta)} P_{N-1}^{(\alpha, \beta)\prime} \\
         &\qquad +4N(N+\alpha)(N+\beta)(N+\alpha+\beta-1) P_N^{(\alpha, \beta)} P_{N-1}^{(\alpha, \beta)}\\
        &=\mathcal{D}_N (N+\alpha+\beta-1) \left[ (\mathcal{D}_N-2)x + (\alpha-\beta) \right] P_N^{(\alpha, \beta)\prime} P_{N-1}^{(\alpha, \beta)}.
    \end{aligned}
    \]
     \item \textbf{R.-R.~polynomials} $Q^{(\alpha,\beta)}_{N}(x)$
     \begin{align*}
         (x^2+1)Q_{N-1}^{(\alpha,\beta)\prime}Q_{N}^{(\alpha,\beta)\prime}+(2(\beta-N)x+\alpha)Q_{N-1}^{(\alpha,\beta)}Q_{N}^{(\alpha,\beta)\prime}=2(2\beta+N-1)Q_{N-1}^{(\alpha,\beta)}Q_{N}^{(\alpha,\beta)}.
     \end{align*}
\end{itemize}
Therefore, eliminating the terms from equation \eqref{k0},\eqref{k1},\eqref{k2},\eqref{k3} and \eqref{eq5} leads to a third order differential equation for the density $\rho_{(1),N}(x) $ of the classical unitary ensembles.

\section*{Appendix B: Direct determination of \texorpdfstring{$\{u_j(\gamma)\}$}{u_j(γ)}}
\renewcommand{\thesection}{B} 
\renewcommand{\theHsection}{B} 
\setcounter{equation}{0}
\setcounter{theorem}{0}

The bilateral Laplace transform of the soft edge scaled GUE density is defined by
\begin{equation}\label{B.1}
    F_N(\gamma):= \int_{-\infty}^\infty e^{\gamma x} \rho_{(1),N}^{\rm GUE,s}(x) \, dx, \quad \rho_{(1),N}^{\rm GUE,s}(x) := {1 \over \sqrt{2} N^{1/6}}\rho_{(1),N}^{\rm GUE}(\sqrt{2N} + x/(\sqrt{2} N^{1/6})).
    \end{equation}
 Substituting in (\ref{2.2}) and recalling (\ref{2.6}) we see that for large $N$,
 \begin{equation}\label{B.1a}
  F_N(\gamma) = u_0(\gamma) + {1 \over N^{2/3}} u_1(\gamma) +
  {1 \over N^{4/3}} u_2(\gamma) + \cdots.
  \end{equation} 
Here, we would like to undertake a direct computation of the large $N$ expansion of $F_N(\gamma)$, and thus of $\{u_j(\gamma)\}$ (at least for small $j$).

We begin by a simple change of variables to obtain from (\ref{B.1}) that
\begin{equation}\label{B.2}
 F_N(\gamma) = e^{-2 \gamma N^{2/3}} \int_{-\infty}^\infty e^{\gamma \sqrt{2} N^{1/6} y} \rho_{(1),N}^{\rm GUE}(y) \, dy.
  \end{equation}
  Using now a result of Ullah from 1985 \cite{Ul85} for the Fourier transform of the GUE density we can evaluate the integral to obtain
 \begin{equation}\label{B.3} 
  F_N(\gamma) = e^{-2 \gamma N^{2/3}} e^{\gamma^2 N^{1/3} / 2} 
  L_{N-1}^{(1)}(-\gamma^2 N^{1/3}),
   \end{equation}
   where $L_N^{(\alpha)}(x)$ denotes the Laguerre polynomial. The task now is to obtain the large $N$ expansion of the particular Laguerre polynomial in (\ref{B.3}).

   For this we introduce the contour integral form of the Laguerre polynomial
  \begin{equation}\label{B.4}   
  L_{N-1}^{(1)}(-z) = {e^{-z} \over 2 \pi i} {1 \over c} \oint_{|(s/c)-1|=R} \Big ( 1 - {c \over s} \Big )^{-N} e^{z s/c} \, ds,
   \end{equation}
   where $R>0, c> 0$ are arbitrary; see \cite[\S 2]{BBC08}. This with $c=\gamma/N^{1/3}$ and $z = \gamma^2 N^{1/3}$ substituted in (\ref{B.3})
   gives
   \begin{equation}\label{B.5} 
  F_N(\gamma) = e^{- 2\gamma N^{2/3}} e^{\gamma^2 N^{1/3} / 2} 
  {N^{1/3} \over \gamma} {1 \over 2 \pi i} 
  \oint_{|(N^{1/3} s/\gamma)-1|=R} \Big ( 1 - {\gamma \over N^{1/3} s} \Big )^{-N} e^{\gamma N^{2/3} s} \, ds.
   \end{equation}
We remark that from this integral representation, the large $N$ expansion (\ref{B.1a}) is far from evident. Nonetheless, we will show that at least for small orders it provides a systematic computation scheme.

With a view to applying the saddle point method, we expand
 \begin{equation}\label{B.6}
 \Big ( 1 - {\gamma \over N^{1/3} s} \Big )^{-N}  =
 \exp \bigg ( N \sum_{l=1}^\infty {1 \over l}  \Big ( {\gamma \over N^{1/3}s} \Big )^l \bigg ). 
 \end{equation}
 Substituting in (\ref{B.5}) shows that the leading order exponential factors  in the integrand are $e^{\gamma N^{2/3}(s + 1/s)} $. This has critical points $s=\pm 1$, of which $s=1$ maximises these factors. Hence, according to the saddle point method, the contour of integration should be deformed to pass through this point. Since the second derivative of $s + 1/s$ is positive, the direction of steepest descent is parallel to the imaginary axis, which tells us to introduce the variable $t$ according to
 $(s-1) = it/(N^{1/3} \gamma^{1/2})$. Doing this, the leading order exponential factors reduce to $e^{2\gamma N^{2/3}} e^{-t^2}$ times terms lower order in $N$. An asymptotic expansion to all algebraic orders in $N$ requires that the full expression (\ref{B.6}) be similarly expanded about $s=1$.

 With
  \begin{equation}\label{B.7}
  \alpha_{k,p} := \sum_{l=p+1}^\infty {(l)_k \over l} \Big ( {\gamma \over N^{1/3}} \Big )^l,
  \end{equation} 
a straightforward calculation gives from the expansion of the integrand about $s=1$, and the introduction of the variable $t$, the asymptotic expression
 \begin{multline}\label{B.8}
 F_N(\gamma) \sim {e^{\gamma^3/12} \over 2 \pi \gamma^{3/2}}
 \int_{-\infty}^\infty e^{- (t + i \gamma^{3/2}/2)^2} e^{N(\alpha_{0,3} + \alpha_{1,2} (-it/(N^{1/3}\gamma^{1/2})) + \alpha_{2,1} (-it/(N^{1/3} \gamma^{1/2}))^2/2!)} \\
 \times \prod_{j=3}^\infty e^{N \alpha_{j,0} (-it/(N^{1/3} \gamma^{1/2}))^j/j!} \, dt.
 \end{multline}
 All factors in the integrand, apart from the first, go to zero like powers of $N^{-1/3}$. Explicitly, to first order in $N^{-1/3}$, we calculate
   \begin{equation}\label{B.9}
 F_N(\gamma) \sim {e^{\gamma^3/12} \over 2 \pi \gamma^{3/2}}
 \int_{-\infty}^\infty e^{- (t + i \gamma^{3/2}/2)^2} \Big ( 1 +
 N^{-1/3} \Big ( {\gamma^4 \over 4} - i \gamma^{5/2} t - {3 \over 2} \gamma t^2 +
 {i \over \gamma^{1/2} } t^3 \Big ) \Big )\, dt.
  \end{equation} 
  Changing variables $t \mapsto t - i  \gamma^{3/2}/2$, and noting that all odd monomials integrate to zero against $e^{-t^2}$, we are left with only the even monomials to consider. For the term of order $N^{-1/3}$, as can be read off from (\ref{B.9}), 
  these terms are seen to add up to zero.
  Thus, we conclude that there is no term in the asymptotic expansion at order $N^{-1/3}$, in agreement with (\ref{2.2}). 

  The calculation of the asymptotic expansion to higher orders in $N^{-1/3}$ starting from (\ref{B.8}) is well suited to the use of computer algebra. In particular, expanding up to and including order $N^{-2}$, we reclaim $u_1(\gamma)$, $u_2(\gamma)$ as presented in (\ref{C1.1}), and further $u_3(\gamma)$ as specified by (\ref{2.7}) with $b$ given by (\ref{2.8}).

  \begin{remark}
  For large $N$ but
 fixed $z$ it is known \cite{VA01,BBC08} that 
   \begin{equation}\label{B.10}
   L_N^{(-a)}(-z) \sim {e^{-z/2} \over 2 \sqrt{\pi}} 
   {e^{2 \sqrt{Nz}} \over z^{1/4 - a/2} N^{1/4+a/2}} \bigg (
   1 + \Big ( {3 -12 a^2 + 24(1-a) z + 4 z^2 \over 48 \sqrt{z}}
   \Big ) {1 \over N^{1/2}} + {\rm O} \Big ( {1 \over N}\Big ) \bigg ).
   \end{equation}
   In comparison, from (\ref{B.3}) and (\ref{B.9}) we have
   \begin{equation}\label{B.11}
   L_N^{(1)}(-\gamma^2 N^{1/3}) \sim  e^{2 \gamma N^{2/3}}
   e^{-\gamma^2 N^{1/3}/2} {e^{\gamma^3/12} \over 2 \sqrt{\pi} \gamma^{3/2}}
   \Big ( 1 + {\rm O}(N^{-2/3}) \Big ).
   \end{equation} 
   Formally substituting $z = \gamma^2 N^{1/3}$ in (\ref{B.10}) with $a=-1$
   gives
     \begin{equation}\label{B.12}
      L_N^{(1)}(-\gamma^2 N^{1/3}) \sim  e^{2 \gamma N^{2/3}}
   e^{-\gamma^2 N^{1/3}/2} {1 \over 2 \sqrt{\pi} \gamma^{3/2}}
   \Big ( 1 + {\gamma^3 \over 12} + \cdots \Big ).
    \end{equation}
    This is consistent with (\ref{B.11}) if it should be that each successive term in the $N^{-1/2}$ expansion inside the main brackets in
    (\ref{B.10}) should give successive terms in the power series expansion of $e^{\gamma^3/12}$, as is the case with the displayed term.
     
  \end{remark}

\section*{Appendix C: A \texorpdfstring{$\sigma$-Painlev\'e III$'$}{σ-Painlev´e III′} transcendent verification of (\ref{1.11k})}

\renewcommand{\thesection}{C} 
\renewcommand{\theHsection}{C} 
\setcounter{equation}{0}
\setcounter{theorem}{0}

Let $\sigma_0(x;\xi)$ be the $\sigma$-Painlev\'e III$'$ transcendent specified as the solution of the particular 
$\sigma$-Painlev\'e III$'$ equation
\begin{equation}\label{C1}
(x \sigma'')^2 + \sigma'(1 + 4 \sigma')(x \sigma' - \sigma) = (a \sigma')^2
\end{equation}
subject to the boundary condition
\begin{equation}\label{C3}
\sigma_0(x;\xi) \mathop{\sim}\limits_{x \to 0^+} - \xi x r_0^{\rm L,h}(x).
\end{equation}
Let $\sigma_1(x;\xi)$ be characterised as satisfying a particular second order linear differential equation,
\begin{equation}\label{2.17D}
A(x) \sigma_1'' + B(x) \sigma_1' + C(x) \sigma_1 = D(x),
\end{equation}
subject to the boundary condition
\begin{equation}\label{C4}
\sigma_1(x;\xi) \mathop{\sim}\limits_{x \to 0^+} - \xi x r_1^{\rm L,h}(x).
\end{equation}
In (\ref{2.17D}), the functions $A(x),\dots,D(x)$ depend on $\sigma_0$ and/or its first and second derivative; their explicit forms are given in \cite[Eqns.~(2.27) and (2.31)]{FT19}.

We have from \cite{TW94a,FW02} that
\begin{equation}\label{C5}
\mathcal E_{2}^{\rm h}((0,y);\xi)
= \exp\Big ( \int_0^y \sigma_0(x;\xi)
\, {dx \over x} \Big ),
\end{equation}
and from \cite{FT19} that
\begin{equation}
\mathcal E_{2}^{1,\rm L, h}((0,y);\xi)
= \exp\Big ( \int_0^y \sigma_1(x;\xi)
\, {dx \over x} \Big )
\end{equation}
(in \cite{FT19} $\sigma_1(x;\xi)$
is denoted
$\hat{\sigma}_2(x;\xi)$). 
Substituting in (\ref{1.11k}) shows that the latter is valid provided
\begin{equation}\label{C8}
\bigg ( {1 \over 48} y + {a^2 - 2 \over 24} \bigg ) \sigma_0 + {1 \over 12} \Big ( y \sigma_0' + \sigma_0^2 \Big ) = \int_0^y {\sigma_1(x;\xi) \, {dx \over x} }.
\end{equation}
Differentiating this expresses $\sigma_1(y;\xi)$ in terms of $\{y,\sigma_0,\sigma_0',\sigma_0''\}$. Substituting (\ref{C3}) and (\ref{C4}) regarded as functional forms for small $\xi$ gives precisely the differential relation 
(\ref{4.22a}) expressing $r_1^{\rm L,h}(y)$ in terms of $r_0^{\rm L,h}(y)$. It then remains to establish (\ref{2.17D}) with $\sigma_1(y;\xi)$ so expressed. 
Analogous tasks have been met in \cite{FPTW19,FM23,FS25}. The new form of (\ref{2.17D}) is an identity involving $y, \sigma_0$ and its first four derivatives. The third and fourth derivatives can be eliminated by differentiating (\ref{C8}) an appropriate number of times. This leaves a
differential identity involving $\{y,\sigma_0,\sigma_0',\sigma_0''\}$. The use of computer algebra allows for the latter to be factorised, with one of the factors vanishing due to (\ref{C1}).

\section*{Appendix D: Hard edge scaling and hypergeometric polynomials}

\renewcommand{\thesection}{D} 
\renewcommand{\theHsection}{D} 
\setcounter{subsection}{0}
\setcounter{equation}{0}
\setcounter{theorem}{0}

\subsection{Scaled large \texorpdfstring{$N$}{N} expansion of \texorpdfstring{$L_N^{(\alpha)}(x/N)$}{LN(α)(x/N)}}

The Laguerre polynomials are expressed as particular hypergeometric polynomials according to
\begin{equation}\label{E1}
L_N^{(\alpha)}(x) = \binom{N+\alpha}{N}
\, {}_1 F_1 (-N;\alpha+1;x).
\end{equation}
Here, we consider the large $N$ expansion of the hypergeometric polynomial in (\ref{E1}) with $x \mapsto x/N$, up to and including terms of order $1/N^4$.

\begin{proposition}\label{PE1}
Let ${\sf D}_x^p := x^p {d^p \over d x^p}$. For large $N$, we have
\begin{multline}\label{E2}
 {}_1 F_1 (-N;\alpha+1;x/N) = \bigg ( 1 - {1 \over N}  {1 \over 2} {\sf D}_x^2 
 +
 {1 \over N^2} \Big ( {1 \over 8} {\sf D}_x^4 +  {1 \over 3} {\sf D}_x^3 \Big ) 
 -  {1 \over N^3} \Big ( {1 \over 4 8} {\sf D}_x^6 +  {1 \over 6} {\sf D}_x^5  +  {1 \over 4} {\sf D}_x^4 \Big ) \\
 + {1 \over N^4} \Big ( {1 \over 384} {\sf D}_x^8 +  {1 \over 24} {\sf D}_x^7  +  {13 \over 72} {\sf D}_x^6  + {1 \over 5}  {\sf D}_x^5\Big ) + \cdots \bigg ) \,  {}_0 F_1 (\alpha+1;-x)
 \end{multline}
 and
 \begin{multline}\label{E2a}
 {}_1 F_1 (-N+1;\alpha+1;x/N) = \bigg ( 1 - {1 \over N} \Big ( {1 \over 2} {\sf D}_x^2 + {\sf D}_x \Big ) 
 + {1 \over N^2 } \Big ( {1 \over 8} {\sf D}_x^4 +  {5 \over 6} {\sf D}_x^3  + {\sf D}_x^2 \Big ) 
 -  {1 \over N^3} \Big ( {1 \over 4 8} {\sf D}_x^6 +  {7 \over 24} {\sf D}_x^5   \\   +  {13 \over 12} {\sf D}_x^4  + {\sf D}_x^3 \Big )
 + {1 \over N^4} \Big ( {1 \over 384} {\sf D}_x^8 +  {1 \over 16} {\sf D}_x^7  +  {17 \over 36} {\sf D}_x^6  + {77 \over 60}  {\sf D}_x^5  + {\sf D}_x^4 \Big ) + \cdots \bigg ) \,  {}_0 F_1 (\alpha+1;-x)
 \end{multline}
\end{proposition}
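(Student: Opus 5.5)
Write $f(x) := {}_1F_1(-N;\alpha+1;x/N)$ as the terminating series
\[
f(x) = \sum_{k=0}^N \frac{(-N)_k}{(\alpha+1)_k\, k!}\frac{x^k}{N^k} = \sum_{k=0}^N \frac{(-x)^k}{(\alpha+1)_k\, k!}\, \pi_k(N),
\qquad \pi_k(N) := \prod_{l=0}^{k-1}\Big(1 - \frac{l}{N}\Big).
\]
The plan is to expand $\pi_k(N)$ in powers of $1/N$ with coefficients polynomial in $k$, and then to convert the polynomials in $k$ into the operators ${\sf D}_x^p$.

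The key observation is that ${\sf D}_x^p x^k = k(k-1)\cdots(k-p+1)\,x^k = (k-p+1)_p\, x^k$. Hence any polynomial in $k$, once written in the falling-factorial basis $\{k^{\underline p}\}$, acts on ${}_0F_1(\alpha+1;-x) = \sum_k (-x)^k/((\alpha+1)_k k!)$ as the corresponding linear combination of ${\sf D}_x^p$.

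\textbf{Step 1.} Write $\log \pi_k(N) = -\sum_{m\ge1} \frac{1}{m N^m} S_m(k)$, with $S_m(k)=\sum_{l=0}^{k-1} l^m$ the Faulhaber polynomials. Exponentiating gives $\pi_k(N) = \sum_j N^{-j} P_j(k)$, where $P_j$ is a polynomial of degree $2j$. For example, $P_1(k) = -\binom{k}{2} = -\tfrac12 k^{\underline 2}$, which reproduces the term $-\tfrac12{\sf D}_x^2$. Equivalently, and more cleanly, $P_j(k) = (-1)^j e_j(0,1,\dots,k-1) = s(k,k-j)$, a Stirling number of the first kind. It is known that $s(k,k-j)$ is a polynomial in $k$ of degree $2j$ with an explicit falling-factorial expansion; for instance, $s(k,k-2) = 3\binom{k}{4}+2\binom{k}{3}$ gives $\tfrac18{\sf D}^4+\tfrac13{\sf D}^3$.

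\textbf{Step 2.} Compute the falling-factorial expansions of $s(k,k-j)$ for $j=3,4$, either via the associated Stirling numbers (the expansion $s(k,k-j)=\sum_i (-1)^j \langle\langle j,i\rangle\rangle \binom{k+i-1}{2j}$-type formulas), or simply by matching values at $k=0,\dots,2j$. Dividing $\binom{k}{p}$ by $p!$ to get $k^{\underline p}/p!\cdot$ coefficient yields the coefficients in (\ref{E2}).

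\textbf{Step 3.} For (\ref{E2a}) the factor is instead $\prod_{l=1}^{k}(1-l/N)$. This equals $\pi_k(N)(1-k/N)$, so we have $P^{(1)}_j(k)=P_j(k)-kP_{j-1}(k)$. We also use $k\cdot k^{\underline p}=k^{\underline{p+1}}+p\,k^{\underline p}$, i.e.\ the operator identity $x\frac{d}{dx}{\sf D}^p_x = {\sf D}^{p+1}_x + p\,{\sf D}^p_x$. For $j=1$ this gives $-\tfrac12{\sf D}^2-{\sf D}$, as required; higher orders follow mechanically.

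\textbf{Step 4 (main obstacle).} Justify the interchange of the asymptotic expansion with the sum. The truncation at $k=N$ is harmless, since the terms with $k>N$ in ${}_0F_1$ are super-exponentially small. The real issue is that the remainder bound for $\pi_k(N)-\sum_{j\le J}N^{-j}P_j(k)$ is only of order $k^{2J+2}/N^{J+1}$ uniformly for $k\le N$. Summing this against $x^k/(k!)^2$ converges for $x$ in compacts, which gives the expansion uniformly on compact sets. Checking this uniformity carefully is where the care is needed; the algebra in Steps 1–3 is routine (and checkable by computer algebra).
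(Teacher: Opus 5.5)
Your proposal is correct and is essentially the paper's proof. Like the paper, you write $(-N)_p/N^p=(-1)^p\prod_{l=1}^{p-1}(1-l/N)$ and expand this product in powers of $1/N$ through elementary symmetric sums (your $s(k,k-j)$). The coefficients are then polynomials in $p$ of degree $2j$, which you recognise, via the falling-factorial basis, as ${\sf D}_x^p$ acting on ${}_0F_1(\alpha+1;-x)$.

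Your recursion $P^{(1)}_j(k)=P_j(k)-kP_{j-1}(k)$ for (\ref{E2a}) is equivalent to the paper's shift of the upper terminal from $p-1$ to $p$ in (\ref{E3}). Your uniform remainder bound in Step 4 additionally justifies the term-by-term expansion, which the paper leaves implicit.
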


\begin{proof}
Consider first (\ref{E2}). We use the facts that
$$
{}_1 F_1 (-N;\alpha+1;x/N) = \sum_{p=0}^\infty {(-N)_p \over p! (\alpha+1)_p} \Big ( {x \over N} \Big )^p, \quad 
{(-N)_p \over N^p} = (-1)^p \prod_{l=1}^{p-1} \Big (
1 - {l \over N} \Big ),
$$
and then the expansion
\begin{equation}\label{E3}
    \prod_{l=1}^{p-1} \Big (
1 - {l \over N} \Big ) = 1 - {1 \over N} \sum_{l=1}^{p-1} l +
{1 \over N^2} \sum_{1 \le l_1 < l_2 \le p-1} l_1 l_2 -{1 \over N^3}
 \sum_{1 \le l_1 < l_2 < l_3 \le p-1} l_1 l_2 l_3 + \cdots .
\end{equation}
The sums, with the assistance of computer algebra, can be expressed in closed form, giving a polynomial in $p$ of degree $2k$ for the term relating to $N^{-2k}$. Substituted inside the summand, the resulting expressions are recognised in terms of derivatives of the limiting expression $ {}_0 F_1 (\alpha+1;-x)$ as specified.

The derivation of (\ref{E2a}) proceeds analogously, now making use of (\ref{E3}) with the upper terminal $p-1$ replaced by $p$.
\end{proof}

\subsection{Functional property of \texorpdfstring{$K_N^{\rm LUE}(x/4N_{{\rm s},a}',y/4N_{{\rm s},a}')$}{the hard edge scaled kernel}}
From the Christoffel--Darboux formula \cite[Eq.~(5.10)]{Fo10} with the quantities therein given as specified in \cite[\S 5.4.1]{Fo10}, then with the Laguerre polynomials substituted by  use of (\ref{E1}), the LUE correlation kernel permits the explicit form
\begin{multline}\label{E3a}
K_N^{\rm LUE}(x,y) = 
 {\Gamma(N+a+1) \over (\Gamma(a+1))^2 \Gamma(N)} {(xy)^{a/2} e^{-(x+y)/2} \over x - y} \Big (
  {}_1 F_1 (-N;a+1;y)  {}_1 F_1 (-N+1;a+1;x)  \\
 - {}_1 F_1 (-N;a+1;x)  {}_1 F_1 (-N+1;a+1;y)
    \Big ).
  \end{multline}
  Use of the Kummer transformation
  \begin{equation}\label{E3b}
   {}_1 F_1 (a;c;x) = e^{x} {}_1 F_1 (c-a;c;-x)
 \end{equation}
 allows for a functional equation to be established relating to $K_N^{\rm LUE}(x,y)$ with hard edge scaling.

 \begin{proposition}\label{PE2}
     Let $N_{\rm h}'=N + a/2$, and specify $K_N^{\rm LUE}(x,y)$ by the right-hand side of (\ref{E3a}). We have
   \begin{equation}\label{E3c}  
  {1 \over 4N_{\rm h}'} K_N^{\rm LUE}\Big ( {x \over 4N_{\rm h}'},{y \over 4N_{\rm h}'} \Big ) =   {1 \over 4N_{\rm h}'} K_N^{\rm LUE} \Big ( {x \over 4N_{\rm h}'},{y \over 4N_{\rm h}'} \Big ) \Big |_{N_{\rm h}' \mapsto - N_{\rm h}'}.
   \end{equation}
   Hence, the $1/N_{\rm h}'$ expansion must proceed via powers of 
   $1/(N_{\rm h}')^2$.
 \end{proposition}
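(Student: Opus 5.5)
The plan is to verify (\ref{E3c}) directly from (\ref{E3a}) by writing $w := N_{\rm h}' = N + a/2$, so that $N = w - a/2$ and the map $N_{\rm h}' \mapsto -N_{\rm h}'$ reads $N \mapsto -N - a$. Set $X = x/(4w)$ and $Y = y/(4w)$. Since $N$ is an integer, the identity cannot hold literally for the finite-$N$ kernel. I would therefore interpret both sides as follows. The series defining ${}_1F_1(-N;a+1;X)$ and ${}_1F_1(-N+1;a+1;X)$ are expanded term by term in $p$. With $X = x/(4w)$, each coefficient $(-N)_p/(4w)^p$ is a rational function of $w$. The Gamma ratio is replaced by its large-$w$ asymptotic series. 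The identity is then to be proved at the level of these formal expansions in $1/w$, which is exactly what is needed for the conclusion about even powers.

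\textbf{Step 1 (prefactors).} In (\ref{E3a}) the scaled arguments give
\[
\frac{1}{4w}\,\frac{(XY)^{a/2}}{X-Y} = (4w)^{-a}\,\frac{(xy)^{a/2}}{x-y}.
\]
The only remaining $w$-dependence outside the hypergeometric part is therefore $(4w)^{-a}\Gamma(N+a+1)/\Gamma(N)$, where $\Gamma(N+a+1)/\Gamma(N) = \Gamma(w+b)/\Gamma(w+1-b)$ with $b = 1 + a/2$. I would invoke the classical fact (Tricomi--Erd\'elyi \cite{TE51}) that $\Gamma(w+b)/\Gamma(w+1-b)$ has an asymptotic expansion $w^{2b-1}\sum_k g_k w^{-2k}$ containing only even powers in the bracket. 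This is consistent with (\ref{EZ1}) and with (\ref{E3}) giving coefficients polynomial in $a$. Hence the prefactor equals $w \cdot (\text{even series in } w)$ times a $w$-independent constant, and so it changes sign under $w \mapsto -w$.

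\textbf{Step 2 (hypergeometric part).} Under $w \mapsto -w$ we have $X \mapsto -X$, $Y \mapsto -Y$, $-N \mapsto N+a$ and $-N+1 \mapsto N+a+1$. Applying the Kummer transformation (\ref{E3b}) gives
\[
{}_1F_1(N+a;a+1;-X) = e^{-X}\,{}_1F_1(1-N;a+1;X), \qquad {}_1F_1(N+a+1;a+1;-X) = e^{-X}\,{}_1F_1(-N;a+1;X).
\]
So the two hypergeometric functions in (\ref{E3a}) are interchanged, and each acquires a factor $e^{-X}$ or $e^{-Y}$. This factor combines with $e^{-(X+Y)/2} \mapsto e^{(X+Y)/2}$ to restore $e^{-(X+Y)/2}$. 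Because the bracket in (\ref{E3a}) is antisymmetric under interchanging the two functions, it changes sign. This sign cancels the sign from Step 1, which proves (\ref{E3c}). The Kummer identity holds coefficientwise as an identity of rational functions in $w$, so it is legitimate inside the formal expansion.

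\textbf{Step 3 (conclusion).} Proposition \ref{PE1} shows that the scaled kernel has an expansion in integer powers of $1/N$, and hence of $1/w$. Invariance under $w \mapsto -w$ then kills all odd powers. The main obstacle is Step 1: it requires making rigorous the sense in which the Gamma ratio ``changes sign''. One cannot use the reflection formula at finite $w$, since it produces sine factors. The argument must instead go through the even structure of the asymptotic series for $\Gamma(w+b)/\Gamma(w+1-b)$. I would establish that evenness by checking, via (\ref{E3}) for $a \in \mathbb Z_{\ge 0}$, that the expansion of $\prod_{l=0}^{a}(w - a/2 + l)$ is symmetric under $w \mapsto -w$ up to the factor $(-1)^{a+1}$, and then extending polynomially in $a$.
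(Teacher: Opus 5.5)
Your proof is correct and is essentially the paper's argument: Kummer's transformation (\ref{E3b}) interchanges the two hypergeometric factors and restores $e^{-(X+Y)/2}$, while the Gamma-ratio prefactor is handled via the integer-$a$ product $\prod_{k=0}^{a}(N_{\rm h}'-a/2+k)$ followed by polynomial extension in $a$, as in the text around (\ref{EZ}). The differences are cosmetic: you absorb the factor $4N_{\rm h}'$ from $1/(X-Y)$ into the prefactor, so you pair two odd pieces where the paper pairs two even ones, and you work with formal $1/N_{\rm h}'$ expansions; note, though, that for integer $a$ the identity does hold literally as one between analytic functions of $N_{\rm h}'\neq 0$, so your formal reinterpretation is only essential for non-integer $a$.
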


 \begin{proof}
     We read off from (\ref{E3c}) that the $N,N_{\rm h}'$ dependent prefactors coming from the factor of $(4N_{\rm h}')^{-1}$ in (\ref{E3c}) and the factor of
     \begin{equation*}
        \frac{\Gamma(N+a+1)(xy)^{a/2}}{(\Gamma(a+1))^2\Gamma(N)}
     \end{equation*}
     in (\ref{E3a}) are
     $$
     \Big ( {1 \over 4 N_{\rm h}'} \Big )^{a+1} {\Gamma(N+a+1) \over  \Gamma(N)} =  \Big ( {1 \over 4 N_{\rm h}'} \Big )^{a+1}  {\Gamma(N_{\rm h}'+a/2+1) \over  \Gamma(N_{\rm h}'-a/2)},
     $$
     where the equality follows from the relation $N_{\rm h}' = N +a/2$. Temporarily assuming that $a$ is an integer shows that the ratio of gamma functions reduces to $\prod_{k=0}^a(N_{\rm h}'-a/2+k)$, which when divided by $(4N_{\rm h}')^{a+1}$ can be verified to return a quantity even in $N_{\rm h}'$; for the removal of the assumption on $a$, see the text relating to (\ref{EZ}) above. That the remaining $x,y$ dependent terms in (\ref{E3a}) as implied by the left-hand side of (\ref{E3c}) are even in $N_{\rm h}'$ is immediate upon the definition of $N_{\rm h}'$ and use of the Kummer transformation (\ref{E3b}).
 \end{proof}

\section*{Appendix E: Details of the proof of Proposition \ref{PEv}}

\renewcommand{\thesection}{E} 
\renewcommand{\theHsection}{E} 
\setcounter{equation}{0}
\setcounter{theorem}{0}

 We focus attention on (\ref{2.RLy}).
 We know from (\ref{4.16}) that
 \begin{equation}
 {1 \over 4 (N - 1 + a/2)}   \rho_{(1),N-1}^{\rm L U E,h}\Big ( {y \over 4 (N-1+a/2)} \Big )=
 r_0^{\rm L,h}(y) + {1 \over (N-1+a/2)^2} r_1^{\rm L,h}(y) + \cdots.
     \end{equation}
     Recalling that for $\beta = 1$, $\hat{N}_{\rm h}' = N + (a-1)/2$, it follows that
\begin{equation}\label{5.29b}
 {1 \over 4 \hat{N}_{\rm h}'} \rho_{(1),N-1}^{\rm L U E,h}\Big ( {y \over 4 \hat{N}_{\rm h}'} \Big ) =
     \Big ( 1 - {1 \over 2 \hat{N}_{\rm h}'} \Big ) 
 r_0^{\rm L,h} \bigg (  \Big ( 1 - {1 \over 2 \hat{N}_{\rm h}'} \Big ) y \bigg ) +    {1 \over ( \hat{N}_{\rm h}')^2} 
  r_1^{\rm L,h}(y)  + \cdots.
  \end{equation}
 With knowledge of the explicit functional forms of $r_i^{\rm h}(y)$ ($i=0,1)$ from (\ref{4.17}), the sought expansion can be carried out with the aid of computer algebra. As found in the context of Proposition \ref{P4.7}, final paragraph of the proof, the Mathematica output for this step involves $J_{a-1}(\sqrt{y})$, which we substitute for using the identity $J_{a-1}(u) = (a/u) J_a(u) + J_a'(u)$.

 Using (\ref{E1}), we have
 $$
 {1 \over 4 \hat{N}_{\rm h}'} {(N-1)! \over  \Gamma(a-1+N)}
  L_{N-1}^{(a)}
  \Big ({y \over 4  \hat{N}_{\rm h}'} \Big ) =
  {N+a-1 \over 4 \Gamma(a+1) \hat{N}_{\rm h}'} \,
  \, {}_1 F_1 \Big ( - N +1 ; a+1; -{y \over 4\hat{N}_{\rm h}'} \Big ).
 $$
 Making use of (\ref{E2}), we see that this
 multiplied by $y^{(a-1)/2}e^{-y/4\hat{N}_{\rm h}'}$
 expands for large
 $\hat{N}_{\rm h}'$ as 
  \begin{multline}\label{5.30}
  {1 \over 4 \Gamma(a+1)} \bigg ( 1 +  {a-1 \over  2 \hat{N}_{\rm h}'}  \bigg ) \Big ( 1 -
  {y \over 2 (4 \hat{N}_{\rm h}')} + {y^2 \over 8  (4 \hat{N}_{\rm h}')^2} + \cdots \Big )\\
  \times \bigg ( 
   1 - {1 \over (N-1)}  {1 \over 2} {\sf D}_x^2 
 + {1 \over (N-1)^2 } \Big ( {1 \over 8} {\sf D}_x^4 +  {1 \over 3} {\sf D}_x^3 \Big ) + \cdots \bigg )
   {}_0 F_1 (   a+1; -x) \Big |_{x = y(N-1)/(4\hat{N}_{\rm h}')},
  \end{multline}
 although the expansion of the second line in inverse powers of $\hat{N}_{\rm h}'$ has not been finalised. The required working is best left for a computer algebra computation. Doing so, while also using the identity $J_{a+1}(u) = (a/u) J_a(u) - J_a'(u)$ in the final output, gives for the expression in total,
  \begin{multline}\label{5.30a}
   2^{a-2} y^{-1/2} \bigg ( J_a(\sqrt{y}) + {(a-1) \over 2 \hat{N}_{\rm h}'} J_a(\sqrt{y})  + {1 \over 96} {1 \over  (\hat{N}_{\rm h}')^2} \\
\times
  \Big ( (2 a (a^2-1) + y )  J_a(\sqrt{y})
  - (2(a^2 - 1) + y)  \sqrt{y} J_{a}'(\sqrt{y}) \Big ) + \cdots \bigg ).
   \end{multline}

 Similarly, in relation to the integral in the final line of
 (\ref{2.RLy}) (including the factor of 2, and after multiplication by $(4 \hat{N}_{\rm h}')^{-(a-1)/2}$
 which comes from $x^{(a-1)/2} \mapsto (x/4 \hat{N}_{\rm h}')^{(a-1)/2}$ from the line above)
 we have that it expands as
  \begin{multline}\label{5.32}
  2 ( 4\hat{N}_{\rm h}')^{-a}
  \binom{N-2+a}{N-2} \int_0^{y} x^{(a-1)/2} \Big ( 1 -
  {x \over 2 (4 \hat{N}_{\rm h}')} + {x^2 \over 8  (4 \hat{N}_{\rm h}')^2} + \cdots \Big ) \\
  \times \bigg ( 
 1 - {1 \over (N-2)}  {1 \over 2} {\sf D}_u^2 
 + {1 \over (N-2)^2 } \Big ( {1 \over 8} {\sf D}_u^4 +  {1 \over 3} {\sf D}_u^3 \Big )  + \cdots \bigg )
   {}_0 F_1 (   a+1; -u) \Big |_{u = x^*} \, dx,
   \end{multline} 
   where $x^*:= x(N-2)/(4\hat{N}_{\rm h}')$, although
   the expansion inside the integral is still to be finalised. This is analogous to the task faced in
 (\ref{5.30}), which we know can be carried out using computer algebra. Doing this gives for the integral
  \begin{multline}\label{5.33b}
2^{a+1} \Gamma(a+1) \int_0^{\sqrt{y}} \bigg ( 
J_a(u) + {1 \over 2 \hat{N}_{\rm h}'}\Big (a J_a(u)
-u J_{a}'(u) \Big )+ {1 \over 96 (\hat{N}_{\rm h}')^2}  \\
\times \Big ( \Big (  - 11 u^2
 + a (22+2 a(12+a))  \Big )
 J_{a}(u)
 -(22+2 a(12+a) + u^2) u  J_{a}'(u)  \Big )
 + \cdots  \bigg ) \, du,
\end{multline} 
where as in (\ref{5.30a}) we have substituted for $J_{a+1}(y)$ as appeared in the output so as to involve
$J'_a(\sqrt{y})$, and we have further changed variables $u = \sqrt{x}$.

In fact, the integral in (\ref{5.33b}) can be evaluated, up to an additive term proportional to ${\rm JI}_a(\sqrt{y})$
at each order.
The first step is to apply integration by parts to the terms involving $u J_{a}'(u)$. The remaining integrals are then seen to be either proportional to 
 \begin{equation}\label{5.33x}
\widetilde{{\rm JI}_a}(\sqrt{y}):=\int_0^{\sqrt{y}} J_a(u) \, du = 1- {\rm JI}_a(\sqrt{y}),
\end{equation}
or proportional to $\int_0^{\sqrt{y}} u^2 J_a(u) \, du$.
Regarding the latter, we substitute for the integrand
$(a^2 - {\sf D}_2 - {\sf D}_1)  J_a(u)$ as is permitted by the differential equation for the Bessel function (\ref{4.22}). Integration by parts then shows
  \begin{equation}\label{5.33c}
\int_0^{\sqrt{y}} u^2 J_a(u) \, du =
(a^2 - 1) \widetilde{{\rm JI}_a}(\sqrt{y}) - y J_a'(\sqrt{y})
+ \sqrt{y} J_a(\sqrt{y}).
  \end{equation}
  Thus, we compute that (\ref{5.33b}) is equal to
   \begin{multline}\label{5.33d}
2^{a+1} \Gamma(a+1) \bigg (  \widetilde{{\rm JI}_a}(\sqrt{y}) +
{1 \over 2 \hat{N}_{\rm h}'}\Big ( (a+1)\widetilde{{\rm JI}_a}(\sqrt{y})
 - \sqrt{y} 
 J_a(\sqrt{y}) \Big )
+ {1 \over 96 (\hat{N}_{\rm h}')^2}  \\
\times \Big ( 2(a+5)(a+3)(a+1) \widetilde{{\rm JI}_a}(\sqrt{y})
 -(30+2 a(12+a) + y) \sqrt{y}  J_{a}(\sqrt{y}) + 8 y J_a'(\sqrt{y}) 
  \Big )
+ \cdots   \bigg ).
\end{multline}

 For the prefactor of the integral in (\ref{5.32}), we have the expansion
  \begin{equation}\label{5.32x}
 2 ( 4 \hat{N}_{\rm h}')^{-a}
  \binom{N-2+a}{N-2} = {2^{-2a+1} \over \Gamma(a+1)}
  \Big ( 1 - {a \over \hat{N}_{\rm h}'}  - {a(a-1)(a-11) \over 24 (\hat{N}_{\rm h}')^2}
   + \cdots \Big ),
  \end{equation} 
  obtained by a calculation similar to that used to deduce
  (\ref{EZ1}). Multiplying this with (\ref{5.33d}) gives
   \begin{multline}\label{5.33y}
   2^{-a+2}  \bigg (  \widetilde{{\rm JI}_a}(\sqrt{y}) + {1 \over 2 \hat{N}_{\rm h}'}\Big ( (1-a)\widetilde{{\rm JI}_a}(\sqrt{y})
 - \sqrt{y} 
 J_a(\sqrt{y}) \Big ) + {1 \over 96 (\hat{N}_{\rm h}')^2}  \\
\times \bigg ( -2 (a-1)(a-3)(a-5)  \widetilde{{\rm JI}_a}(\sqrt{y})
 -(30+2 a(-12+a) + y) \sqrt{y}  J_{a}(\sqrt{y}) + 8 y J_a'(\sqrt{y}) 
  \bigg )
+ \cdots   \bigg ).
  \end{multline}

Remaining to be considered is the large $\hat{N}_{\rm h}'$
  expansion of the final term in the second line of (\ref{2.RLy}). After multiplication by
  $(4 \hat{N}_{\rm h}')^{-(a-1)/2} $ (recall the text above (\ref{5.32})), making use of (\ref{EZ})
  and (\ref{E3}), 
  and aided by computer algebra, we compute that this is given by
   \begin{multline}\label{5.33}
  (\hat{N}_{\rm h}')^{-(a-1)/2} 2^{-3a/2+5/2} 
   {\Gamma((N+1)/2) \Gamma(a-1+N) \over \Gamma(N)
 \Gamma((N+a)/2)} \\ = 2^{-a+2} \bigg ( 1 + {1 - a \over 2 \hat{N}_{\rm h}'} - {(a-1)(a-3)(a-5) \over 48 (\hat{N}_{\rm h}')^2} + \cdots  \bigg ) 
 \end{multline} 
 In particular, recalling (\ref{5.33x}), one sees that subtracting (\ref{5.33}) from (\ref{5.33y}) has the sole effect of changing each $\widetilde{{\rm JI}_a}(\sqrt{y})$ in the former to
 $-{{\rm JI}_a}(\sqrt{y})$.

 The final step is to multiply the said modification of (\ref{5.33y}) together with (\ref{5.30a}), multiply by $-1/4$ (this is a prefactor of the second term on the first line of (\ref{2.RLy}) and has yet to be taken into account, and then to add to (\ref{5.29b}).   With this done, we can read off the stated result.


\bigskip

\begin{thebibliography}{9}

\small

\bibitem{ATK11}
S. Adachi, M. Toda, and H. Kubotani, \emph{Asymptotic analysis of singular values of rectangular complex matrices in
the Laguerre and fixed trace ensembles}, J. Phys. A \textbf{44} (2011), 292002(8pp). 

\bibitem{AFNV00}
M. Adler, P. Forrester, T. Nagao and P. van Moerbeke, \emph{Classical skew orthogonal polynomials and random matrices}, J. Stat. Phys., \textbf{99} (2000),
141–170.

 

   \bibitem{BJ13}
 J.~Baik and R. Jenkins,  \emph{Limiting distribution of maximal crossing and nesting of Poissonized
random matchings}, Ann. Probab.,  \textbf{41} (2013), 4359--4406.


 \bibitem{BF97a}
T.H. Baker and P.J. Forrester, \emph{The {Calogero-Sutherland} model and
  generalized classical polynomials}, Commun. Math. Phys. \textbf{188} (1997),
  175--216.



\bibitem{BFP98}
T.H. Baker, P.J. Forrester, and P.A. Pearce, Random matrix ensembles with an extensive external charge, J. Phys.
A \textbf{31} (1998), 6087.


\bibitem{Bo16}
F. Bornemann, \emph{A note on the expansion of the smallest eigenvalue distribution of the LUE at the hard edge}, The
Annals of Applied Probability, \textbf{26} (2016), 1942-1946

   \bibitem{Bo24x}
 F.~Bornemann, \emph{Asymptotic expansions relating to the  lengths of longest monotone subsequences of involutions}, Experimental Math. (2024), 
 1--45. https://doi.org/10.1080/10586458.2024.2397334.
 
   \bibitem{Bo24y}
 F.~Bornemann, \emph{Asymptotic expansions relating to the distribution of the length of longest increasing subsequences}, Forum Math. Sigma. 12:e36. (2024) doi:10.1017/fms.2024.13. 


\bibitem{Bo24}
F. Bornemann,  \emph{Asymptotic expansions of the limit laws of Gaussian and Laguerre (Wishart) ensembles
at the soft edge}, arXiv:2403.07628.

\bibitem{Bo25}
F. Bornemann, \emph{Asymptotic expansions of Gaussian and Laguerre ensembles at the soft edge II: level
densities}, Random Matrices: Theory and Applications, \textbf{26} (2026), 2550025.

\bibitem{Bo25a}
F. Bornemann, \emph{Asymptotic expansions of Gaussian and Laguerre ensembles at the soft edge III:
Generating functions}, arXiv:2506.18673.

\bibitem{BFM17}
F.~Bornemann, P.J.~Forrester and A.~Mays, \emph{Finite size effects for spacing distributions in random matrix
theory: circular ensembles and Riemann zeros}, Stud. Appl. Math. \textbf{138} (2017), 401–437.

   \bibitem{BF03}
A.~Borodin and P.J. Forrester, \emph{Increasing subsequences and the
  hard-to-soft transition in matrix ensembles}, J.Phys. A \textbf{36} (2003),
  2963--2981.

\bibitem{BBC08}
D. Borwein, J.M. Borwein and R.E. Crandall,
\emph{Effective Laguerre asymptotics},
SIAM J. Numer. Anal., \textbf{46} (2008), 285-3312.




\bibitem{BEY14}
P. Bourgade, L. Erd\H{o}s, and H.-T. Yau, \emph{Edge universality of beta ensembles}, Comm. Math. Phys., \textbf{332} (2014), 261–353.

\bibitem{BF25}
S.-S. Byun and P.J. Forrester, \emph{Progress on the study of the Ginibre ensembles}, KIAS Springer Series
in Mathematics \textbf{3}, Springer, 2025.

\bibitem{BL24}
S.S.~Byun and Y.-W.~Lee, \emph{Finite size corrections for real eigenvalues of the elliptic Ginibre matrices}, Random
Matrices Theory Appl. \textbf{13} (2024), 2450005.

\bibitem{DLMS19}
D.S. Dean, P. Le Doussal, S.N. Majumdar and G. Schehr, Non-interacting fermions in a trap and random matrix theory, J. Phys. A 52, (2019),
144006.



\bibitem{DG07}
P. Deift and D. Gioev, \emph{Universality at the edge of the spectrum for unitary, orthogonal, and
symplectic ensembles of random matrices},  \textbf{60} (2007) 867–910.



\bibitem{DF06}
P.~Desrosiers and P.J. Forrester, \emph{Hermite and {L}aguerre
  $\beta$-ensembles: asymptotic corrections to the eigenvalue density}, Nucl.
  Phys. B \textbf{743} (2006), 307--332.

\bibitem{DE06}
I. Dumitriu and A. Edelman, \emph{Global spectrum fluctuations for the $\beta$-Hermite and $\beta$-Laguerre ensembles via matrix
models}, J. Math. Phys. \textbf{47} (2006), 063302. 


\bibitem{EGP16}
A. Edelman, A. Guionnet and S. P\'ech\'e, \emph{Beyond universality in random matrix theory}, The Annals of Applied
Probability, \textbf{26} (2016), 1659-1697.


\bibitem{Fo93}
P.J. Forrester,
 \emph{The spectrum edge of random matrix ensembles}, Nucl. Phys. B \textbf{402} (1993), 709–728.

 \bibitem{Fo10}
P.J. Forrester, \emph{Log-gases and random matrices}, Princeton University Press,
  Princeton, NJ, 2010.

\bibitem{forrester21}
P.J. Forrester,  \emph{Moments of the ground state density for the d-dimensional Fermi gas in an harmonic trap}, Random Matrices: Theory and Applications, \textbf{10} (2021), 2150018.

\bibitem{Fo25}
P.J. Forrester,  \emph{Dualities in random matrix theory},
arXiv:2501.07144.


\bibitem{FM23}
P.J. Forrester and A. Mays, \emph{Finite size corrections relating to distributions of the length of longest
increasing subsequences}, Adv. Applied Math. \textbf{145} (2023), 102482.


\bibitem{FNH99}
P.J.Forrester, T.Nagao,and G.Honner, \emph{Correlations for the orthogonal-unitary and symplectic-unitary transitions at the hard and soft edges}, Nucl. Phys. B \textbf{553} (1999), 601–643.


\bibitem{FPTW19}
 P.J. Forrester, J.H.H. Perk, A.K. Trinh and N.S. Witte, \emph{Leading corrections to the scaling function on
the diagonal for the two-dimensional Ising model}, J. Stat. Mech. \textbf{2019} (2019), 023106


\bibitem{FR21}
P.J. Forrester and A. Rahman, \emph{Relations between moments for the Jacobi and Cauchy random
matrix ensembles}, J. Math. Phys. \textbf{62} (2021), 073302.

\bibitem{FRS25}
P.J. Forrester, A.A.~Rahman and B.-J.~Shen, \emph{A note on optimal soft edge expansions for the Gaussian $\beta$ ensembles}, arXiv:2510.13092.

\bibitem{FRW17}
P.J. Forrester, A.A. Rahman, and N.S. Witte, \emph{Large $N$ expansions for the Laguerre and Jacobi $\beta$ ensembles from
the loop equations}, J. Math. Phys. \textbf{58} (2017), 113303.

\bibitem{FS25}
P.J.~Forrester and B.-J. Shen, \emph{Finite size corrections in the bulk for circular $\beta$ ensembles}, arXiv:2505.09865.


\bibitem{FT18}
P.J. Forrester and A.K. Trinh, \emph{Functional form for the leading correction to the distribution of the largest
eigenvalue in the GUE and LUE}, J. Math. Phys. \textbf{59} (2018), 053302.

\bibitem{FT19}
 P.J. Forrester and A. K. Trinh, \emph{Finite size corrections at the hard edge for the Laguerre $\beta$ ensemble},
Stud. Appl. Math. \textbf{143} (2019), 315–336.

\bibitem{FT19a}
 P.J. Forrester and A. K. Trinh, \emph{Optimal soft edge scaling variables for the Gaussian and
Laguerre even $\beta$ ensembles}, Nucl. Phys. B \textbf{938} (2019), 621--639.

\bibitem{FW26}
 P.J. Forrester and F.~Wei,
\emph{Higher-order linear differential equations for unitary matrix integrals: applications and generalisations (with an Appendix by Folkmar Bornemann)}, SIGMA \textbf{22}  (2026), 015, 21 pages. 

\bibitem{FW02}
P.J. Forrester and N.S. Witte, \emph{Application of the $\tau$-function theory of Painlev\'e equations to random matrices: PV,
PIII, the LUE, JUE and CUE}, Commun. Pure Appl. Math. \textbf{55} (2002), 679–727


\bibitem{GFF05}
T.M. Garoni, P.J. Forrester, and N.E. Frankel, \emph{Asymptotic corrections to the eigenvalue density of
the GUE and LUE}, J. Math. Phys. \textbf{46} (2005), 103301.


\bibitem{GT05}
F.~G\"otze and A.~Tikhomirov, \emph{The rate of convergence for spectra of
  {GUE} and {LUE} matrix ensembles}, Cent. Eur. J. Math. \textbf{3} (2005),
  666--704.

\bibitem{HT12}
  U. Haagerup and S. Thorbjørnsen. \emph{Asymptotic expansions for the Gaussian unitary ensemble}. Infin. Dimens. Anal. Quantum Probab. Relat. Top., 15(1):1250003, 41, 2012.

\bibitem{HHN16}
  W. Hachem, A. Hardy and J. Najim, \emph{Large complex correlated Wishart matrices: the Pearcey kernel and expansion
at the hard edge}, Elec. J. Probab. \textbf{21} (2016), 1–36.

\bibitem{LM79}
G.P. Lawes and N.H. March, \emph{Exact local density method for linear harmonic oscillator}, J. Chem. Phys. 71 (1979),
1007–1009.

\bibitem{Jo01}
I.M.~Johnstone, \emph{On the distribution of the largest eigenvalue in principal components
analysis}, Annals of Statistics \textbf{29} (2001), 295--327.


\bibitem{KRV16}
  M.~Krishnapur, B.~Rider, and B.~Vir\'ag, \emph{Universality of the stochastic
Airy operator}, Comm. Pure Appl. Math., \textbf{69}, (2016), 145–199.

\bibitem{Ok02}
A. Okounkov, \emph{Generating functions for intersection numbers on moduli spaces of curves}, Int. Math.
Res. Not. \textbf{18} 933–957 (2002)

\bibitem{PS16}
A. Perret and G. Schehr, \emph{Finite $N$ corrections to the limiting distribution of the smallest eigenvalue of Wishart
complex matrices}, Random Matrices: Theory and Applications, \textbf{5} (2016), 1650001


   \bibitem{RF21}
   A.A.~Rahman and P.J.~Forrester,  \emph{Linear differential equations for the resolvents of the classical matrix ensembles},
Random Matrices Theory Appl. \textbf{10} (2021), 2250003.

\bibitem{Ra98}
E.M.~Rains, \emph{Increasing subsequences and the classical groups}, Elect. J. of Combinatorics 5 (1998), \#R12.


\bibitem{RWAK07}
A. Raposo, H.J. Weber, D.E. Alvarez, and M. Kirchback, \emph{Romanovski polynomials in selected physics
problems}, Central Europ. J. Phys. \textbf{5} (2007), 253–284.



\bibitem{TW94}
C.A.~Tracy and H.~Widom, \emph{Level-spacing distributions and the Airy kernel}, Commun. Math. Phys. 159 (1994), 151–174.

\bibitem{TW94a}
C.A. Tracy and H. Widom, \emph{Level-spacing distributions and the Bessel kernel}, Commun. Math. Phys. 161 (1994),
289–309.


\bibitem{TE51}
 F. Tricomi and A. Erdélyi, \emph{The asymptotic expansion of a ratio of gamma functions}, Pacific J. Math.,
\textbf{1} (1951), 133--142


\bibitem{Ul85}
N. Ullah, \emph{Probability density function of the single eigenvalue outside the semicircle using the exact Fourier transform},
J. Math. Phys. \textbf{26}, (1985), 2350–2351.


\bibitem{VA01}
W.~Van Assche, \emph{Erratum to “Weighted zero distribution for polynomials orthogonal on an
infinite interval”}, SIAM J. Math. Anal. \textbf{32} (2001), 1169-1170.


\bibitem{WF14}
N.S. Witte and P.J. Forrester, \emph{Moments of the {G}aussian $\beta$ ensembles
  and the large {$N$} expansion of the densities}, J. Math. Phys. \textbf{55}
  (2014), 083302.


  \bibitem{YZ23}
L.~Yao and L.~Zhang, \emph{Asymptotic expansion of the hard-to-soft edge transition},  arXiv:2309.06733.


\end{thebibliography}
\end{document}